\journal{Computer Methods in Applied Mechanics and Engineering}
\newcommand{\bsfB}{\boldsymbol{\mathsf{B}}}
\newcommand{\bsfF}{\boldsymbol{\mathsf{F}}}
\newcommand{\bsfG}{\boldsymbol{\mathsf{G}}}
\newcommand{\bsfP}{\boldsymbol{\mathsf{P}}}
\newcommand{\bsfI}{\boldsymbol{\mathsf{I}}}
\newcommand{\bsfV}{\boldsymbol{\mathsf{V}}}
\newcommand{\bsfw}{\boldsymbol{\mathsf{w}}}
\newcommand{\bsfy}{\boldsymbol{\mathsf{y}}}
\newtheorem{proposition}{Proposition}
\newtheorem{remark}{Remark}
\numberwithin{equation}{section}
\begin{document}
\begin{frontmatter}
\title{A unified continuum and variational multiscale formulation for fluids, solids, and fluid-structure interaction}
\author{Ju Liu}
\ead{liuju@stanford.edu}

\author{Alison L. Marsden}
\ead{amarsden@stanford.edu}

\address{Department of Pediatrics (Cardiology), Department of Bioengineering, and Institute for Computational and Mathematical Engineering, Stanford University, Clark Center E1.3, 318 Campus Drive, Stanford, CA 94305, USA}

\begin{abstract}
We develop a unified continuum modeling framework using the Gibbs free energy as the thermodynamic potential. This framework naturally leads to a pressure primitive variable formulation for the continuum body, which is well-behaved in both compressible and incompressible regimes. Our derivation also provides a rational justification of the isochoric-volumetric additive split of free energies in nonlinear elasticity. The variational multiscale analysis is performed for the continuum model to construct a foundation for numerical discretization. We first consider the continuum body instantiated as a hyperelastic material and develop a variational multiscale formulation for the hyper-elastodynamic problem. The generalized-$\alpha$ method is applied for temporal discretization. A segregated algorithm for the nonlinear solver, based on the original idea introduced in \cite{Scovazzi2016}, is carefully analyzed. Second, we apply the new formulation to construct a novel unified formulation for fluid-solid coupled problems. The variational multiscale formulation is utilized for spatial discretization in both fluid and solid subdomains. The generalized-$\alpha$ method is applied for the whole continuum body, and optimal high-frequency dissipation is achieved in both fluid and solid subproblems. A new predictor multi-corrector algorithm is developed based on the segregated algorithm. The efficacy of the new formulations is examined in several benchmark problems. The results indicate that the proposed modeling and numerical methodologies constitute a promising technology for biomedical and engineering applications, particularly those necessitating incompressible models.
\end{abstract}

\begin{keyword}
Nonlinear continuum mechanics \sep Incompressible solids \sep Gibbs free energy \sep Variational Multiscale Method \sep Generalized-$\alpha$ method \sep Fluid-structure interaction
\end{keyword}
\end{frontmatter}

\section{Introduction}
Continued advancement in the variational multiscale (VMS) method for computational fluid dynamics (CFD) \cite{Bazilevs2007a,Hughes2004,Zeng2016}, multiscale boundary conditions to model the distal vasculature \cite{Vignon-Clementel2006}, numerical optimization methods \cite{Marsden2014,Yang2010}, novel coupling procedures for fluid-structure integration (FSI) \cite{Bazilevs2006,Figueroa2006,Moghadam2013}, and new solver technologies \cite{Moghadam2013a,Moghadam2015} have led to increasingly sophisticated simulation technologies of three-dimensional patient-specific cardiovascular problems \cite{Marsden2013,Marsden2015,Taylor2009a}. These simulation methods have been applied to investigate a wide range of cardiovascular problems with increasing clinical utility, such as coronary artery disease \cite{Ramachandra2016}, aneurysms \cite{Les2010}, and congenital heart disease \cite{Yang2012}. These advances have also led to new open problems and a call for new computational technologies in biomedical modeling \cite{Taylor2009}. In particular, there is a pressing need to accurately predict transmural stresses for nonlinear, anisotropic, nearly incompressible, viscoelastic materials in the setting of FSI. This technology will benefit the investigation of long-term vascular growth and remodeling driven by mechanical forces and mechanobiological response. This work represents our first step towards developing a robust, stable, accurate, and efficient finite element technology to address the aforementioned need. We focus on (1) constructing a unified modeling framework for solid and fluid dynamics, (2) developing a new numerical formulation to handle both compressible and incompressible isotropic hyperelastic materials in a consistent manner, and (3) formulating a new coupling procedure for FSI problems. It is worth mentioning that the technology proposed in this work, although motivated by biomedical problems, is also applicable to a wide range of more general engineering problems. In this section, we will first review traditional methods for nonlinear incompressible solid dynamics and then introduce our new formulation based on a new continuum basis. After that, we will discuss a new coupling approach for FSI problems. Lastly, we will provide an outline of the body of this work.

\subsection{Continuum mechanics and numerical methods for solid dynamics}
Incompressible solids refer to materials that undergo very small volume changes under mechanical loading. This property is ubiquitous in elastic-plastic materials, rubber-like materials, and biomaterials. From a numerical point of view, incompressibility poses a strict restriction on the choice of suitable methods. The classical treatment of incompressibility boils down to the use of a displacement-pressure element pair that satisfies the celebrated Ladyzhenskaya-Babu\v{s}ka-Brezzi (LBB) condition \cite{Brezzi1991}. An improper choice of elements may lead to mesh locking and spurious pressure modes \cite[Chapter 4]{Hughes1987}. In practical calculations, the auxiliary pressure variable in the mixed formulation is not favored due to the additional computational cost. The cost-effectiveness of an algorithm was emphasized during the early days of finite element method development. Consequently, there has been a strong motivation for developing displacement elements that circumvent the volumetric locking phenomenon in structural analysis. One possible solution is invoking higher-order finite element methods \cite{Heisserer2008}. Nevertheless, low-order elements still enjoy significant popularity in nonlinear problems for at least two reasons. First, they are simple to implement and to use in conjunction with adaptive mesh generation; second, they are robust in nonlinear dynamic analysis. One popular approach for the low-order displacement elements is the $\bar{\textup{B}}$ and $\bar{\textup{F}}$ projection methods \cite{SouzaNeto1996,Elguedj2008,Hughes1980}. The idea of these projection methods goes back to the mean-dilatation approach \cite{Nagtegaal1974} and the reduced and selective integration method \cite{Malkus1978}. In the $\bar{\textup{B}}$ projection method, the dilatational part of the strain is projected onto an appropriate function space to alleviate the volumetric constraint. The equivalence of this approach to the mixed finite element is now well-understood \cite{Hughes1977,Malkus1978}. The $\bar{\textup{F}}$ projection method is a generalization of the $\bar{\textup{B}}$ method for large deformation problems \cite{Hughes1975,Simo1985}. Without a doubt, the $Q_1/Q_0$ element, as the low-order quadrilateral/hexahedral element, is the most widely used element in structural analysis, with $Q_1$ for the displacement discretization space and $Q_0$ for the projected dilatational strain space. Interested readers are referred to \cite[Chapter 4]{Hughes1987} for a comprehensive review on this topic.

For problems with complex geometries, mesh generation poses an additional constraint on the choice of numerical methods. Automated hexahedral mesh generation still poses significant challenges for practical problems and remains a labor-intensive and time-consuming process \cite{Shepherd2008}. On the other hand, the algorithms for tetrahedral mesh generation are mature \cite{Geuzaine2009,Si2015} and are routinely used for complex problems \cite{Updegrove2017}. However, the aforementioned projection methods cannot be directly applied to low-order simplicial elements, since the $P_1/P_0$ element does not satisfy the LBB condition and will suffer from locking \cite{Hughes1969}. To overcome the above issues, particularly for problems with incompressible materials and complex geometries, it is imperative to enable low-order simplicial elements for nearly-incompressible materials. There are several approaches that have claimed success in this regard. The $\bar{\textup{F}}$ projection method is generalized by projecting the dilatational part of the deformation gradient onto a patch-wise constant space with predefined non-overlapping element patches \cite{SouzaNeto2004}. This method requires one to group elements into patches, and hence introduces additional complication in mesh generation as well as in the matrix assembly. In the mixed-enhanced approach, the compatible strain fields are augmented by an additional strain field. In doing so, the volumetric locking can be avoided with linear pressure interpolation \cite{Taylor2000}. This method bears some similarity with the MINI element and the stabilized method based on residual-free bubbles. However, those methods seem to be not completely satisfactory. It is still common to see that practitioners in the area of soft tissue modeling make compromise between the incompressibility material property and the geometrical complexity. 

Over the years, the stabilized methods have gained tremendous success in handling numerical instabilities in the finite element modeling of fluid flows \cite{Brooks1982,Hughes1986a}. Later, the VMS method was introduced to provide a rationale for the stabilized methods and a framework for designing new computational methods \cite{Bazilevs2007a,Hughes1995,Hughes1998,Hughes2004,Hughes2010,Scovazzi2012}. Among the stabilization techniques, the instability associated with the LBB condition was resolved by invoking a Petrov-Galerkin formulation and adding a perturbation term to the test function \cite{Hughes1986a}. It was proven that many implementationally convenient elements, such as the equal-order simplicial elements, are convergent for the Stokes problem by employing this stabilized formulation. Since the Stokes equations are formally identical to the displacement-pressure formulation for the incompressible isotropic elasticity equations, this stabilization technique can be directly applied to the incompressible elasticity equations. It is worth noting that the Petrov-Galerkin method has also been developed for the stress-displacement formulation of nearly-incompressible elasticity based on the classical Hellinger-Reissner principle \cite{Cervera2010,Franca1988}. In \cite{Klaas1999,Maniatty2001,Maniatty2002,Saracibar2006}, the stabilized method was applied to finite elasticity and plasticity by introducing a pressure stabilization term in an ad hoc way. In \cite{Masud2013}, the same static finite elasticity model was analyzed in the VMS framework. It should be pointed out that those formulations are restricted to static calculations and are not well-suited for dynamic fully-incompressible problems \cite{Scovazzi2016}.

In recent years, there is a growing interest in extending the stabilized and VMS methods to solid dynamics with particular interests in nearly incompressible and bending dominated scenarios. In \cite{Lee2014}, the linear momentum equation was augmented by a dynamic equation for the deformation gradient to improve the stress accuracy. The authors designed a Petrov-Galerkin finite element formulation for such system to provide stabilization for nearly incompressible materials. In \cite{Gil2014}, $J$, the determinant of the deformation gradient, was added as one additional independent variable to enhance the performance in the incompressible limit. In \cite{Bonet2015a}, the authors introduced a computational framework for polyconvex hyperelasticity by treating the deformation gradient, its cofactor, and its determinant as independent kinematic variables. This framework was numerically investigated in \cite{Bonet2015,Gil2016} by invoking the entropy variables (conjugate stresses) of these kinematic variables. The polyconvex hypothesis guarantees the existence of entropy variables and leads to a new Hellinger-Reissner type variational principle. In general, this computational framework enjoys second-order accuracy for the stress and strain and shows robust performance in compressible, nearly incompressible, and fully incompressible regimes. A drawback associated with this formulation is that there are nineteen independent variables for describing kinematics in three-dimensional problems, which makes the algorithm quite noneconomical. In parallel to the aforementioned development, a pressure-rate equation was proposed in \cite{Scovazzi2016}. It was shown that this rate-type equation for the pressure field significantly improves the calculations of nearly incompressible elastodynamic problems. In \cite{Rossi2016}, a VMS analysis was performed on this pressure-rate type formulation, and the authors coined it as ``dynamic variational multiscale method". This formulation shows robust performance in nearly incompressible and bending dominated scenarios and has been recently generalized to viscoelasticity \cite{Zeng2017}. However, due to the usage of $J$ as a variable in the model derivation, the authors had to treat the nearly incompressible and the fully incompressible cases separately \cite{Bonet2015a,Gil2016,Rossi2016,Scovazzi2016}. More importantly, considering several different mixed formulations have been proposed in the aforementioned works, we feel the role of those equations needs to be elucidated. In our opinion, the pressure rate equation \cite{Rossi2016,Scovazzi2016,Zeng2017} or the equation for $J$ \cite{Bonet2015,Gil2014} is a differential mass equation. Consequently, one should judiciously use the classical relation $\rho J = \rho_0$, and, in our opinion, it is redundant to solve $\rho J = \rho_0$ together with the differential mass equation in the discrete problem \cite{Scovazzi2016,Zeng2017}. This point will be clarified in Section \ref{sec:continuum_mechanics}.

In traditional approaches, the constitutive relations of finite elasticity are derived based on the Helmholtz free energy or the strain energy \cite{Holzapfel2000}. In non-equilibrium thermodynamics, one can alternatively derive equivalent formulations based on other thermodynamic potentials. These thermodynamic potentials are related through Legendre transformations. Among all the thermodynamic potentials, the Helmholtz free energy is favored for the discussion of constitutive modeling, likely because it is a function of temperature, specific volume, and the number of molecules. These variables are easy to measure and control in laboratories. For example, it is easy to measure the temperature while it is very hard, if not impossible, to measure its conjugate variable, the entropy. However, it should be pointed out that there are certain drawbacks of formulations based on the Helmholtz free energy. For example, in the incompressible limit, the Helmholtz free energy ceases to serve as a valid thermodynamic potential, since one of the state variables, the density, is constrained to be a constant \cite{Lowengrub1998}. Therefore, one has to perform a Legendre transformation with respect to the specific volume to transform the independent variable to the thermodynamic pressure. The resulting thermodynamic potential serves as a proper starting point for the discussion of incompressible materials. Deviating from traditional approaches, we propose that one should flexibly choose an appropriate thermodynamic potential for the discussion based on the particular problem. 

In this work, we consider the Gibbs free energy as the thermodynamic potential and derive the constitutive relations based on the Coleman-Noll approach \cite{Gurtin2009,Marsden1994,Truesdell1965}, with both elastic and viscous material responses taken into account. A by-product of this derivation is that the isochoric-volumetric additive split of the free energy, which was regarded as a postulate based on physical intuition, can be justified with solid arguments. The resulting system provides a unified framework for the viscous incompressible Navier-Stokes equations, the compressible Navier-Stokes equations, the compressible hyperelasticity, and the fully incompressible hyperelasticity. In particular, the resulting compressible Navier-Stokes equations derived from the Gibbs free energy are written in the pressure primitive variables. It is well-known that the Navier-Stokes equations are well-behaved in the pressure primitive variables, but not for variable sets where the density is an independent variable \cite{Hauke1998}. This fact partly corroborates our argument that the Helmholtz free energy degenerates in the incompressible limit. One attractive feature of a unified theory for fluids and solids is that this theory may serve as a bridge for generalizing the VMS method from computational fluid dynamics to computational solid dynamics. Indeed, the VMS method was originally proposed to provide ``a paradigm for computational mechanics" \cite{Hughes1998}. Yet, over the years, progress was made primarily in the area of CFD \cite{Bazilevs2008,Hughes2004,Hughes2010}. We hope that this work will spark further research on developing VMS formulations for computational mechanics and enable CFD experts to readily bring their expertise to bear on solid mechanics.


\subsection{Fluid-structure interaction}
Fluid-structure interaction problems refer to the coupling of the fluid and structural equations defined on non-overlapping domains with appropriate interface conditions. Over the past several decades, tremendous advancement has been made in computational FSI problems \cite{Bazilevs2008,Bueno2014,Kamensky2015,Long2012}. In terms of how the fluid-structure interface is treated, the numerical methods for FSI problems may be categorized into two major groups: the boundary-fitted approach and the immersed boundary approach. In the boundary-fitted approach, the fluid problem is typically posed in an arbitrary Lagrangian-Eulerian (ALE) coordinate system \cite{Hughes1981,Scovazzi2007}, and the fluid domain is represented by a mesh that is deformable with the Lagrangian solid mesh. The boundary-fitted approach enjoys the exact coupling conditions on the interface and accurate stress calculations near the interface, with the expense of moving the mesh. For problems involving very large deformations, it may be necessary to regenerate the mesh to maintain the mesh quality and the solution accuracy, which can be quite expensive for three-dimensional calculations \cite{Johnson1994}. 

The immersed boundary method was introduced as an alternative approach for FSI simulations \cite{Peskin1972}. This approach releases the requirement of the boundary-fitting condition and hence can be quite attractive for problems with very large deformations of the solid boundary, such as the heart valve dynamics \cite{Griffith2011,Kamensky2015}. One major shortcoming of the immersed boundary approach is the loss of accuracy near the interface. Hence, its applicability is limited when the interfacial physics is important. In this work, we adopt the boundary-fitted approach for our FSI formulation.

In this work, one critical feature that differs from traditional FSI formulations is that the solid problem is written in pressure primitive variables, and its governing equations constitute a set of first-order equations, including a differential mass balance equation, a set of linear momentum balance equations, and a set of kinematic equations. This formulation for the solid problem is formally similar to the ALE formulation of the incompressible Navier-Stokes equations, where the kinematic equations are replaced by the equations governing the mesh motion. Indeed, one can view our FSI formulation as a unified continuum body governed by the mass  and linear momentum balance equations, where, in the solid subdomain, the deviatoric part of the Cauchy stress is elastic and the problem is written in the Lagrangian reference frame; in the fluid subdomain, the deviatoric part of the Cauchy stress is viscous and the problem is written in the ALE coordinate system. This unified formulation naturally allows one to construct a uniform VMS formulation for spatial discretization, and the resulting numerical scheme provides a residual-based turbulence model for the fluid subproblem and a VMS scheme for the solid subproblem with the purpose of stabilizing the pressure instability arising from equal-order interpolations. 

Writing the FSI problem in a unified formulation also allows us to perform time integration in a uniform way. The generalized-$\alpha$ method has shown to be an accurate and robust temporal scheme for structural dynamics \cite{Chung1993}, fluid dynamics \cite{Jansen2000}, and FSI problems \cite{Bazilevs2008,Dettmer2006,Kuhl2003}. One issue associated with the traditional FSI formulation is that there is a mismatch for the choice of the parameters in the generalized-$\alpha$ scheme \cite[p.~120]{Bazilevs2013}. The structural dynamics problem is typically written in the pure displacement formulation and hence involves a second-order time derivative. In contrast, the fluid dynamics problem involves only a first-order time derivative. To achieve controllable high-frequency dissipation, the parameters in the generalized-$\alpha$ method are parametrized by the spectral radius of the amplification matrix at an infinitely large time step. This parametrization is different for first-order \cite{Chung1993} and second-order systems \cite{Jansen2000}. In traditional FSI formulations, this leads to a dilemma for the choice of parameters in the temporal scheme. In \cite{Bazilevs2008}, the parametrization was chosen based on the first-order system to provide optimal dissipation for the fluid problem. That choice sacrifices the dissipation for the solid problem. In \cite{Kuhl2003}, the optimal parametrization was chosen for the fluid and solid subdomains separately. Yet, that leads to a kinematic inconsistency on the fluid-solid interface. It has been observed that this kinematic inconsistency may lead to failure in FSI simulations, and an interpolation procedure along the fluid-solid interface was proposed to address this issue \cite{Joosten2010}. In this work, since the solid dynamics is written as a first-order system, the aforementioned numerical challenge is conveniently resolved. The generalized-$\alpha$ method is applied for the unified continuum problem using the parametrization for the first-order problem \cite{Chung1993}. One can get optimal numerical dissipation for both the fluid and solid subproblems.

Methods for solving the discretized FSI problem can be categorized into two families: staggered and monolithic methods. The latter can be further categorized into block-iterative, quasi-direct, and direct strategies \cite{Tezduyar2006}. A segregated algorithm for the solid dynamics, based on the original idea introduced in \cite{Scovazzi2016}, naturally leads to a new solution procedure for FSI problems. First, one solves the pressure and the velocity for the continuum body in the matrix problem. Second, the segregated algorithm for the solid is invoked to obtain the displacement field in the solid subdomain. Third, the solid displacement is applied as a boundary condition for the mesh motion problem in the fluid subdomain. This coupling procedure can be categorized as the quasi-direct solution strategy \cite[Chapter 6]{Bazilevs2013}. In comparison with traditional FSI formulations, the additional cost in this FSI formulation is mainly due to the introduction of the pressure variable in the solid problem. Considering the number of degrees of freedom in the solid subdomain is typically much smaller than that of the fluid subdomain, the new formulation in fact does not significantly increase the cost of the solution of the system. 

\subsection{Structure and content of the paper}
The body of this work is organized as follows. In Section \ref{sec:continuum_mechanics}, a unified continuum model is derived by choosing the Gibbs free energy as the thermodynamic potential. In Section \ref{sec:VMS}, we perform VMS analysis for the resulting continuum model. In Section \ref{sec:numerical_solid}, we derive a fully discrete scheme for hyper-elastodynamics. In Section \ref{sec:numerical_FSI}, the algorithm developed for solid dynamics is coupled with fluid dynamics and constitutes a novel FSI formulation. The coupling procedure and the implementation details for FSI problems are discussed. In Section \ref{sec:benchmark}, benchmark problems are studied to examine the numerical formulations. We draw conclusions in Section \ref{sec:conclusion}.

\section{Continuum Mechanics}
\label{sec:continuum_mechanics}
In this section, we begin by presenting the ALE formulation of the balance laws. Following that, we derive constitutive relations based on the Gibbs free energy using the Coleman-Noll type analysis. An interesting result is that the additive split of free energies can be justified in this derivation. We recover several familiar models within our modeling framework. In the last part, for various well-known volumetric energies, their Legendre transformations are derived and discussed.

\subsection{Continuum mechanics on moving domains}
In this section, we discuss the kinematics of a deformable body and present the balance equations defined in an ALE frame of reference. Let $\Omega_{\bm X}$, $\Omega_{\bm x}$, and $\Omega_{\bm \chi}$ be bounded open sets in $\mathbb R^{n_d}$, where $n_d$ represents the number of space dimensions. They represent the domain occupied by the continuum body in the material (Lagrangian), the current (Eulerian), and the referential frames, respectively. The Lagrangian-to-Eulerian map at time $t$ is a diffeomorphism defined as
\begin{align*}
\bm\varphi(\cdot, t) : \Omega_{\bm X} &\rightarrow \Omega_{\bm x} = \bm \varphi(\Omega_{\bm X}, t), \quad \forall t \geq 0, \\
\bm X &\mapsto \bm x = \bm \varphi(\bm X, t), \quad \forall \bm X \in \Omega_{\bm X}.
\end{align*}
This map satisfies that $\bm \varphi(\bm X, 0) = \bm X$, which implies that $\bm x$ is the current location of a material particle whose initial location is $\bm X$. The displacement and the velocity of the material particle are given by
\begin{align*}
\bm u &:= \bm \varphi(\bm X, t) - \bm \varphi(\bm X, 0) = \bm \varphi(\bm X, t) - \bm X, \\
\bm v &:= \left. \frac{\partial \bm \varphi}{\partial t}\right|_{\bm X}= \left. \frac{\partial \bm u}{\partial t}\right|_{\bm X} = \frac{d\bm u}{dt}.
\end{align*}
In this work, $d\left( \cdot \right)/dt$ designates a total time derivative. The deformation gradient and the Jacobian determinant are defined as
\begin{align*}
\bm F := \frac{\partial \bm \varphi}{\partial \bm X}, \qquad
J := \textup{det}\left(\bm F \right).
\end{align*}
The referential-to-Eulerian map at time $t$ is a diffeomorphism defined as
\begin{align*}
\hat{\bm \varphi}(\cdot, t) : \Omega_{\bm \chi} &\rightarrow \Omega_{\bm x} = \hat{\bm \varphi}(\Omega_{\bm \chi},t), \quad \forall t \geq 0, \\
\bm \chi &\mapsto \bm x = \hat{\bm \varphi}(\bm \chi, t), \quad \forall \bm \chi \in \Omega_{\bm \chi},
\end{align*}
and satisfies
$
\hat{\bm \varphi}(\bm \chi, 0) = \bm \chi.
$
The mesh displacement and the mesh velocity are defined as
\begin{align}
\label{eq:ale_mesh_displacement_def}
\hat{\bm u} &:= \hat{\bm \varphi}(\bm \chi, t) - \hat{\bm \varphi}(\bm \chi, 0) = \hat{\bm \varphi}(\bm \chi, t) - \bm \chi, \\
\label{eq:ale_mesh_velocity_def}
\hat{\bm v} &:= \left. \frac{\partial \hat{\bm \varphi}}{\partial t}\right|_{\bm \chi}= \left. \frac{\partial \hat{\bm u}}{\partial t}\right|_{\bm \chi}.
\end{align}
The mesh deformation gradient and the mesh Jacobian determinant are defined as
\begin{align*}
\hat{\bm F} := \frac{\partial \hat{\bm \varphi}}{\partial \bm \chi}, \qquad
\hat{J} := \textup{det}(\hat{\bm F} ).
\end{align*}
It also proves convenient to introduce the Lagrangian-to-referential mapping at time $t$ as
\begin{align*}
\tilde{\bm \varphi}(\cdot, t) : \Omega_{\bm X} &\rightarrow \Omega_{\bm \chi}^t = \tilde{\bm \varphi}(\Omega_{\bm X},t) \quad \forall t \geq 0, \displaybreak[2] \\
\bm X &\mapsto \bm \chi = \tilde{\bm \varphi}(\bm X, t), \quad \forall \bm X \in \Omega_{\bm X}.
\end{align*}
One can analogously define the displacement, velocity, and deformation gradient for the motion of the referential frame relative to the Lagrangian reference frame \cite{Scovazzi2007a}. The three mappings introduced above are related through an operator composition
\begin{align}
\label{eq:ALE_map_composition}
\bm \varphi = \hat{\bm \varphi} \circ \tilde{\bm \varphi}.
\end{align}
This operator composition is illustrated in Figure \ref{fig:ale_maps}. 
\begin{figure}
	\begin{center}
	\begin{tabular}{c}
\includegraphics[angle=0, trim=120 100 80 50, clip=true, scale = 0.4]{./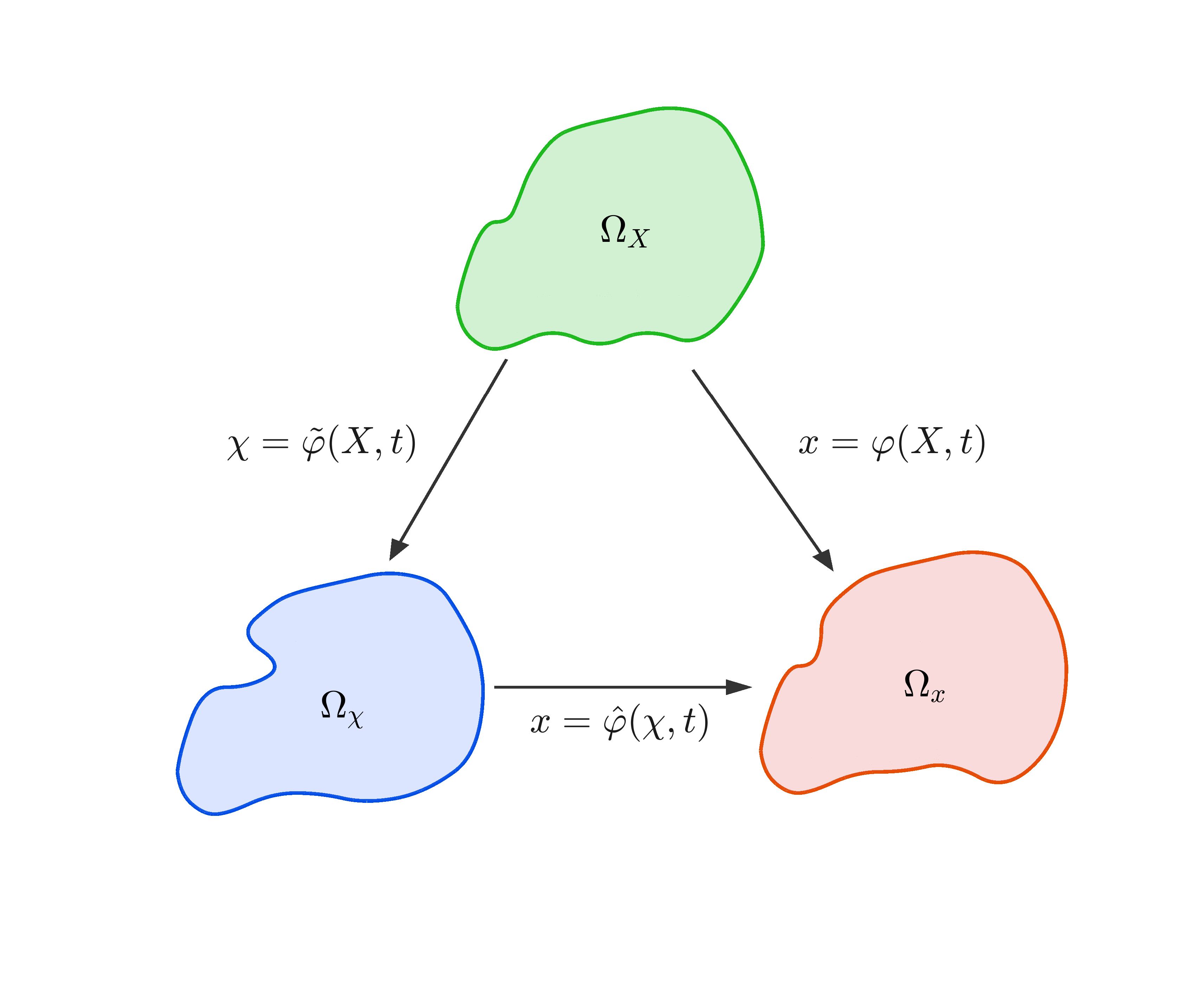}
\end{tabular}
\end{center}
\caption{Illustration of the diffeomorphisms $\bm \varphi$, $\hat{\bm \varphi}$, and $\tilde{\bm \varphi}$.} 
\label{fig:ale_maps}
\end{figure}
Let $\rho = \rho(\bm x, t)$ denote the density. The conservation of mass represented on the referential domain $\Omega_{\bm \chi}$ is
\begin{align}
\label{eq:referential_mass_banalce}
0 = \int_{\Omega_{\bm \chi}} \left. \frac{\partial (\hat{J}\rho)}{\partial t} \right|_{\bm \chi} + \nabla_{\bm \chi} \cdot \left(\hat{J}\hat{\bm F}^{-1}\rho \left(\bm v - \hat{\bm v} \right) \right)d\Omega_{\bm \chi}.
\end{align}
Let $\bm \sigma$ denote the Cauchy stress, and let $\bm b$ denote the body force per unit mass. The linear momentum balance equation on the referential domain $\Omega_{\bm \chi}$ is
\begin{align}
\label{eq:referential_linear_momentum_balance}
\bm 0 = \int_{\Omega_{\bm \chi}} \left. \frac{\partial (\hat{J}\rho \bm v)}{\partial t} \right|_{\bm \chi} + \nabla_{\bm \chi} \cdot \left( \left( \rho \bm v \otimes \left(\bm v - \hat{\bm v} \right) - \bm \sigma \right) \hat{J}\hat{\bm F}^{-T} \right) - \hat{J}\rho \bm bd\Omega_{\bm \chi}.
\end{align}
The balance of angular momentum is given by
\begin{align}
\label{eq:referential_angular_momentum_balance}
\bm 0 = \int_{\Omega_{\bm \chi}} \left. \frac{\partial (\hat{J} \bm x \times \rho \bm v)}{\partial t} \right|_{\bm \chi} + \nabla_{\bm \chi} \cdot \left( \left( \left( \bm x \times \rho \bm v\right) \otimes \left(\bm v - \hat{\bm v} \right) - \bm x \times \bm \sigma \right) \hat{J}\hat{\bm F}^{-T} \right) - \hat{J} \bm x \times \rho \bm b d\Omega_{\bm \chi}.
\end{align}
It can be shown that \eqref{eq:referential_angular_momentum_balance} is equivalent to the symmetry of the Cauchy stress,
\begin{align}
\bm \sigma = \bm \sigma^{T}.
\end{align}
Let $E:=\iota + \bm v \cdot \bm v / 2$ denote the total energy per unit mass, where $\iota$ is the internal energy per unit mass; let $\bm q$ denote the heat flux; let $r$ denote the heat source per unit mass. The balance of total energy is
\begin{align}
\label{eq:referential_total_energy_balance}
0 = \int_{\Omega_{\bm \chi}} \left. \frac{\partial (\hat{J}\rho E)}{\partial t} \right|_{\bm \chi} + \nabla_{\bm \chi} \cdot \left( \hat{J} \hat{\bm F}^{-1} \left(\rho E \left(\bm v - \hat{\bm v} \right) - \bm \sigma^{T}\bm v + \bm q \right) \right) - \hat{J}\rho \left( \bm b\cdot \bm v + r \right) d\Omega_{\bm \chi}.
\end{align}
The second law of thermodynamics can be stated as
\begin{align}
\label{eq:referential_2nd_law}
0 \leq \int_{\Omega_{\bm \chi}} \left. \frac{\partial (\hat{J}\rho s)}{\partial t} \right|_{\bm \chi} +  \nabla_{\bm \chi} \cdot \left( \hat{J}\hat{\bm F}^{-1} \left(\rho s \left( \bm v - \hat{\bm v}\right) + \frac{\bm q}{\theta} \right) \right) - \hat{J}\rho \frac{r}{\theta} d\Omega_{\bm \chi},
\end{align}
Using the Piola identity and the localization argument, one can derive the advective form of the balance equations and the second law of thermodynamics as follows,

\begin{align}
\label{eq:adv_form_mass_balance}
& 0 =  \left. \frac{\partial \rho}{\partial t} \right|_{\bm \chi} + \left(\bm v - \hat{\bm v} \right)\cdot \nabla_{\bm x} \rho + \rho \nabla_{\bm x} \cdot \bm v , \displaybreak[2] \\
\label{eq:adv_form_linear_momentum_balance}
& \bm 0 = \left. \rho \frac{\partial \bm v}{\partial t} \right|_{\bm \chi} + \rho \left( \nabla_{\bm x} \bm v \right) \left(\bm v - \hat{\bm v} \right) - \nabla_{\bm x} \cdot \bm \sigma - \rho \bm b, \displaybreak[2] \\
\label{eq:adv_form_angular_momentum_balance}
& \bm \sigma = \bm \sigma^T, \displaybreak[2] \\
\label{eq:adv_form_internal_energy_balance}
& 0 = \left. \rho \frac{\partial \iota}{\partial t} \right|_{\bm \chi} + \rho \left( \bm v - \hat{\bm v} \right) \cdot \nabla_{\bm x} \iota - \bm \sigma : \nabla_{\bm x} \bm v + \nabla_{\bm x} \cdot \bm q - \rho r , \displaybreak[2] \\
\label{eq:adv_form_second_law}
& 0 \leq  \left. \rho \frac{\partial s}{\partial t} \right|_{\bm \chi} + \rho \left(\bm v - \hat{\bm v} \right) \cdot \nabla_{\bm x} s +  \nabla_{\bm x} \cdot \left( \frac{\bm q}{\theta} \right) - \rho \frac{r}{\theta} .
\end{align}
Moreover, the displacement-velocity kinematic relation
\begin{align*}
\frac{d\bm u}{dt} = \bm v
\end{align*}
can be written in the advective form as
\begin{align}
\label{eq:adv_form_disp_velo_relation}
\left. \frac{\partial \bm u}{\partial t} \right|_{\bm \chi} + \left( \nabla_{\bm x} \bm u\right) \left( \bm v  - \hat{\bm v} \right)  = \bm v.
\end{align}
Details of the derivation of \eqref{eq:adv_form_mass_balance}-\eqref{eq:adv_form_second_law} can be found in \cite{Scovazzi2007}.

\begin{remark}
$\rho J = \rho_0$ is a widely used algebraic mass balance equation. However, we posit this is really a special case due to the Lagrangian description. In the general scenario, such as the ALE description, one has to adopt the differential equation \eqref{eq:adv_form_mass_balance} to describe mass conservation \cite{Liu1988}.
\end{remark}

\subsection{Constitutive relations}
There are multiple concepts of pressure in the literature. The mechanical pressure is defined as the dilatational (i.e., hydrostatic) part of the Cauchy stress. The thermodynamic pressure is an intensive state variable. Its negative value conjugates to the specific volume \cite{Callen1985}. Sometimes, mathematicians introduce pressure as a Lagrangian multiplier to enforce the incompressibility constraint. In this work, we adopt the thermodynamic definition for the pressure, and the derivation will reveal how these three concepts are related. In continuum mechanics, the Helmholtz free energy (or the strain energy) is often used as the thermodynamic potential to derive and describe the constitutive relations. Such a derivation often follows the classical Coleman-Noll approach \cite{Coleman1963,Holzapfel2000,Liu2015}. Among these relations, the thermodynamic pressure is often expressed as a function of the density and the temperature. However, this relation only remains valid for compressible materials. In the incompressible limit, the pressure-density curve becomes a vertical line of infinite slope since the density is constrained as a constant. We believe this pathological behavior of the pressure-density curve near the incompressible limit is the bane of incompressible solid solvers. A similar argument was made in \cite[p.~319]{Gurtin2009} based on a thought experiment, with the conclusion that ``it would seem unreasonable to allow a constitutive relation for an incompressible elastic body to involve the pressure." 

To remedy this degeneracy, one can derive the constitutive relations based on the Gibbs free energy \cite{Lowengrub1998}. The Gibbs free energy and the Helmholtz free energy are related by a Legendre transformation \cite{Callen1985}. For compressible materials, both the free energies are valid; for incompressible materials, the Helmholtz free energy degenerates, and the Gibbs free energy remains valid. In this section, we will derive the system of equations for a continuum mechanics model based on the Coleman-Noll approach. For simplicity, we take the referential frame of reference to be identical to the material frame of reference within this section. Then we directly have
\begin{align*}
& \frac{d\left(\cdot \right)}{d t} := \left. \frac{\partial \left( \cdot \right)}{\partial t}\right|_{\bm X} = \left. \frac{\partial \left( \cdot \right)}{\partial t}\right|_{\bm \chi} , \qquad \bm v = \hat{\bm v}.
\end{align*}
Consequently, the advective form of the balance equations and the second law of thermodynamics can be written as
\begin{align}
\label{eq:mass_balance}
& \frac{d\rho}{dt} + \rho \nabla_{\bm x} \cdot \bm v = 0, \displaybreak[2]\\
\label{eq:linear_momentum_balance}
& \rho \frac{d \bm{v} }{dt} = \nabla_{\bm x} \cdot \bm \sigma + \rho \bm b, \displaybreak[2] \\
\label{eq:angular_momentum_balance}
& \bm \sigma = \bm \sigma^T, \displaybreak[2] \\
\label{eq:internal_energy_balance}
& \rho \frac{d\iota}{dt} = \bm \sigma : \nabla_{\bm x} \bm v - \nabla_{\bm x} \cdot \bm q + \rho r, \displaybreak[2] \\
\label{eq:2nd_law_thermodynamics}
& \mathcal D := \rho \frac{ds}{dt} + \nabla_{\bm x} \cdot \left( \frac{\bm q}{\theta} \right) - \frac{\rho r}{\theta} \geq 0.
\end{align}
In the above, $\mathcal D$ represents the dissipation. The specific Gibbs free energy per unit mass is defined as 
\begin{align}
\label{eq:def_Gibbs_free_energy}
G := \iota - \theta s + \frac{p}{\rho}.
\end{align}
Taking material time derivatives on both sides of \eqref{eq:def_Gibbs_free_energy} results in
\begin{align*}
\rho \frac{dG}{dt} + \rho s \frac{d\theta}{dt} - \frac{dp}{dt} = \rho \frac{d\iota}{dt} - \rho \theta \frac{ds}{dt} - \frac{p}{\rho}\frac{d\rho}{dt}.
\end{align*}
Substituting the internal energy balance equation \eqref{eq:internal_energy_balance}, the second law of thermodyanmics \eqref{eq:2nd_law_thermodynamics}, and the mass balance equation \eqref{eq:mass_balance} into the above relation, one obtains
\begin{align}
\label{eq:gibbs_energy_balance_original}
\rho \frac{dG}{dt} = \bm \sigma : \nabla_{\bm x} \bm v - \bm q \cdot \nabla_{\bm x} \theta + p \nabla_{\bm x} \cdot \bm v - \theta \mathcal D - \rho s \frac{d\theta}{dt} + \frac{dp}{dt}.
\end{align}
To facilitate our discussion, we introduce the following notations. 
\begin{enumerate}
\item The right Cauchy-Green tensor $\bm C$ is defined as
$\bm C := \bm F^T \bm F$.
\item The deformation gradient $\bm F$ can be multiplicatively decomposed into dilatational and distortional parts \cite{Flory1961} as
\begin{align}
\label{eq:def_multiplicative_decomp_F}
\bm F = \left( J^{\frac{1}{3}} \bm I \right) \tilde{\bm F} = J^{\frac{1}{3}} \tilde{\bm F}.
\end{align} 
In the above relation, $\bm I$ represents the second-order identity tensor, $J^{\frac{1}{3}} \bm I$ represents the volume-changing (dilatational) part of the deformation, and $\tilde{\bm F} = J^{-\frac{1}{3}}\bm F$ is the volume-preserving (distortional) part of the deformation. Correspondingly, the right Cauchy-Green tensor can be decomposed as
\begin{align*}
\bm C = \left( J^{\frac{2}{3}} \bm I \right) \tilde{\bm C} = J^{\frac{2}{3}} \tilde{\bm C}.
\end{align*}
\item We can obtain the following differentiation relation
\begin{align*}
& \frac{\partial \tilde{\bm C}}{\partial \bm C} = J^{-\frac{2}{3}} \mathbb P^T,\quad \mathbb P = \mathbb I - \frac{1}{3} \bm C^{-1} \otimes \bm C,
\end{align*}
wherein $\mathbb I$ is the fourth-order identity tensor
\begin{align*}
\mathbb I_{IJKL} = \frac{1}{2} \left( \delta_{IK}\delta_{JL} + \delta_{IL} \delta_{JK} \right),
\end{align*}
and $\delta_{IJ}$ is the Kronecker delta. It is straightforward to show that $\mathbb P \mathbb P = \mathbb P$, which implies that $\mathbb P$ is a projection.
\item The Cauchy stress $\bm \sigma$ can be additively split into deviatoric and hydrostatic parts,
\begin{align*}
\bm \sigma &= \textup{dev}[\bm \sigma] + \frac{1}{3}\left(\textup{tr}[\bm \sigma] \right)\bm I.
\end{align*}
The second Piola-Kirchhoff stress $\bm S$ is defined by
$
\bm S := J \bm F^{-1} \bm \sigma \bm F^{-T},
$
and one can show that
$
\textup{dev}[\bm \sigma] = J^{-1} \bm F \left( \mathbb P : \bm S \right) \bm F^{T}.
$
\item The spatial velocity gradient can be additively split into the rate of deformation tensor $\bm d$ and the spin tensor $\bm w$ as
\begin{align*}
\nabla_{\bm x} \bm v &= \bm d + \bm w, \quad
\bm d := \frac{1}{2}\left( \nabla_{\bm x}\bm v + \nabla_{\bm x}\bm v^T\right), \quad
\bm w := \frac{1}{2}\left( \nabla_{\bm x}\bm v - \nabla_{\bm x}\bm v^T\right).
\end{align*}
Furthermore, we split $\bm d$ into deviatoric and hydrostatic parts as
\begin{align*}
\bm d &= \textup{dev}[\bm d] + \frac{1}{3} \nabla_{\bm x} \cdot \bm v \bm I.
\end{align*}
The inner product of $\bm \sigma$ and $\nabla_{\bm x} \bm v$ can be written as
\begin{align}
\label{eq:sigma_l_new_expression}
\bm \sigma : \nabla_{\bm x} \bm v = \bm \sigma : \bm d = \textup{dev}[\bm \sigma] : \textup{dev}[\bm d] + \frac{1}{3} \textup{tr}[\bm \sigma] \nabla_{\bm x} \cdot \bm v.
\end{align}
\item Algebraic manipulations can show that the time rate of $\bm C$ can be expressed in terms of $\bm d$ in the following relation 
\begin{align*}
\frac{d}{dt}\bm C = 2 \bm F^T \bm d \bm F.
\end{align*}
The time rate of $\tilde{\bm C}$ can be derived as
\begin{align*}
\frac{d}{dt} \tilde{\bm C} &= \frac{d}{dt}\left( J^{-\frac{2}{3}} \bm C \right) = J^{-\frac{2}{3}} \frac{d}{dt} \bm C - \frac{2}{3} J^{-\frac{5}{3}}\bm C\frac{d}{dt} J = 2 J^{-\frac{2}{3}} \bm F^T \textup{dev}[\bm d] \bm F.
\end{align*}
Hence, we have 
\begin{align*}
\textup{dev}[\bm d] = \frac{1}{2}J^{\frac{2}{3}} \bm F^{-T} \left(\frac{d}{dt}\tilde{\bm C} \right) \bm F^{-1},
\end{align*}
and
\begin{align}
\label{eq:dev_sigma_dev_d_new_expression}
\textup{dev}[\bm \sigma] : \textup{dev}[\bm d] = \frac{1}{2}J^{\frac{2}{3}} \bm F^{-1} \textup{dev}[\bm \sigma] \bm F^{-T} : \frac{d}{dt}\tilde{\bm C}.
\end{align}
\end{enumerate}
Using \eqref{eq:sigma_l_new_expression} and \eqref{eq:dev_sigma_dev_d_new_expression}, the relation \eqref{eq:gibbs_energy_balance_original} can be rewritten as
\begin{align}
\rho \frac{dG}{dt} &= \textup{dev}[\bm \sigma] : \textup{dev}[\bm d] +  \left(\frac{1}{3}\textup{tr}\left[ \bm \sigma \right] + p \right)\nabla_{\bm x}\cdot \bm v - \bm q \cdot \nabla_{\bm x} \theta - \rho s \frac{d\theta}{dt} + \frac{dp}{dt} - \theta \mathcal D \nonumber \displaybreak[2] \\
\label{eq:gibbs_energy_balance_version2}
&=  \frac{1}{2}J^{\frac{2}{3}} \bm F^{-1} \textup{dev}[\bm \sigma] \bm F^{-T} : \frac{d}{dt}\tilde{\bm C} +  \left(\frac{1}{3}\textup{tr}\left[ \bm \sigma \right] + p \right)\nabla_{\bm x}\cdot \bm v - \bm q \cdot \nabla_{\bm x} \theta  - \rho s \frac{d\theta}{dt} + \frac{dp}{dt} - \theta \mathcal D.
\end{align}
In the above relation, the time rate of $G$ is on the left-hand side, and the time rates of $\tilde{\bm C}$, $\theta$, and $p$ appear on the right-hand side. Invoking Truesdell's principle of equipresence \cite{Truesdell1965}, we demand that the Gibbs free energy is a function of $\tilde{\bm C}$, $p$, and $\theta$:
\begin{align*}
G = G\left( \tilde{\bm C}, p, \theta \right).
\end{align*}
Taking material time derivatives on both sides, we obtain the relation
\begin{align}
\label{eq:dG_dt_chain_rule}
\frac{dG}{dt} 
= \frac{1}{2} \tilde{\bm S} : \frac{d}{dt} \tilde{\bm C} + \frac{\partial G}{\partial p}\frac{dp}{dt} + \frac{\partial G}{\partial \theta} \frac{d\theta}{dt},
\end{align}
wherein
\begin{align*}
\tilde{\bm S} := 2 \frac{\partial G}{\partial \tilde{\bm C}}.
\end{align*}
Substituting \eqref{eq:dG_dt_chain_rule} into \eqref{eq:gibbs_energy_balance_version2} leads to 
\begin{align}
\label{eq:gibbs_energy_balance_version3}
\theta \mathcal D =& \left( \frac{1}{2}J^{\frac{2}{3}} \bm F^{-1} \textup{dev}[\bm \sigma] \bm F^{-T} - \frac{\rho}{2} \tilde{\bm S} \right) : \frac{d}{dt}\tilde{\bm C} +  \left(\frac{1}{3}\textup{tr}\left[ \bm \sigma \right] + p \right)\nabla_{\bm x}\cdot \bm v - \bm q \cdot \nabla_{\bm x} \theta \nonumber \\
& - \left( \rho s + \rho \frac{\partial G}{\partial \theta}\right) \frac{d\theta}{dt} + \left( 1 - \rho \frac{\partial G}{\partial p} \right)\frac{dp}{dt}.
\end{align}
We make the following choices for the Cauchy stress, the heat flux, the entropy, and the density.
\begin{align}
\label{eq:constitutive_dev_sigma}
\textup{dev}\left[ \bm \sigma \right] =& \rho \tilde{\bm F} \left( \mathbb P : \tilde{\bm S} \right) \tilde{\bm F}^T + 2\bar{\mu} \textup{dev}[\bm d], \displaybreak[2]\\
\label{eq:constitutive_tr_sigma}
\frac{1}{3}\textup{tr}\left[\bm \sigma \right] =& -p + \left(\frac{2}{3}\bar{\mu} + \bar{\lambda} \right) \nabla_{\bm x} \cdot \bm v, \displaybreak[2] \\
\label{eq:constitutive_heat_flux}
\bm q =& - \bar{\kappa} \nabla_{\bm x} \theta, \displaybreak[2] \\
\label{eq:constitutive_s}
s =& -\frac{\partial G}{\partial \theta}, \displaybreak[2] \\
\label{eq:constitutive_rho}
\rho =& \left( \frac{\partial G}{\partial p} \right)^{-1}.
\end{align}
In \eqref{eq:constitutive_dev_sigma}, $\bar{\mu}$ is the dynamic shear viscosity; in \eqref{eq:constitutive_tr_sigma}, $\bar{\lambda}$ is the second viscosity coefficient, and $\frac{2}{3}\bar{\mu} + \bar{\lambda}$ is the bulk viscosity; in \eqref{eq:constitutive_heat_flux}, $\bar{\kappa}$ is the termal conductivity. Combining the constitutive relations \eqref{eq:constitutive_dev_sigma} and \eqref{eq:constitutive_tr_sigma}, one obtains the Cauchy stress as
\begin{align}
\bm \sigma &= \rho \tilde{\bm F} \left( \mathbb P : \tilde{\bm S} \right) \tilde{\bm F}^T - p \bm I + 2\bar{\mu} \textup{dev}[\bm d] + \left(\frac{2}{3}\bar{\mu} + \bar{\lambda} \right) \nabla_{\bm x} \cdot \bm v \bm I \nonumber \\
\label{eq:constitutive_sigma}
&= J^{-1} \tilde{\bm F} \left( \mathbb P : 2 \frac{\partial \rho_0 G}{\partial \tilde{\bm C}} \right) \tilde{\bm F}^T - p \bm I + 2\bar{\mu} \textup{dev}[\bm d] + \left(\frac{2}{3}\bar{\mu} + \bar{\lambda} \right) \nabla_{\bm x} \cdot \bm v \bm I.
\end{align}
The first term in \eqref{eq:constitutive_sigma} represents the isochoric elastic stress \cite{Holzapfel2000}, the second term is the pressure, the third term gives the viscous shear stress, and the last term gives the bulk viscous stress. Observing that $\bm \sigma$ is symmetric, the angular momentum balance law \eqref{eq:adv_form_angular_momentum_balance} is automatically satisfied. The constitutive relation for the heat flux \eqref{eq:constitutive_heat_flux} is the Fourier's law. The constitutive relation for the entropy density $s$ \eqref{eq:constitutive_s} and the density $\rho$ \eqref{eq:constitutive_rho} coincides with the classical thermodynamic definitions \cite{Callen1985}. Invoking \eqref{eq:def_Gibbs_free_energy}, \eqref{eq:constitutive_s}, and \eqref{eq:constitutive_rho}, one can get the constitutive relation for the internal energy per unit mass as
\begin{align*}
\iota = G + s \theta - \frac{p}{\rho} = G - \frac{\partial G}{\partial \theta} \theta - \frac{\partial G}{\partial p} p.
\end{align*}
\begin{proposition}
\label{prop:dissipation_formula}
Given the constitutive relations \eqref{eq:constitutive_dev_sigma}-\eqref{eq:constitutive_rho}, the dissipation $\mathcal D$ defined in \eqref{eq:2nd_law_thermodynamics} takes the form
\begin{align}
\label{eq:dissipation_d_formula}
\mathcal D = \frac{2\bar{\mu}}{\theta} \textup{dev}[\bm d] : \textup{dev}[\bm d] + \frac{1}{\theta}\left(\frac{2}{3}\bar{\mu} + \bar{\lambda} \right) \left( \nabla_{\bm x} \cdot \bm v \right)^2 + \frac{\bar{\kappa}}{\theta} |\nabla_{\bm x} \theta|^2.
\end{align}
\end{proposition}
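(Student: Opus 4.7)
The plan is to start from the identity \eqref{eq:gibbs_energy_balance_version3}, which already isolates $\theta\mathcal{D}$ as a sum of five contributions driven by $\frac{d\tilde{\bm C}}{dt}$, $\nabla_{\bm x}\cdot\bm v$, $\nabla_{\bm x}\theta$, $\frac{d\theta}{dt}$, and $\frac{dp}{dt}$, and eliminate each contribution one at a time using the constitutive choices \eqref{eq:constitutive_dev_sigma}--\eqref{eq:constitutive_rho}. The two ``thermodynamic'' terms are immediate: \eqref{eq:constitutive_s} forces $\rho s + \rho\,\partial G/\partial\theta = 0$, killing the $\frac{d\theta}{dt}$ term, and \eqref{eq:constitutive_rho} gives $1 - \rho\,\partial G/\partial p = 0$, killing the $\frac{dp}{dt}$ term. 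Fourier's law \eqref{eq:constitutive_heat_flux} turns $-\bm q\cdot\nabla_{\bm x}\theta$ into $\bar\kappa|\nabla_{\bm x}\theta|^2$, and the decomposition \eqref{eq:constitutive_tr_sigma} of the trace of the Cauchy stress turns $\bigl(\tfrac13\text{tr}[\bm\sigma]+p\bigr)\nabla_{\bm x}\cdot\bm v$ into $(\tfrac23\bar\mu+\bar\lambda)(\nabla_{\bm x}\cdot\bm v)^2$. These four reductions account for two of the three terms in the target formula for $\mathcal{D}$ and leave only the isochoric term to analyze.

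For the remaining piece, I would first invoke \eqref{eq:dev_sigma_dev_d_new_expression} in reverse to rewrite
\[
\tfrac12 J^{2/3}\bm F^{-1}\textup{dev}[\bm\sigma]\bm F^{-T} : \tfrac{d}{dt}\tilde{\bm C} \;=\; \textup{dev}[\bm\sigma]:\textup{dev}[\bm d],
\]
then substitute the constitutive split \eqref{eq:constitutive_dev_sigma}. The viscous contribution $2\bar\mu\,\textup{dev}[\bm d]:\textup{dev}[\bm d]$ is exactly the missing term in the target. What remains is to show that the elastic contribution $\rho\,\tilde{\bm F}(\mathbb P:\tilde{\bm S})\tilde{\bm F}^T:\textup{dev}[\bm d]$ cancels the term $-\tfrac{\rho}{2}\tilde{\bm S}:\tfrac{d}{dt}\tilde{\bm C}$ appearing in \eqref{eq:gibbs_energy_balance_version3}. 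Using the duality $\bm A\bm B\bm A^T:\bm C=\bm B:\bm A^T\bm C\bm A$ together with \eqref{eq:dev_sigma_dev_d_new_expression} (and $\tilde{\bm F}=J^{-1/3}\bm F$) collapses that contribution to $\tfrac{\rho}{2}(\mathbb P:\tilde{\bm S}):\tfrac{d}{dt}\tilde{\bm C}$.

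The main obstacle, then, is to show
\[
(\mathbb P:\tilde{\bm S}):\tfrac{d}{dt}\tilde{\bm C}\;=\;\tilde{\bm S}:\tfrac{d}{dt}\tilde{\bm C}.
\]
This is where the projection structure of $\mathbb P$ enters essentially. Writing $\mathbb P^T=\mathbb I-\tfrac13\bm C\otimes\bm C^{-1}$, the identity reduces to checking that $\bm C^{-1}:\tfrac{d}{dt}\tilde{\bm C}=0$. This is a direct consequence of $\det\tilde{\bm C}=J^{-2}\det\bm C=1$: differentiating gives $\tilde{\bm C}^{-1}:\tfrac{d}{dt}\tilde{\bm C}=0$, and since $\bm C^{-1}$ is proportional to $\tilde{\bm C}^{-1}$, the result follows. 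With that cancellation in place, the two elastic contributions annihilate, leaving $\theta\mathcal{D}=2\bar\mu\,\textup{dev}[\bm d]:\textup{dev}[\bm d]+(\tfrac23\bar\mu+\bar\lambda)(\nabla_{\bm x}\cdot\bm v)^2+\bar\kappa|\nabla_{\bm x}\theta|^2$, and dividing by $\theta$ yields \eqref{eq:dissipation_d_formula}. The nontrivial cancellation of the elastic terms is the heart of the argument; everything else is bookkeeping using the identities already collected in items 1--6.
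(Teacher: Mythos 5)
Your proposal is correct and follows essentially the same route as the paper's proof: kill the $d\theta/dt$ and $dp/dt$ terms via \eqref{eq:constitutive_s} and \eqref{eq:constitutive_rho}, handle the heat-flux and trace terms directly, and reduce the remaining isochoric term to $2\bar{\mu}\,\textup{dev}[\bm d]:\textup{dev}[\bm d]$. The only difference is that the paper asserts this last simplification in one line, whereas you supply the justification — the collapse to $\tfrac{\rho}{2}(\mathbb P:\tilde{\bm S}):\tfrac{d}{dt}\tilde{\bm C}$ and the cancellation via $\bm C^{-1}:\tfrac{d}{dt}\tilde{\bm C}=0$ — which is a correct and welcome elaboration of the step the paper leaves implicit.
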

\begin{proof} 
The constitutive relations \eqref{eq:constitutive_s} and \eqref{eq:constitutive_rho} make the last two terms in \eqref{eq:gibbs_energy_balance_version3} vanish. Therefore, one has
\begin{align}
\label{eq:gibbs_energy_balance_version3_reduced}
\theta \mathcal D = \left( \frac{1}{2}J^{\frac{2}{3}} \bm F^{-1} \textup{dev}[\bm \sigma] \bm F^{-T} - \frac{\rho}{2} \tilde{\bm S} \right) : \frac{d}{dt}\tilde{\bm C} +  \left(\frac{1}{3}\textup{tr}\left[ \bm \sigma \right] + p \right)\nabla_{\bm x}\cdot \bm v - \bm q \cdot \nabla_{\bm x} \theta.
\end{align}
The constitutive relation for the heat flux \eqref{eq:constitutive_heat_flux} leads to
\begin{align*}
-\bm q \cdot \nabla_{\bm x} \theta = \bar{\kappa} \nabla_{\bm x} \theta \cdot \nabla_{\bm x} \theta = \bar{\kappa} |\nabla_{\bm x} \theta|^2.
\end{align*}
The constitutive relation \eqref{eq:constitutive_tr_sigma} leads to
\begin{align*}
\left(\frac{1}{3}\textup{tr}\left[ \bm \sigma \right] + p \right)\nabla_{\bm x}\cdot \bm v = \left( \frac{2}{3}\bar{\mu} + \bar{\lambda} \right) \left( \nabla_{\bm x}\cdot \bm v\right)^2.
\end{align*}
Using the constitutive relation \eqref{eq:constitutive_dev_sigma}, the first term in \eqref{eq:gibbs_energy_balance_version3_reduced} can be simplified as
\begin{align*}
& \left( \frac{1}{2}J^{\frac{2}{3}} \bm F^{-1} \textup{dev}[\bm \sigma] \bm F^{-T} - \frac{\rho}{2} \tilde{\bm S} \right) : \frac{d}{dt}\tilde{\bm C} = 2\bar{\mu} \textup{dev}[\bm d] : \textup{dev}[\bm d].
\end{align*}
In summary, one has
\begin{align*}
\theta \mathcal D = 2\bar{\mu} \textup{dev}[\bm d] : \textup{dev}[\bm d] + \left(\frac{2}{3}\bar{\mu} + \bar{\lambda} \right) \left( \nabla_{\bm x} \cdot \bm v \right)^2 + \bar{\kappa} |\nabla_{\bm x} \theta|^2,
\end{align*}
which completes the proof.
\end{proof}
The relation \eqref{eq:dissipation_d_formula} suggests that the dissipation $\mathcal D$ is guaranteed to be non-negative if the dynamic shear viscosity, the bulk viscosity, and the thermal conductivity are non-negative. 
\begin{proposition}
The constitutive relations \eqref{eq:constitutive_dev_sigma}-\eqref{eq:constitutive_rho} satisfy the principle of material frame indifference.
\end{proposition}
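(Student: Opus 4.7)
The plan is to verify that each of the constitutive relations \eqref{eq:constitutive_dev_sigma}--\eqref{eq:constitutive_rho} transforms correctly under a Euclidean change of observer $\bm{x}^{*} = \bm{Q}(t)\bm{x} + \bm{c}(t)$, with $\bm{Q}(t)$ proper orthogonal. Recall the objectivity requirements: scalar fields such as $p$, $\theta$, $\rho$, $s$, $G$ must be invariant; the Cauchy stress must transform as $\bm{\sigma}^{*} = \bm{Q}\bm{\sigma}\bm{Q}^{T}$; and the heat flux must transform as $\bm{q}^{*} = \bm{Q}\bm{q}$. As a preparatory step, I would catalogue how the kinematic ingredients transform: $\bm{F}^{*} = \bm{Q}\bm{F}$, $J^{*} = J$, so $\tilde{\bm{F}}^{*} = \bm{Q}\tilde{\bm{F}}$; consequently $\bm{C}^{*} = \bm{C}$ and $\tilde{\bm{C}}^{*} = \tilde{\bm{C}}$. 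The spatial rate of deformation is known to satisfy $\bm{d}^{*} = \bm{Q}\bm{d}\bm{Q}^{T}$, and since $\nabla_{\bm{x}}\cdot\bm{v} = \textup{tr}[\bm{d}]$ is invariant, one obtains $\textup{dev}[\bm{d}]^{*} = \bm{Q}\textup{dev}[\bm{d}]\bm{Q}^{T}$. Finally, $\nabla_{\bm{x}}\theta$ is an objective vector.

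Next, I would invoke the constitutive ansatz $G = G(\tilde{\bm{C}}, p, \theta)$. Because its arguments are all invariant under the change of observer, the Gibbs free energy itself is invariant, and so are all its partial derivatives evaluated at the transformed arguments. This yields at once $s^{*} = -\partial G/\partial\theta = s$ from \eqref{eq:constitutive_s} and $\rho^{*} = (\partial G/\partial p)^{-1} = \rho$ from \eqref{eq:constitutive_rho}. Moreover $\tilde{\bm{S}}^{*} = 2\,\partial G/\partial\tilde{\bm{C}} = \tilde{\bm{S}}$, and the projector $\mathbb{P} = \mathbb{I} - \tfrac{1}{3}\bm{C}^{-1}\otimes\bm{C}$ depends only on $\bm{C}$, hence $\mathbb{P}^{*} = \mathbb{P}$ and $\mathbb{P}:\tilde{\bm{S}}$ is invariant as a Lagrangian tensor.

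With these pieces in hand, the remaining verifications for \eqref{eq:constitutive_dev_sigma}, \eqref{eq:constitutive_tr_sigma}, and \eqref{eq:constitutive_heat_flux} are direct. For the deviatoric Cauchy stress,
\begin{align*}
\textup{dev}[\bm{\sigma}]^{*} &= \rho\,(\bm{Q}\tilde{\bm{F}})(\mathbb{P}:\tilde{\bm{S}})(\bm{Q}\tilde{\bm{F}})^{T} + 2\bar{\mu}\,\bm{Q}\,\textup{dev}[\bm{d}]\,\bm{Q}^{T} = \bm{Q}\,\textup{dev}[\bm{\sigma}]\,\bm{Q}^{T}.
\end{align*}
For the spherical part, $\tfrac{1}{3}\textup{tr}[\bm{\sigma}]^{*} = -p + (\tfrac{2}{3}\bar{\mu} + \bar{\lambda})\nabla_{\bm{x}}\cdot\bm{v}$, which is invariant as a scalar, as required (since $\textup{tr}[\bm{\sigma}]$ itself is invariant and $\bm{I}^{*} = \bm{Q}\bm{I}\bm{Q}^{T}$, the reconstructed Cauchy stress transforms as $\bm{Q}\bm{\sigma}\bm{Q}^{T}$). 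For the heat flux, $\bm{q}^{*} = -\bar{\kappa}\nabla_{\bm{x}}\theta^{*} = -\bar{\kappa}\bm{Q}\nabla_{\bm{x}}\theta = \bm{Q}\bm{q}$.

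The only subtle point, and in my view the main (minor) obstacle, is making sure the Lagrangian push-forward $\tilde{\bm{F}}(\mathbb{P}:\tilde{\bm{S}})\tilde{\bm{F}}^{T}$ behaves as an objective spatial tensor; this is precisely why the derivation was organized around $\tilde{\bm{C}}$ (which is invariant under superposed rigid motions) rather than any spatial counterpart. Once this is noted, the verification reduces to applying the transformation rules above to each line of \eqref{eq:constitutive_dev_sigma}--\eqref{eq:constitutive_rho}, which completes the proof.
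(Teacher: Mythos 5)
Your proposal is correct and follows essentially the same route as the paper: the key step in both is observing that $J$ is invariant under a superposed rigid motion, so $\widetilde{\bm Q\bm F} = \bm Q\tilde{\bm F}$, and combining this with the objectivity of $\textup{dev}[\bm d]$ to get $\textup{dev}[\bm\sigma] \mapsto \bm Q\,\textup{dev}[\bm\sigma]\,\bm Q^{T}$. The paper simply asserts that only \eqref{eq:constitutive_dev_sigma} requires verification, whereas you spell out the (routine) invariance checks for the scalar relations and the heat flux as well.
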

\begin{proof}
One only needs to verify that \eqref{eq:constitutive_dev_sigma} is material frame indifferent. Considering a proper orthogonal tensor $\bm Q$, one has $\textup{det}(\bm Q \bm F) = \textup{det}(\bm Q) \textup{det}(\bm F) = \textup{det}(\bm F) = J$. Therefore, $\widetilde{\bm Q \bm F} = \bm Q \tilde{\bm F}$. It is known that under the rigid-body motion described by $\bm Q$, $\textup{dev}[\bm d]$ transforms to $\bm Q \textup{dev}[\bm d] \bm Q^T$. Consequently, the right-hand side of \eqref{eq:constitutive_dev_sigma} transforms to
\begin{align*}
&\rho \bm Q \tilde{\bm F} \left( \mathbb P : \tilde{\bm S} \right) \tilde{\bm F}^T \bm Q^T + 2\bar{\mu} \bm Q \textup{dev}[\bm d] \bm Q^T = \bm Q \textup{dev}[\bm \sigma] \bm Q^T. \qedhere
\end{align*}
\end{proof}
\begin{proposition}
The Gibbs free energy takes the following additive decoupled form,
\begin{align}
\label{eq:gibbs_free_energy_additive_split}
G\left(\tilde{\bm C}, p, \theta \right) = G_{iso}\left(\tilde{\bm C}, \theta \right) + G_{vol}\left(p, \theta \right).
\end{align}
\end{proposition}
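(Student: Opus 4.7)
The plan is to exploit the constitutive relation \eqref{eq:constitutive_rho} for the density, namely $\rho = (\partial G/\partial p)^{-1}$, together with the fact that the density must be independent of the isochoric deformation measure $\tilde{\bm C}$. Once I show that $\partial G/\partial p$ depends only on $(p,\theta)$, a single integration in $p$ will separate $G$ into a $(p,\theta)$-piece and a $(\tilde{\bm C},\theta)$-piece.

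First I would establish that $\rho$ is a function of $(p,\theta)$ alone. By the multiplicative decomposition \eqref{eq:def_multiplicative_decomp_F}, $\tilde{\bm C} = J^{-2/3}\bm C$ satisfies $\det \tilde{\bm C} = 1$ by construction, so it encodes only distortional (volume-preserving) deformation. Since the density is a purely volumetric state variable, its equation of state should take the form $\rho = \rho(p,\theta)$, with no $\tilde{\bm C}$ dependence. Combined with \eqref{eq:constitutive_rho}, this gives
\begin{align*}
\frac{\partial G}{\partial p}(\tilde{\bm C},p,\theta) = \frac{1}{\rho(p,\theta)}.
\end{align*}
Next I would integrate both sides with respect to $p$, obtaining
\begin{align*}
G(\tilde{\bm C},p,\theta) = \int \frac{1}{\rho(p',\theta)}\,dp' + H(\tilde{\bm C},\theta),
\end{align*}
for some function $H$ playing the role of the ``constant'' of integration in $p$. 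Setting $G_{vol}(p,\theta)$ equal to the primitive and $G_{iso}(\tilde{\bm C},\theta) := H(\tilde{\bm C},\theta)$ produces exactly the decomposition \eqref{eq:gibbs_free_energy_additive_split}.

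The main obstacle is justifying that $\rho$ is independent of $\tilde{\bm C}$. Although it is physically compelling from the volumetric/isochoric split of $\bm F$, within the Coleman-Noll framework this is really a modeling postulate stating that the thermodynamic equation of state is intrinsic to the material and decouples from the isochoric elastic response. Support for this postulate is available in the paper itself: the remark preceding the proposition notes that in the Lagrangian setting the mass balance specializes to $\rho J = \rho_0$, so $\rho$ depends on $J$ alone, and $J$ is not part of $\tilde{\bm C}$; more generally, in the ALE setting mass conservation is a differential equation that still couples $\rho$ only to the volumetric kinematics. Once this independence is accepted, the rest of the argument is essentially elementary calculus.
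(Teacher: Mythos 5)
Your proposal is correct and follows essentially the same route as the paper's own proof: both arguments observe that $\rho$ cannot depend on the volume-preserving tensor $\tilde{\bm C}$, use the constitutive relation \eqref{eq:constitutive_rho} to conclude $\partial G/\partial p = \rho^{-1}(p,\theta)$, and integrate in $p$ to separate the energy. Your added discussion of why the $\tilde{\bm C}$-independence of $\rho$ is a physical postulate is a fair elaboration of a point the paper states more tersely, but it does not change the argument.
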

\begin{proof}
The density $\rho$ is independent of $\tilde{\bm C}$ since $\tilde{\bm C}$ is volume-preserving ($\textup{det}(\tilde{\bm C}) = 1$). Consequently, the constitutive relation \eqref{eq:constitutive_rho} leads to
\begin{align*}
\frac{\partial G\left(\tilde{\bm C}, p, \theta \right)}{\partial p} = \rho^{-1}(p,\theta).
\end{align*}
Integrating the above partial derivative gives \eqref{eq:gibbs_free_energy_additive_split}, where
$G_{vol}(p,\theta) = \int \rho^{-1} dp$.
\end{proof}
It is worth pointing out that the free energy adopted in the above discussion is the \textit{specific} free energy, which means that it is the energy per unit mass. It is sometimes useful to introduce the free energy per unit volume in the material configuration \cite{Gurtin2009,Holzapfel2000}, which is denoted as $G^R$. The two energies are linked by the relation $G^R = \rho_0 G$. Correspondingly, we denote $G^R_{iso} := \rho_0 G_{iso}$ and $G^R_{vol} := \rho_0 G_{vol}$. Let $H^R$ denote the Helmholtz free energy per unit volume in the material configuration. It can be obtained through a Legendre transformation of $-G^R$ with respect to $p$, namely,
\begin{align*}
H^R(\tilde{\bm C}, J, \theta) := \sup_{p}\left( -pJ+ G^R(\tilde{\bm C}, p, \theta) \right). 
\end{align*}
Invoking the additive split of the Gibbs free energy \eqref{eq:gibbs_free_energy_additive_split}, one has
\begin{align*}
H^R(\tilde{\bm C}, J, \theta) = G^R_{iso}\left(\tilde{\bm C}, \theta \right) + \sup_{p}\left( -pJ + G^R_{vol}\left(p, \theta \right) \right).
\end{align*}
This relation can be summarized as the following proposition.
\begin{proposition}
\label{prop:helmholtz_additive_split}
The Helmholtz free energy $H^R$ admits the additive decoupled form
\begin{align}
\label{eq:helmholtz_additive_split}
H^R(\tilde{\bm C}, J, \theta) = H^R_{iso}(\tilde{\bm C}, \theta) + H^R_{vol}(J, \theta),
\end{align}
wherein
\begin{align*}
H^R_{iso}(\tilde{\bm C}, \theta) =& G^R_{iso}(\tilde{\bm C}, \theta), \quad
H^R_{vol}(J, \theta) = \sup_{p}\left( -pJ + G^R_{vol}\left(p, \theta \right) \right).
\end{align*}
\end{proposition}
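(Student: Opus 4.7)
The plan is to read off the decomposition directly from the definition of $H^R$ as the Legendre transform of $-G^R$ with respect to $p$, using the additive split of the Gibbs free energy already established in \eqref{eq:gibbs_free_energy_additive_split}. The key observation is that the isochoric piece $G^R_{iso}(\tilde{\bm C}, \theta)$ depends neither on $p$ nor on $J$, so it acts as an additive constant with respect to the supremum over $p$ and can be pulled outside.

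Concretely, I would start from the defining relation
\begin{align*}
H^R(\tilde{\bm C}, J, \theta) = \sup_{p}\bigl( -pJ + G^R(\tilde{\bm C}, p, \theta) \bigr),
\end{align*}
multiply \eqref{eq:gibbs_free_energy_additive_split} by $\rho_0$ to get $G^R(\tilde{\bm C}, p, \theta) = G^R_{iso}(\tilde{\bm C}, \theta) + G^R_{vol}(p, \theta)$, and substitute. Since $G^R_{iso}(\tilde{\bm C}, \theta)$ is independent of the supremum variable $p$, it factors out, yielding
\begin{align*}
H^R(\tilde{\bm C}, J, \theta) = G^R_{iso}(\tilde{\bm C}, \theta) + \sup_{p}\bigl( -pJ + G^R_{vol}(p, \theta) \bigr).
\end{align*}
Identifying the first term with $H^R_{iso}(\tilde{\bm C}, \theta)$ and the second with $H^R_{vol}(J, \theta)$ gives the claimed form \eqref{eq:helmholtz_additive_split}.

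There is no real obstacle here beyond invoking \eqref{eq:gibbs_free_energy_additive_split}; the result is a direct consequence of the linearity of the supremum with respect to summands that are independent of the optimization variable. The only point worth a brief remark is that we implicitly assume $G^R_{vol}(\cdot, \theta)$ is such that the Legendre transform is well-defined (e.g., concave in $p$ so that the supremum is attained and finite), which is consistent with the thermodynamic setting in which the Legendre pair $(p, J)$ is introduced. Given this, the proof is essentially a one-line substitution, and no further calculation is required.
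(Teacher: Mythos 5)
Your proof is correct and takes essentially the same route as the paper: the paper likewise defines $H^R(\tilde{\bm C}, J, \theta) = \sup_{p}\left( -pJ + G^R(\tilde{\bm C}, p, \theta) \right)$, substitutes the additive split \eqref{eq:gibbs_free_energy_additive_split}, and pulls the $p$-independent term $G^R_{iso}(\tilde{\bm C},\theta)$ outside the supremum. Your side remark on convexity of the volumetric energy (needed for the Legendre transform to be well-defined) is consistent with the paper's later discussion of polyconvexity.
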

\begin{remark}
The additive split of the Helmholtz free energy was introduced as a postulate based on the multiplicative decomposition of the deformation gradient \cite{Flory1961}. Above, we have given a justification of the additive split \eqref{eq:helmholtz_additive_split} for hyperelastic materials. Extension of the arguments to inelastic materials is under assessment.
\end{remark}
With the energy split \eqref{eq:gibbs_free_energy_additive_split}, the constitutive relations \eqref{eq:constitutive_rho} and \eqref{eq:constitutive_sigma} can be rewritten as
\begin{align}
\label{eq:constitutive_rho_ghat}
\rho &= \rho(p,\theta) = \left( \frac{\partial G_{vol}(p,\theta)}{\partial p} \right)^{-1}, \\
\label{eq:constitutive_sigma_gtilde}
\bm \sigma &= J^{-1} \tilde{\bm F} \left( \mathbb P : 2 \frac{\partial G^R_{iso}}{\partial \tilde{\bm C}} \right) \tilde{\bm F}^T - p \bm I + 2\bar{\mu} \textup{dev}[\bm d] + \left(\frac{2}{3}\bar{\mu} + \bar{\lambda} \right) \nabla_{\bm x} \cdot \bm v \bm I.
\end{align}
Here, we introduce the isobaric thermal expansion coefficient $\alpha_p$ and the isothermal compressibility coefficient $\beta_{\theta}$ \cite{Callen1985}. They are defined as
\begin{align*}
\alpha_p := -\frac{1}{\rho}\frac{\partial \rho}{\partial \theta}, \qquad
\beta_{\theta} := \frac{1}{\rho} \frac{\partial \rho}{\partial p}.
\end{align*}
Making use of the constitutive relation \eqref{eq:constitutive_rho_ghat}, they can be expressed explicitly as
\begin{align}
\label{eq:continuum_model_def_beta}
\alpha_p = \frac{\partial^2 G_{vol}}{\partial p \partial \theta} / \frac{\partial G_{vol}}{\partial p} , \quad
\beta_{\theta}  = -\frac{\partial^2 G_{vol}}{\partial p^2} / \frac{\partial G_{vol}}{\partial p}.
\end{align}

\subsection{Examples of closed systems of equations}
\label{sec:examples_of_equations}
Before proceeding further, we notice that the time derivative of $\rho$ can be expanded as
\begin{align*}
\left. \frac{\partial \rho}{\partial t} \right|_{\bm \chi} = -\alpha_{p} \rho \left. \frac{\partial \theta}{\partial t} \right|_{\bm \chi} + \beta_{\theta} \rho \left. \frac{\partial p}{\partial t} \right|_{\bm \chi}.
\end{align*}
Hence, the governing equations in the ALE frame of reference become
\begin{align}
\label{eq:examples_governing_eqn_disp_velo}
& \bm 0 = \left. \frac{\partial \bm u}{\partial t} \right|_{\bm \chi} + \left( \bm v  - \hat{\bm v} \right) \cdot \nabla_{\bm x} \bm u - \bm v, \displaybreak[2] \\
\label{eq:examples_governing_eqn_mass}
& 0 = -\alpha_{p} \rho \left. \frac{\partial \theta}{\partial t} \right|_{\bm \chi} + \beta_{\theta} \rho \left. \frac{\partial p}{\partial t} \right|_{\bm \chi} + \left(\bm v - \hat{\bm v} \right)\cdot \nabla_{\bm x} \rho + \rho \nabla_{\bm x} \cdot \bm v , \displaybreak[2] \\
\label{eq:examples_governing_eqn_momentum}
& \bm 0 = \left. \rho \frac{\partial \bm v}{\partial t} \right|_{\bm \chi} + \rho \left( \nabla_{\bm x} \bm v \right) \left(\bm v - \hat{\bm v} \right) - \nabla_{\bm x} \cdot \bm \sigma - \rho \bm b, \displaybreak[2] \\
\label{eq:examples_governing_eqn_energy}
& 0 = \left. \rho \frac{\partial \iota}{\partial t} \right|_{\bm \chi} + \rho \left( \bm v - \hat{\bm v} \right) \cdot \nabla_{\bm x} \iota - \bm \sigma : \nabla_{\bm x} \bm v + \nabla_{\bm x} \cdot \bm q - \rho r .
\end{align}

\paragraph{Ideal gas model}
The Gibbs free energy for the ideal gas is
\begin{align*}
G^{\textup{pg}}(\tilde{\bm C}, p, \theta) = \mathit R \theta \ln \left( \frac{p\theta_{\textup{ref}}}{\theta p_{\textup{ref}}} \right) - \mathit C_v \theta \ln \left( \frac{\theta}{\theta_{\textup{ref}}} \right) + \left(\mathit C_v + \mathit R \right)\theta,
\end{align*}
wherein $\mathit R$ is the specific gas constant, $\mathit C_v$ is the specific heat at constant volume, $p_{\textup{ref}}$ and $\theta_{\textup{ref}}$ are the reference values of the pressure and temperature. With this choice, the constitutive relations are
\begin{align*}
\bm \sigma &= -p \bm I + \bar{\mu} \left(\nabla_{\bm x} \bm v + \nabla_{\bm x} \bm v^T \right) + \bar{\lambda} \nabla_{\bm x} \cdot \bm v \bm I, \quad \rho = \frac{p}{\mathit R \theta}, \\
s &= -R\ln \left( \frac{p\theta_{\textup{ref}}}{\theta p_{\textup{ref}}} \right) + \mathit C_v \ln \left( \frac{\theta}{\theta_{\textup{ref}}} \right), \quad
\iota = \mathit C_v \theta.
\end{align*}
The balance equations \eqref{eq:examples_governing_eqn_mass}-\eqref{eq:examples_governing_eqn_energy} together with the above constitutive equations constitute the pressure primitive variable formulation for the compressible Navier-Stokes equations \cite{Hauke1998}. It was shown that the set of pressure primitive variables is among the sets of variables that are well-behaved for both compressible and incompressible flows \cite{Hauke1994,Hauke1998,Hughes1986,Liu2013}; on the contrary, any set of variables involving density (e.g. conservation variables or density primitive variables) becomes ill-defined in the incompressible limit \cite{Hauke1994}. That observation, in part, justifies our motivation of deriving a continuum mechanics model based on the Gibbs free energy.

\paragraph{Incompressible viscous flow} We introduce the Gibbs free energy for incompressible viscous fluid flow as
\begin{align*}
G^{if}(\tilde{\bm C}, p, \theta) = G_{iso}^{if}(\theta) + \frac{p}{\rho_0}.
\end{align*}
With this choice, the constitutive relations are
\begin{align*}
\bm \sigma &= -p \bm I + \bar{\mu} \left(\nabla_{\bm x} \bm v + \nabla_{\bm x} \bm v^T \right) + \bar{\lambda} \nabla_{\bm x} \cdot \bm v \bm I, \quad
\rho = \rho_0, \displaybreak[2] \\
s &= - \frac{\partial G_{iso}^{if}}{\partial \theta}, \quad
\iota = G_{iso}^{if} - \theta \frac{\partial G_{iso}^{if}}{\partial \theta}.
\end{align*}

\paragraph{Compressible hyperelastic model} The Gibbs free energy for compressible hyperelastic materials takes the following general form.
\begin{align*}
G^{ch}(\tilde{\bm C}, p, \theta) = G^{ch}_{iso}( \tilde{\bm C},\theta ) + G_{vol}^{ch}(p,\theta).
\end{align*}
If we ignore the viscous effect, the constitutive relations are 
\begin{align*}
& \bm \sigma = J^{-1} \tilde{\bm F} \left( \mathbb P : 2 \rho_0 \frac{\partial G^{ch}_{iso}}{\partial \tilde{\bm C}} \right) \tilde{\bm F}^T - p \bm I, \quad
\rho = \left(\frac{\partial G_{vol}^{ch}}{\partial p} \right)^{-1}, \displaybreak[2] \\
& s = - \frac{\partial G^{ch}}{\partial \theta}, \quad \iota = G^{ch} - \theta \frac{\partial G^{ch}}{\partial \theta} - p \frac{\partial G_{vol}^{ch}}{\partial p}.
\end{align*}

\paragraph{Incompressible hyperelastic model}
Noticing that incompressibility implies $\tilde{\bm C} = \bm C$, the Gibbs free energy for incompressible hyperelastic materials takes the following form.
\begin{align*}
G^{ih}(\tilde{\bm C}, p, \theta) = G^{ih}(\bm C, p, \theta) = G^{ih}_{iso}(\bm C, \theta) + \frac{p}{\rho_0}.
\end{align*}
Ignoring the viscous effect, the constitutive relations can be written as
\begin{align*}
& \bm \sigma = J^{-1} \tilde{\bm F} \left( \mathbb P : 2 \rho_0 \frac{\partial G^{ih}_{iso}}{\partial \bm C} \right) \tilde{\bm F}^T - p \bm I, \quad
\rho = \rho_0, \displaybreak[2] \\
& s = - \frac{\partial G^{ih}_{iso}}{\partial \theta}, \quad
\iota = G^{ih}_{iso} - \theta \frac{\partial G^{ih}_{iso}}{\partial \theta}.
\end{align*}

\begin{remark}
It is worth pointing out that the displacement-velocity relation \eqref{eq:adv_form_disp_velo_relation} is not the unique choice for describing kinematics. The deformation gradient transport relation \cite[p. 24]{Scovazzi2007},
\begin{align*}
\frac{d}{dt} \bm F = \nabla_{\bm x} \bm v \bm F,
\end{align*}
can be utilized to calculate the strain and stress as well \cite{Duddu2012,Scovazzi2017}. In fact, this strategy is expected to give second-order spatial accuracy in the calculation of the strain and the stress with linear elements. A trade-off is that $n_d^2$ additional differential equations for $\bm F$ need to be solved. Based on the polyconvexity hypothesis, it seems natural to introduce kinematic relations for $\bm F$, $J \bm F^{-T}$, and $J$ \cite{Bonet2015,Bonet2015a,Gil2014}. In that approach, the kinematic equations involve $2n_d^2+1$ degrees of freedom. In this work, we choose to solve the simple displacement-velocity equation \eqref{eq:adv_form_disp_velo_relation}. Its simple structure leads to an additional benefit in the design of the algorithm.
\end{remark}

\subsection{Legendre transformation of volumetric energies}
Here, we want to point out that, in the literature, there exist derivations of constitutive relations based on a different Gibbs free energy \cite{Rajagopal2013,Surana2013}. In those works, the Gibbs free energy is taken as a function of the stress, which implicitly requires the Helmholtz free energy to be convex with respect to the strain. Convexity is a strong requirement and rules out many important nonlinear materials. This is a key difference with the present work, which does not impose the requirement. Most hyperelastic materials are postulated to be polyconvex \cite{Ball1976,Marsden1994}. A direct consequence of polyconvexity is that, with the isochoric-volumetric split shown in Proposition \ref{prop:helmholtz_additive_split}, the volumetric energy is convex. Thus, it is indeed legitimate to perform a Legendre transformation on the volumetric part of the energy. The derivations in \cite{Rajagopal2013,Surana2013} correspond to the Hellinger-Reissner variational principle; while our derivation corresponds to the Herrmann variational principle \cite{Herrmann1965,Key1969}. In the following, we present a few examples of the Lengendre transformations for several classical volumetric energies, and we use $\kappa$ to denote the bulk modulus.

\paragraph{Quadratic volumetric energy}
The first example is the quadratic energy
\begin{align}
\label{eq:psi_vol_quadratic}
H_{vol}^R(J) = \frac{\kappa}{2}\left(J - 1\right)^2.
\end{align}
The corresponding specific energy is
\begin{align*}
H_{vol}(v) :=\frac{1}{\rho_0} H_{vol}^R(\rho_0 v)= \frac{\kappa}{2\rho_0} \left( \rho_0 v - 1 \right)^2.
\end{align*}
The conjugate function of the above $H_{vol}(v)$ is
\begin{align*}
G_{vol}(p) := \sup_{v} \left( pv + H_{vol}(v) \right) = \frac{p}{\rho_0} - \frac{p^2}{2\kappa\rho_0}. 
\end{align*}
According to \eqref{eq:constitutive_rho_ghat} and \eqref{eq:continuum_model_def_beta}, we have
\begin{align*}
\rho = \frac{\rho_0}{1 - \frac{p}{\kappa}}, \qquad
\beta_{\theta} = \frac{1}{\kappa - p}.
\end{align*}
%

\paragraph{ST91 volumetric energy}
The quadratic volumetric energy \eqref{eq:psi_vol_quadratic} has been under criticism because it approaches a finite value as the volume goes to zero. This may lead to numerical instabilities \cite{Simo1982}. To circumvent this issue, the following volumetric energy was proposed in \cite{Simo1991} and is now widely used.
\begin{align}
\label{eq:psi_vol_ST91}
H_{vol}^R(J) = \frac{\kappa}{4}\left( J^2 - 1 - 2 \ln(J) \right).
\end{align}
The specific energy is
\begin{align*}
H_{vol}(v) := \frac{1}{\rho_0}H_{vol}^R(\rho_0 v) = \frac{\kappa}{4\rho_0}\left( \rho_0^2 v^2 - 1 - 2 \ln(\rho_0 v) \right).
\end{align*}
The conjugate function to $H_{vol}(v)$ is
\begin{align*}
G_{vol}(p) := \sup_{v} \left( pv + H_{vol}(v) \right) = \frac{-p^2 + p \sqrt{p^2+\kappa^2}}{2\kappa \rho_0} - \frac{\kappa}{2\rho_0}\ln \left( \frac{\sqrt{p^2+\kappa^2}-p}{\kappa} \right).
\end{align*}
The Taylor expansion of the above $G_{vol}$ is 
\begin{align*}
G_{vol}(p) = \frac{p}{\rho_0} - \frac{p^2}{2\rho_0 \kappa} + \frac{p^3}{6\rho_0 \kappa^2} - \frac{1}{40}\frac{p^5}{\rho_0 \kappa^4} + \mathcal O(\frac{1}{\kappa^5}).
\end{align*}
Clearly, the $G_{vol}(p)$ for the ST91 volumetric energy \eqref{eq:psi_vol_ST91} can be viewed as a high-order modification of the one associated with the quadratic energy. From the formula of $G_{vol}(p)$, one obtains
\begin{align*}
\rho = \frac{\rho_0}{\kappa} \left( \sqrt{p^2 + \kappa^2} + p \right), \qquad
\beta_{\theta} =  \frac{1}{\sqrt{p^2+\kappa^2}}.
\end{align*}

\paragraph{M94 volumetric energy}
The volumetric free energy proposed in \cite{Miehe1994} and its corresponding specific free energy are
\begin{align}
\label{eq:psi_vol_M94}
& H^R_{vol}(J) = \kappa \left(J - \ln(J) -1 \right), \\
& H_{vol}(v) := \frac{1}{\rho_0} H^{R}_{vol}(\rho_0 v) = \frac{\kappa}{\rho_0} \left(\rho_0 v - \ln(\rho_0 v) -1 \right). \nonumber
\end{align}
The conjugate function to $H_{vol}(v)$ is
\begin{align*}
G_{vol}(p) := \sup_{v} \left( pv + H_{vol}(v) \right) = -\frac{\kappa}{\rho_0} \ln \left(\frac{\kappa}{p+\kappa} \right).
\end{align*}
Its Taylor expansion is
\begin{align*}
G_{vol}(p) = \frac{p}{\rho_0} - \frac{p^2}{2 \rho_0 \kappa} + \frac{p^3}{3\rho_0\kappa^2} - \frac{p^4}{4\rho_0 \kappa^3} + \frac{p^5}{5\rho_0 \kappa^4} + \mathcal{O}(\frac{1}{\kappa^5}).
\end{align*}
Based on the above formula, one arrives at
\begin{align*}
\rho = \rho_0 \left( 1 + \frac{p}{\kappa} \right), \qquad
\beta_{\theta} = \frac{1}{p+\kappa}.
\end{align*}
Here, the constitutive relation for the density is the linear barotropic relation, which is usually used to describe small density variations near a reference value \cite{Greenshields2005}.

\paragraph{L94 volumetric energy}
The volumetric free energy proposed in \cite{Liu1994} and the corresponding specific free energy are
\begin{align}
\label{eq:psi_vol_l94}
& H_{vol}^{R}(J) = \kappa \left( J \ln(J) - J + 1 \right), \\
& H_{vol}(v) := \frac{1}{\rho_0} H^{R}_{vol}(\rho_0 v) = \frac{\kappa}{\rho_0} \left( \rho_0 v \ln(\rho_0 v) - \rho_0 v + 1 \right).
\end{align}
The conjugate function to $H_{vol}(v)$ is
\begin{align*}
G_{vol}(p) = \frac{\kappa}{\rho_0} \left( 1 - e^{-\frac{p}{\kappa}} \right).
\end{align*}
Its Taylor expansion is
\begin{align*}
G_{vol}(p) = \frac{p}{\rho_0} - \frac{p^2}{2\kappa \rho_0} + \frac{p^3}{6\kappa^2 \rho_0} - \frac{p^4}{24 \kappa^3 \rho_0} + \frac{p^5}{120 \rho_0 \kappa^4} + \mathcal{O}\left(\frac{1}{\kappa^5} \right),
\end{align*}
and
\begin{align*}
\rho = \rho_0 e^{\frac{p}{\kappa}}, \qquad \beta_{\theta} = \frac{1}{\kappa}.
\end{align*}
Interestingly, this volumetric free energy gives a constant isothermal compressibility coefficient.
\begin{remark}
For compressible materials, one can derive pressure by taking derivative of the volumetric energy with respect to $J$ as
\begin{align}
\label{eq:legendre_sec_p_J_eos}
p = -\frac{dH^R_{vol}}{dJ}.
\end{align}
It can be verified that \eqref{eq:legendre_sec_p_J_eos} is compatible with the constitutive equation \eqref{eq:constitutive_rho_ghat}. At this stage, one can clearly see that the equation \eqref{eq:legendre_sec_p_J_eos} is an equation of state, just like $p=\rho R \theta$ for the ideal gas model. In CFD, it is very rare to see one put $p=\rho R \theta$ into a weak form and solve it by finite element or finite volume methods. By analogy, we feel it is worth raising a question on the validity of using \eqref{eq:legendre_sec_p_J_eos} in the mixed formulation for finite elasticity, although such an approach is regarded as well-established \cite{Saracibar2006,Holzapfel2000,Klaas1999,Maniatty2002}.
\end{remark}

\begin{remark}
\label{remark:scovazzi_2016_discussion}
In \cite{Scovazzi2016}, it has been observed that, even for small-strain elasticity, it is not advisable to use $0=p/\kappa + \nabla_{\bm x} \cdot \bm u$ in the VMS formulation for transient analysis. This fact confirms our doubts about using \eqref{eq:legendre_sec_p_J_eos} in the VMS formulation for finite elastodynamics.
\end{remark}

\begin{remark}
In the incompressible limit, the bulk modulus approaches infinity, and we have
$G_{vol}(p) \rightarrow p/\rho_0$, $\rho \rightarrow \rho_0$, $\beta_{\theta} \rightarrow 0$. Therefore, the above constitutive relations are well-defined in both compressible and incompressible regimes. In contrast, the constitutive relation \eqref{eq:legendre_sec_p_J_eos} based on the Helmholtz free energy $H^R_{vol}$ will blow up in the incompressible limit.
\end{remark}

\begin{remark}
It is interesting to notice that, although the volumetric energies $H^R_{vol}$ are different, their conjugate counterparts $G_{vol}$ are very similar. The first two terms of their Taylor expansions are identical. Notice that, besides the convexity condition, physical intuition also suggests that the volumetric energy $H^R_{vol}$ achieve its minimum value at $J=1$ and blows up to infinity as $J\rightarrow 0$ and $J\rightarrow \infty$. We feel that these conditions may imply some mathematical properties for $G_{vol}$. These properties may help design constitutive relations directly based on $G_{vol}$. We believe future work in this area will be useful both theoretically and practically.
\end{remark}

\section{Variational Multiscale Analysis}
\label{sec:VMS}
The variational multiscale method was introduced as a general framework for subgrid-scale modeling in computational mechanics \cite{Hughes1995,Hughes1998}. As a generalization of the stabilized methods, it improves the stability bound for singularly perturbed problems and overcomes the inf-sup condition for saddle-point problems. Interested readers are referred to \cite{Ahmed2017,Codina2017,Hughes2004} for comprehensive reviews. In this section, we invoke the residual-based VMS method \cite{Bazilevs2007a,Oberai2016} to construct a formulation for the continuum problem derived in Section \ref{sec:continuum_mechanics}. Here, and in what follows, we restrict our discussion to the isothermal condition\footnote{Strictly speaking, the isothermal condition is another constraint condition in thermodynamics. The Gibbs free energy degenerates since the relation \eqref{eq:constitutive_s} becomes invalid for a fixed temperature. One may choose the enthalpy as the  thermodynamic potential to derive a complete theory for an isothermal system \cite{Callen1985}. However, it can be shown that that the mechanical part of that system is identical to \eqref{eq:isothermal_governing_disp_velo}-\eqref{eq:isothermal_governing_eqn_momentum}. Hence we do not provide that derivation for a tautological system in this work.}. The system of equations \eqref{eq:examples_governing_eqn_disp_velo}-\eqref{eq:examples_governing_eqn_energy} \footnote{In this work, we choose to discuss the VMS formulation based on the advective form \eqref{eq:adv_form_mass_balance}-\eqref{eq:adv_form_disp_velo_relation} to simplify the derivation. However, for some cases, it is convenient to start with a conservative form \cite{Bazilevs2007a,Zeng2016}.} are simplified as
\begin{align}
\label{eq:isothermal_governing_disp_velo}
& \bm 0 = \left. \frac{\partial \bm u}{\partial t} \right|_{\bm \chi} + \left( \nabla_{\bm x} \bm u \right) \left( \bm v  - \hat{\bm v} \right) - \bm v, \\
\label{eq:isothermal_governing_eqn_mass}
& 0 = \beta_{\theta} \left. \frac{\partial p}{\partial t} \right|_{\bm \chi} + \beta_{\theta} \left(\bm v - \hat{\bm v} \right)\cdot \nabla_{\bm x} p + \nabla_{\bm x} \cdot \bm v ,\\
\label{eq:isothermal_governing_eqn_momentum}
& \bm 0 = \left. \rho \frac{\partial \bm v}{\partial t} \right|_{\bm \chi} + \rho \left( \nabla_{\bm x} \bm v \right) \left(\bm v - \hat{\bm v} \right) - \nabla_{\bm x} \cdot \bm \sigma - \rho \bm b.
\end{align}
The constitutive relations for the density $\rho$ and the isothermal compressibility $\beta_{\theta}$ can be simplified as univariate functions of the pressure:
\begin{align}
\label{eq:isothermal_constitutive_rho_ghat}
\rho &= \rho(p) = \left( \frac{d G_{vol}(p)}{d p} \right)^{-1}, \quad
\beta_{\theta} = \beta_{\theta}(p) = - \frac{d^2 G_{vol}(p)}{d p^2} / \frac{d G_{vol}(p)}{d p}.
\end{align}
In this section, we consider the strong-form problem endowed with proper initial conditions and periodic boundary conditions. To simplify the notation, we choose $\bsfV$ to denote both the trial and the test function spaces, which are assumed to be identical in this section. Let $\left(\cdot , \cdot \right)_{\Omega_{\bm x}}$ denote the $\mathcal L^2$ inner product over the domain $\Omega_{\bm x}$. The variational formulation for the equations \eqref{eq:isothermal_governing_disp_velo}-\eqref{eq:isothermal_governing_eqn_momentum} can be stated as follows. Find $\bsfy = \left\lbrace \bm u, p, \bm v \right\rbrace^T \in \bsfV$ such that for $\forall \hspace{1mm} \bsfw = \left\lbrace \bm w_{\bm u}, w_{p}, \bm w_{\bm v} \right\rbrace^T \in \bsfV$,
\begin{align}
\label{eq:vms_original_weak_form_problem}
&\bsfB\left( \bsfw, \bsfy \right) = \bsfF\left(\bsfw \right), \displaybreak[2] \\
&\bsfB\left( \bsfw, \bsfy \right) = \left( \bm w_{\bm u}, \left. \frac{\partial \bm u}{\partial t} \right|_{\bm \chi} + \left( \nabla_{\bm x} \bm u \right) \left( \bm v  - \hat{\bm v} \right) - \bm v \right)_{\Omega_{\bm x}}  + \left(w_{p}, \beta_{\theta} \left. \frac{\partial p}{\partial t} \right|_{\bm \chi} + \beta_{\theta} \left(\bm v - \hat{\bm v} \right)\cdot \nabla_{\bm x} p + \nabla_{\bm x} \cdot \bm v \right)_{\Omega_{\bm x}} \displaybreak[2] \nonumber \\
& \hspace{1.6cm} + \left(\bm w_{\bm v},  \left. \rho \frac{\partial \bm v}{\partial t} \right|_{\bm \chi} + \rho \left( \nabla_{\bm x} \bm v \right) \left(\bm v - \hat{\bm v} \right) \right)_{\Omega_{\bm x}} + \left( \nabla_{\bm x} \bm w_{\bm v}, \bm \sigma \right)_{\Omega_{\bm x}},  \displaybreak[2]  \nonumber \\
& \bsfF\left( \bsfw \right) = \left(\bm w_{\bm v}, \rho \bm b \right)_{\Omega_{\bm x}}. \nonumber
\end{align}
Now we introduce a projection operator $\bsfP : \bsfV \rightarrow \overline{\bsfV}$, wherein $\overline{\bsfV}$ is a computable finite-dimensional subspace of $\bsfV$. With the aid of the projection operator, we have a well-defined direct-sum decomposition of the function space $\bsfV$ as
\begin{align*}
& \bsfV = \overline{\bsfV} \oplus \bsfV^{\prime}, \quad \overline{\bsfV} = \bsfP \bsfV, \quad \bsfV^{\prime} = \left( \bsfI - \bsfP \right) \bsfV,
\end{align*}
where $\bsfI$ is the identity operator. Here, $\bsfV^{\prime}$ represents the unresolved fine  scales.  With this space decomposition, we can decompose the trial solution $\bsfy$ and the test function $\bsfw$ as
\begin{align*}
& \bsfy = \bar{\bsfy} + \bsfy^{\prime}, \quad \bar{\bsfy} = \bsfP \bsfy, \quad \bsfy^{\prime} = \left( \bsfI - \bsfP \right) \bsfy, \displaybreak[2] \\
& \bsfw = \bar{\bsfw} + \bsfw^{\prime}, \quad \bar{\bsfw} = \bsfP \bsfw, \quad \bsfw^{\prime} = \left( \bsfI - \bsfP \right) \bsfw.
\end{align*}
With the decomposition of $\bsfw$ and by virtue of the linear dependency of the variational formulation $\bsfB(\bsfw, \bsfy)$ in $\bsfw$, we can decompose the original variational formulation into a coupled system as
\begin{align}
\label{eq:VMS_coarse_eqn}
& \bsfB\left(\bar{\bsfw}, \bar{\bsfy} + \bsfy^{\prime} \right) = \bsfF\left(\bar{\bsfw}\right), \\
\label{eq:VMS_fine_eqn}
& \bsfB\left(\bsfw^{\prime}, \bar{\bsfy} + \bsfy^{\prime} \right) = \bsfF\left(\bsfw^{\prime}\right).
\end{align}
The above two equations are usually referred to as the coarse-scale and the fine-scale equations \cite{Bazilevs2007a}. We assume that $\bsfB$ is Fr\'echet differentiable with respect to $\bsfy \in \bsfV$ up to the $n$-th derivative. Using the Taylor's Formula in the Banach space \cite{Arbogast2008}, the left-hand side of \eqref{eq:VMS_fine_eqn} can be expanded as
\begin{align*}
\bsfB\left(\bsfw^{\prime}, \bar{\bsfy} + \bsfy^{\prime}\right) = \bsfB\left(\bsfw^{\prime}, \bar{\bsfy}\right) + D_{\bsfy}\bsfB\left(\bsfw^{\prime}, \bar{\bsfy} \right)[\bsfy^{\prime}] + \cdots + \frac{1}{n!}D^n_{\bsfy}\left(\bsfw^{\prime}, \bar{\bsfy} \right)[\underbrace{\bsfy^{\prime},\cdots , \bsfy^{\prime}}_\text{n copies}] + o\left( \|\bsfy^{\prime}\|_{\bsfV}^n \right).
\end{align*}
In the above, $D_{\bsfy}^k\bsfB$ represents the $k$-th derivative of $\bsfB$ in terms of the second argument $\bsfy$. It is a $k$-linear functional on $\bsfV^{\prime} \times \cdots \times \bsfV^{\prime}$.
Moving $\bsfB\left(\bsfw^{\prime}, \bar{\bsfy}\right)$ to the right-hand side, we have
\begin{align}
\label{eq:VMS_fine_eqn_taylor}
D_{\bsfy}\bsfB\left(\bsfw^{\prime}, \bar{\bsfy} \right)[\bsfy^{\prime}] + \cdots + \frac{1}{n!}D^n_{\bsfy}\left(\bsfw^{\prime}, \bar{\bsfy} \right)[\bsfy^{\prime},\cdots , \bsfy^{\prime}] + o\left( \|\bsfy^{\prime}\|_{\bsfV}^n \right) = \bsfF\left(\bsfw^{\prime} \right) - \bsfB\left( \bsfw^{\prime}, \bar{\bsfy} \right).
\end{align}
One can represent the right-hand side of \eqref{eq:VMS_fine_eqn_taylor} as $\textbf{Res}\left(\bar{\bsfy}\right)[\bsfw^{\prime}]$, wherein $\textbf{Res}\left(\bar{\bsfy}\right)$ is, formally, the residual of the coarse-scale lifted to $\bsfV^{\prime *}$, the dual of $\bsfV^{\prime}$. Based on \eqref{eq:VMS_fine_eqn_taylor}, one may observe that $\bsfy^{\prime}$ depends on $\textbf{Res}\left(\bar{\bsfy}\right)$ and $\bar{\bsfy}$. Hence, $\bsfy^{\prime}$ can be represented by an abstract mapping $\mathscr F^{\prime}$,
\begin{align}
\label{eq:formal_representation_y_prime}
\bsfy^{\prime} = \mathscr F^{\prime}\left(\bar{\bsfy}, \textbf{Res}\left(\bar{\bsfy} \right) \right).
\end{align}
Inserting \eqref{eq:formal_representation_y_prime} into \eqref{eq:VMS_coarse_eqn}, a closed, finite-dimensional system can be obtained for $\bar{\bsfy}$,
\begin{align}
\label{eq:formal_representation_y_bar}
\bsfB\left( \bar{\bsfw} , \bar{\bsfy} +   \mathscr F^{\prime}\left(\bar{\bsfy}, \textbf{Res}\left(\bar{\bsfy} \right) \right)\right) = \bsfF\left( \bsfw^{\prime} \right).
\end{align}
Given the analytic form of the mapping $\mathscr F^{\prime}$, one may obtain $\bar{\bsfy}$ from \eqref{eq:formal_representation_y_bar} and $\bsfy^{\prime}$ from \eqref{eq:formal_representation_y_prime}. The resulting $\bsfy = \bar{\bsfy} + \bsfy^{\prime}$ is the exact solution of the original problem \eqref{eq:vms_original_weak_form_problem}. However, obtaining an analytic form for $\mathscr F^{\prime}$ is as hard as solving the original problem analytically, if not harder. A practical approach is to systematically design an approximated mapping $\tilde{\mathscr F}^{\prime}$. Replacing $\mathscr F^{\prime}$ in \eqref{eq:formal_representation_y_prime}-\eqref{eq:formal_representation_y_bar} by the approximated mapping, one may obtain a suite of computable formulations for the fine- and coarse-scale components.

Similar to the residual-based VMS modeling approach for turbulence \cite{Bazilevs2007a}, we introduce a perturbation series to represent $\bsfy^{\prime}$ and derive a detailed pathway to construct $\tilde{\mathscr F}^{\prime}$. The difference between our approach and the one adopted in \cite{Bazilevs2007a} is that, in addition to the approximation of the fine-scale Green's operator and the truncation of the perturbation series, we introduce one additional approximation procedure, i.e. the truncation of the Taylor expansion formula in \eqref{eq:VMS_fine_eqn_taylor}. This additional step is due to the general nonlinear term that may appear in finite elasticity. Since the derivation of the fine-scale approximation goes deeper into functional analysis, we give the detailed derivation in \ref{appd:VMS_derivation_perturbation_series}. In our model, the fine-scale component is approximated as
\begin{align}
\label{eq:VMS_algebraic_static_y_prime}
\bsfy^{\prime} \approx \tilde{\mathscr F}^{\prime}\left(\bar{\bsfy}, \textbf{Res}\left(\bar{\bsfy} \right) \right) = - \bm \tau \textbf{Res}\left( \bar{\bsfy} \right),
\end{align}
wherein
\begin{align*}
& \bm \tau = 
\begin{bmatrix}
\bm\tau_K & \bm 0 & \bm 0 \\
\bm 0 & \tau_C & \bm 0 \\
\bm 0 & \bm 0 & \bm \tau_M
\end{bmatrix}, \qquad \textbf{Res}\left(\bar{\bsfy} \right) = 
\begin{Bmatrix}
\bm r_K(\bar{\bsfy}) \\
r_C(\bar{\bsfy}) \\
\bm r_M(\bar{\bsfy})
\end{Bmatrix}, \displaybreak[2] \\
& \bm r_K(\bar{\bsfy}) = \left. \frac{\partial \bar{\bm u}}{\partial t} \right|_{\bm \chi} + \left( \nabla_{\bm x} \bar{\bm u} \right) \left( \bar{\bm v}  - \hat{\bm v} \right) - \bar{\bm v},  \displaybreak[2]  \\
& r_C(\bar{\bsfy}) = \bar{\beta}_{\theta} \left. \frac{\partial \bar{p}}{\partial t} \right|_{\bm \chi} + \bar{\beta}_{\theta} \left(\bar{\bm v} - \hat{\bm v} \right)\cdot \nabla_{\bm x} \bar{p} + \nabla_{\bm x} \cdot \bar{\bm v},  \displaybreak[2]  \\
& \bm r_M(\bar{\bsfy}) = \left. \bar{\rho} \frac{\partial \bar{\bm v}}{\partial t} \right|_{\bm \chi} + \bar{\rho} \left( \nabla_{\bm x} \bar{\bm v} \right) \left(\bar{\bm v}  - \hat{\bm v} \right) - \nabla_{\bm x} \cdot \bar{\bm \sigma} - \bar{\rho} \bm b.
\end{align*}
In the above, the choice $\bm \tau = \textup{diag}(\bm \tau_K, \tau_C, \bm \tau_M)$ implies that the fine-scales are postulated to be decoupled. The precise formulas for the stabilization parameters $\tau_K$, $\tau_C$, and $\tau_M$ depend on the specific problem considered. For simple linear problems, $\bm \tau$ can be computed as a local mean-value of the fine-scale Green's function \cite{Hughes1995,Hughes2007}; sometimes, error estimates provide a guidance for the design of $\bm \tau$ \cite{Franca1992,Hughes1988}; for complex problems, scaling arguments are usually made for the design of $\bm \tau$ \cite{Tezduyar2000}. The detailed formula of $\bm \tau$ for solid and fluid dynamics will be given in the subsequent sections. With \eqref{eq:VMS_algebraic_static_y_prime}, we can complete our VMS formulation as follows,
\begin{align}
\label{eq:VMS_final_abstract_framework}
\bsfB \left( \bar{\bsfw} , \bar{\bsfy} - \bm \tau \textbf{Res}\left( \bar{\bsfy} \right) \right) &= \bsfF(\bar{\bsfw}).
\end{align}
This formulation provides a basis for our finite element formulations in the subsequent sections.

\section{Formulation for solid dynamics}
\label{sec:numerical_solid}
In this section, we restrict our discussion to hyper-elastodynamics. Within the general VMS framework developed in Section \ref{sec:VMS}, the problem is spatially discretized using the VMS formulation. The generalized-$\alpha$ method is utilized for temporal discretization. As is shown in \cite{Scovazzi2016}, a block decomposition of the tangent matrix reveals that the problem can be solved in a segregated manner without losing consistency in the nonlinear solver. Lastly, we discuss the choice of the stabilization parameters.

\subsection{Initial-boundary value problem}
We consider the hyper-elastodynamic problem written in the Lagrangian reference frame,
\begin{align}
\label{eq:stabilized_solids_strong_form_kinematic}
& \bm 0 = \frac{d\bm u}{dt} - \bm v, && \mbox{ in } \Omega_{\bm x}, \displaybreak[2] \\
\label{eq:stabilized_solids_strong_form_pressure}
& 0 = \beta_{\theta}(p) \frac{dp}{dt} + \nabla_{\bm x} \cdot \bm v && \mbox{ in } \Omega_{\bm x},  \displaybreak[2] \\
\label{eq:stabilized_solids_strong_form_momentum}
& \bm 0 = \rho(p) \frac{d\bm v}{dt} - \nabla_{\bm x} \cdot \bm \sigma_{dev} + \nabla_{\bm x} p - \rho(p) \bm b, && \mbox{ in } \Omega_{\bm x}.
\end{align}
The time interval of interest is denoted as $(0,T)$, with $T>0$. The boundary $\Gamma_{\bm x} = \partial \Omega_{\bm x}$ can be partitioned into two non-overlapping subdivisions:
$
\Gamma_{\bm x} = \Gamma_{\bm x}^g \cup \Gamma_{\bm x}^{h},
$ 
wherein, $\Gamma_{\bm x}^g$ represents the Dirichlet part of the boundary, and $\Gamma_{\bm x}^h$ represents the Neumann part of the boundary. Boundary conditions for this problem are imposed as
\begin{align}
\label{eq:stabilized_solids_strong_form_dirichlet_u}
& \bm u = \bm g, && \mbox{ on } \Gamma_{\bm x}^{g}, \displaybreak[2] \\
\label{eq:stabilized_solids_strong_form_dirichlet_v}
& \bm v = \frac{d\bm g}{dt}, && \mbox{ on } \Gamma_{\bm x}^{g}, \displaybreak[2] \\
& (\bm \sigma_{dev} - p\bm I) \bm n = \bm h, && \mbox{ on } \Gamma_{\bm x}^{h}.
\end{align}
Given the initial data $\bm u_0$, $p_0$, and $\bm v_0$, the initial conditions for the strong-form problem \eqref{eq:stabilized_solids_strong_form_kinematic}-\eqref{eq:stabilized_solids_strong_form_momentum} can be stated as
\begin{align}
& \bm u(\bm x, 0) = \bm u_0(\bm x),  \qquad p(\bm x, 0) = p_0(\bm x),  \qquad
\bm v(\bm x, 0) = \bm v_0(\bm x).
\end{align}
Here we only consider hyperelastic materials, and the deviatoric part of the Cauchy stress takes the following specific form
\begin{align*}
\bm \sigma_{dev} = J^{-1} \tilde{\bm F} \left( \mathbb P : \tilde{\bm S} \right) \tilde{\bm F}^T, \qquad \tilde{\bm S} = 2 \frac{\partial G^R_{iso}(\tilde{\bm C})}{\partial \tilde{\bm C}}.
\end{align*}
The constitutive relations $\rho = \rho(p)$ and $\beta_{\theta} = \beta_{\theta}(p)$ are given by the relations \eqref{eq:isothermal_constitutive_rho_ghat}. To simplify our subsequent discussion, we introduce the first Piola-Kirchhoff stress $\bm P := J \bm \sigma \bm F^{-T}$, and $\hat{\bm P} := J \bm \sigma_{dev} \bm F^{-T}.$

\begin{remark}
Although upon first glance, the mass equation \eqref{eq:stabilized_solids_strong_form_pressure} may look similar to the pressure rate equation with artificial compressibility proposed by A.J. Chorin in \cite{Chorin1967}, the concept is different. The isothermal compressibility coefficient $\beta_{\theta}$ here is a real physical quantity, not a numerical artifact. It is zero for fully incompressible materials.
\end{remark}

\subsection{Variational multiscale formulation}
\label{subsec:vms_solids}
Based on the VMS formulation derived in Section \ref{sec:VMS}, the formulation for the strong-form problem \eqref{eq:stabilized_solids_strong_form_kinematic}-\eqref{eq:stabilized_solids_strong_form_momentum} can be constructed conveniently. Consider a discretization of the current domain into finite elements. The union of element interiors is denoted by $\Omega_{\bm x}^{\prime}$. Let us denote the finite dimensional trial solution spaces for the solid displacement, pressure, and velocity in the current domain as $\mathcal S_{\bm u_h}$, $\mathcal S_{p_h}$, and $\mathcal S_{\bm v_h}$, respectively. We assume that functions in the trial spaces satisfy the Dirichlet boundary conditions \eqref{eq:stabilized_solids_strong_form_dirichlet_u}-\eqref{eq:stabilized_solids_strong_form_dirichlet_v} on $\Gamma_{\bm x}^g$. Let $\mathcal V_{\bm u_h}$, $\mathcal V_{p_h}$, and $\mathcal V_{\bm v_h}$ denote the corresponding test function spaces. The VMS formulation can be stated as follows. Find $\bm y_h(t) := \left\lbrace  \bm u_h(t), p_h(t), \bm v_h(t)\right\rbrace^T \in \mathcal S_{\bm u_h} \times \mathcal S_{p_h} \times \mathcal S_{\bm v_h}$ such that for $t\in [0, T)$,
\begin{align}
\label{eq:vms_solids_kinematics_current}
& 0 = \mathbf B_k\left( \bm w_{\bm u_h}; \dot{\bm y}_h, \bm y_h  \right) := \int_{\Omega_{\bm x}} \bm w_{\bm u_h} \cdot \left( \frac{d\bm u_h}{dt} - \bm v_h \right) d\Omega_{\bm x}, \displaybreak[2]\\
\label{eq:vms_solids_mass_current}
& 0 = \mathbf B_p\left( w_{p_h}; \dot{\bm y}_h, \bm y_h  \right) := \int_{\Omega_{\bm x}} w_{p_h} \beta_{\theta}(p_h) \frac{dp_h}{dt} + w_{p_h} \nabla_{\bm x} \cdot \bm v_h d\Omega_{\bm x} - \int_{\Omega_{\bm x}^{\prime}} \nabla_{\bm x} w_{p_h} \cdot \bm v^{\prime}  d\Omega_{\bm x}, \displaybreak[2] \\
\label{eq:vms_solids_momentum_current}
& 0 = \mathbf B_m\left( \bm w_{\bm v_h}; \dot{\bm y}_h, \bm y_h  \right) := \int_{\Omega_{\bm x}} \bm w_{\bm v_h} \cdot \rho(p_h) \frac{d\bm v_h}{dt} + \nabla_{\bm x} \bm w_{\bm v_h} : \bm \sigma_{dev}(\bm u_h) - \nabla_{\bm x} \cdot \bm w_{\bm v_h} p_h - \bm w_{\bm v_h} \cdot \rho(p_h)  \bm b d\Omega_{\bm x}  \displaybreak[2] \\
& \hspace{3.5cm} - \int_{\Gamma_{\bm x}^{h}} \bm w_{\bm v_h} \cdot \bm h d\Gamma_{\bm x} - \int_{\Omega_{\bm x}^{\prime}} \nabla_{\bm x} \cdot \bm w_{\bm v_h} p^{\prime} d\Omega_{\bm x}, \displaybreak[2] \nonumber \\
& \bm v^{\prime} := -\bm \tau_M \left( \rho(p_h) \frac{d\bm v_h}{dt} - \nabla_{\bm x} \cdot \bm \sigma_{dev}(\bm u_h) + \nabla_{\bm x} p_h - \rho(p_h) \bm b \right), \displaybreak[2]  \\
\label{eq:vms_solids_p_prime}
& p^{\prime} := -\tau_{C} \left( \beta_{\theta}(p_h) \frac{dp_h}{d t} + \nabla_{\bm x} \cdot \bm v_h \right), 
\end{align}
for $\forall \left\lbrace  \bm w_{\bm u_h} , w_{p_h}, \bm w_{\bm v_h}\right\rbrace \in \mathcal V_{\bm u_h} \times \mathcal V_{p_h} \times \mathcal V_{\bm v_h}$, with $\bm y_h(0) = \left\lbrace \bm u_{h0}, p_{h0}, \bm v_{h0} \right\rbrace^T$. Here $\bm u_{h0}$, $p_{h0}$, and $\bm v_{h0}$ are the $\mathcal L^2$ projections of the initial data onto the finite dimensional spaces $\mathcal S_{\bm u_h}$, $\mathcal S_{p_h}$, and $\mathcal S_{\bm v_h}$, respectively. The subscript $h$ denotes a mesh parameter. In the above and henceforth, the formulations for the kinematic equations, the mass equation, and the linear momentum equations are indicated by the subscripts $k$, $p$ and $m$, respectively. 
\begin{remark}
The VMS formulation \eqref{eq:vms_solids_kinematics_current}-\eqref{eq:vms_solids_p_prime} is derived directly from \eqref{eq:VMS_final_abstract_framework} by adopting the finite dimensional trial solution spaces as $\overline{\bsfV}$ and taking the following assumptions, 
\begin{align*}
&(1) \hspace{5mm} \left(\bm w_{\bm u_h}, \bm v^{\prime} \right)_{\Omega_{\bm x}} = 0; \displaybreak[2] \\
&(2) \hspace{5mm} \left(\bm w_{\bm u_h} , d\bm u^{\prime} /dt \right)_{\Omega_{\bm x}} = \left(w_{p_h} , \beta_{\theta}(p_h+p^{\prime})dp^{\prime} /dt \right)_{\Omega_{\bm x}} = \left(\bm w_{\bm v_h} , \rho(p_h+p^{\prime})d\bm v^{\prime} /dt \right)_{\Omega_{\bm x}} = 0;  \displaybreak[2]  \\
&(3) \hspace{5mm} \bm u^{\prime} = \bm v^{\prime} = \bm 0 \mbox{ on the boundary }; \displaybreak[2] \\
&(4) \hspace{5mm} \left(w_{p_h}, \beta_{\theta}(p_h+p^{\prime})dp_h/dt \right)_{\Omega_{\bm x}} = \left(w_{p_h}, \beta_{\theta}(p_h)dp_h/dt \right)_{\Omega_{\bm x}}; \displaybreak[2] \\
&(5) \hspace{5mm} \left( \bm w_{\bm v_h}, \rho(p_h+p^{\prime})d\bm v_h/dt \right)_{\Omega_{\bm x}} = \left( \bm w_{\bm v_h}, \rho(p_h)d\bm v_h/dt \right)_{\Omega_{\bm x}}; \displaybreak[2] \\
&(6) \hspace{5mm} \left( \nabla_{\bm x} \bm w_{\bm v_h}, \bm \sigma_{dev}\left(\bm u_h + \bm u^{\prime} \right) \right)_{\Omega_{\bm x}} = \left( \nabla_{\bm x} \bm w_{\bm v_h}, \bm \sigma_{dev}\left(\bm u_h \right) \right)_{\Omega_{\bm x}}.
\end{align*}
The assumptions are adopted to simplify the numerical model. Similar assumptions have been made in the residual-based VMS modeling for turbulence \cite{Bazilevs2007a}.
\end{remark}

The VMS formulation \eqref{eq:vms_solids_kinematics_current}-\eqref{eq:vms_solids_p_prime} can be pulled back to the material frame of reference. To obtain that, we define the test functions defined in the material frame of reference as $\bm W_{\bm U_h}(\bm X, t) := \bm w_{\bm u_h}(\bm \varphi(\bm X,t),t)$, $W_{P_h}(\bm X,t) := w_{p_h}(\bm \varphi(\bm X,t), t)$, $\bm W_{\bm V_h}(\bm X, t) := \bm w_{\bm v_h}(\bm \varphi(\bm X,t),t)$,
and we define
\begin{align*}
& \bm U_h(\bm X, t) := \bm u_h( \bm \varphi(\bm X,t), t ), && P_h(\bm X, t) := p_h(\bm \varphi(\bm X,t),t), && \bm V_h(\bm X, t) := \bm v_h( \bm \varphi(\bm X,t), t ), \\
& \bm B(\bm X, t) := \bm b(\bm \varphi(\bm X,t),t), && \bm H(\bm X, t) := \bm h(\bm \varphi(\bm X,t),t), && \bm G(\bm X, t) := \bm g(\bm \varphi(\bm X,t),t).
\end{align*}
The corresponding trial solution spaces on $\Omega_{\bm X}$ are denoted as $\mathcal S_{\bm U_h}$, $\mathcal S_{P_h}$, and $\mathcal S_{\bm V_h}$; the corresponding test function spaces are denoted as $\mathcal V_{\bm U_h}$, $\mathcal V_{P_h}$, and $\mathcal V_{\bm V_h}$. The VMS formulation in the material frame of reference can be stated as follows. Find $\bm Y_h(t) := \left\lbrace \bm U_h(t), P_h(t), \bm V_h(t) \right\rbrace^T \in \mathcal S_{\bm U_h} \times \mathcal S_{P_h} \times \mathcal S_{\bm V_h}$, such that for $t\in [0, T)$,
\begin{align}
\label{eq:vms_solids_kinematics_material}
& 0 =  \mathbf B_k\left( \bm W_{\bm U_h} ; \dot{\bm Y}_h,  \bm Y_h \right) := \int_{\Omega_{\bm X}} J_h \bm W_{\bm U_h} \cdot \left( \frac{d\bm U_h}{dt} - \bm V_h \right) d\Omega_{\bm X}, \displaybreak[2] \\
\label{eq:vms_solids_mass_material}
& 0 =  \mathbf B_p\left( W_{P_h} ; \dot{\bm Y}_h,  \bm Y_h \right) := \int_{\Omega_{\bm X}} J_h W_{P_h} \beta_{\theta}(P_h)\frac{dP}{dt} + W_{P_h} \nabla_{\bm X}\bm V_h : \left(J_h \bm F_h^{-T} \right) d\Omega_{\bm X}  \nonumber \\
& \hspace{0.3cm} + \int_{\Omega^{\prime}_{\bm X}} \left(\nabla_{\bm X} W_{P_h} \bm F_h^{-1} \right) \cdot \bm \tau_M \left( J_h \rho(P_h) \frac{d\bm V_h}{dt} - \nabla_{\bm X} \cdot \hat{\bm P}(\bm U_h) + \left( J_h \bm F_h^{-T} \right) \nabla_{\bm X} P_h  - J_h \rho(P_h) \bm B \right) d\Omega_{\bm X}, \displaybreak[2]  \\
\label{eq:vms_solids_momentum_material}
& 0 =  \mathbf B_m\left( \bm W_{\bm V_h} ; \dot{\bm Y}_h,  \bm Y_h \right) \nonumber \\
& \hspace{2mm} := \int_{\Omega_{\bm X}} \bm W_{\bm V_h} \cdot \left( J_h \rho(P_h) \frac{d\bm V_h}{dt} \right)  + \nabla_{\bm X}\bm W_{\bm V_h} : \hat{\bm P}(\bm U_h) - \nabla_{\bm X} \bm W_{\bm V_h} : \left(J_h \bm F_h^{-T} \right) P_h - \bm W_{\bm V_h} \cdot \left(J_h \rho(P_h) \bm B \right) d\Omega_{\bm X}  \nonumber \\
& \hspace{0.5cm}  - \int_{\Gamma_{\bm X}^H} \bm W_{\bm V_h} \cdot \bm H d\Gamma_{\bm X} + \int_{\Omega^{\prime}_{\bm X}} \left( \nabla_{\bm X} \bm W_{\bm V_h} : \bm F_h^{-T} \right) \tau_C \left( J_h \beta_{\theta}(P_h) \frac{dP_h}{dt} + \nabla_{\bm X} \bm V_h : \left( J_h \bm F_h^{-T} \right) \right) d\Omega_{\bm X},
\end{align}
for $\forall \left\lbrace \bm W_{\bm U_h}, W_{P_h}, \bm W_{\bm V_h} \right\rbrace \in \mathcal V_{\bm U_h} \times \mathcal V_{P_h} \times \mathcal V_{\bm V_h}$, with $J_h = \textup{det}\left( \bm F_h \right)$, $\bm F_h = \nabla_{\bm X} \bm U_h + \bm I_{n_d}$, $\bm I_{n_d}$ being the $n_d \times n_d$ identity matrix, $\dot{\bm Y}_h(t) := \left\lbrace d\bm U_h/dt, dP_h/dt, d\bm V_h/dt \right\rbrace^T$, and $\bm Y_h(0) = \left\lbrace \bm U_{h0}, P_{h0}, \bm V_{h0} \right\rbrace^T$. Here $\bm U_{h0}$, $P_{h0}$, and $\bm V_{h0}$ are the $\mathcal L^2$ projections of the initial data onto the finite dimensional spaces $\mathcal S_{\bm U_h}$, $\mathcal S_{P_h}$, and $\mathcal S_{\bm V_h}$, respectively.

\begin{remark}
If piecewise linear elements are used for the spatial discretization, the term,
\begin{align*}
\int_{\Omega^{\prime}_{\bm x}} \nabla_{\bm x} w_{p_h} \cdot \tau_m \nabla_{\bm x} \cdot \bm \sigma_{dev}(\bm u_h) d\Omega_{\bm x},
\end{align*}
in \eqref{eq:vms_solids_mass_current} and the term
\begin{align*}
\int_{\Omega^{\prime}_{\bm X}} \left(\nabla_{\bm X} W_{P_h} \bm F_h^{-1} \right) \cdot \tau_m \left( \nabla_{\bm X} \cdot \hat{\bm P}(\bm U^h)  \right) d\Omega_{\bm X},
\end{align*}
in \eqref{eq:vms_solids_mass_material} vanish.
\end{remark}

\subsection{Temporal discretization}
\label{sec:time_discretization}
A desirable time integration algorithm for structural dynamics should satisfy at least three requirements: unconditional stability, second-order accuracy, and dissipation on high-frequency modes. The generalized-$\alpha$ method is one type of time integration schemes that satisfy these requirements \cite{Chung1993,Jansen2000,Kadapa2017}. For this reason, we adopt this method for time integration in this work. The time interval $[0,T)$ is divided into a set of $n_{ts}$ subintervals of size $\Delta t_n := t_{n+1} - t_n$ delimited by a discrete time vector $\left\lbrace t_n \right\rbrace_{n=0}^{n_{ts}}$. The solution vector and its first-order time derivative evaluated at the time step $t_n$ are denoted as $\bm Y_n$ and $\dot{\bm Y}_n$; the basis function for the discrete function spaces is denoted as $\hat{N}_A$. With those, the residual vectors can be represented as
\begin{align*}
\boldsymbol{\mathrm R}_k\left(\dot{\bm Y}_{n}, \bm Y_{n} \right) &: = \left\lbrace \mathbf B_k\left( \hat{N}_A \bm e_i ;  \dot{\bm Y}_{n}, \bm Y_{n} \right) \right\rbrace , \displaybreak[2] \\
\boldsymbol{\mathrm R}_p\left(\dot{\bm Y}_{n}, \bm Y_{n} \right) &: =\left\lbrace \mathbf B_p\left( \hat{N}_A ;  \dot{\bm Y}_{n}, \bm Y_{n} \right) \right\rbrace, \displaybreak[2] \\
\boldsymbol{\mathrm R}_m\left(\dot{\bm Y}_{n}, \bm Y_{n} \right) &: =\left\lbrace \mathbf B_m\left( \hat{N}_A \bm e_i ;  \dot{\bm Y}_{n}, \bm Y_{n} \right) \right\rbrace.
\end{align*}
The fully discrete problem can be stated as follows. At time step $t_n$, given $\dot{\bm Y}_n$, $\bm Y_n$, the time step $\Delta t_n$, and the parameters $\alpha_m$, $\alpha_f$, and $\gamma$, find $\dot{\bm Y}_{n+1}$ and $\bm Y_{n+1}$ such that 
\begin{align}
\label{eq:gen_alpha_fully_discrete_solid_kinematic_eqn}
& \boldsymbol{\mathrm R}_k(\dot{\bm Y}_{n+\alpha_m}, \bm Y_{n+\alpha_f}) = \bm 0, \displaybreak[2] \\
\label{eq:gen_alpha_fully_discrete_solid_pressure_rate}
& \boldsymbol{\mathrm R}_p(\dot{\bm Y}_{n+\alpha_m}, \bm Y_{n+\alpha_f}) = \bm 0, \displaybreak[2] \\
\label{eq:gen_alpha_fully_discrete_solid_linear_momentum}
& \boldsymbol{\mathrm R}_m(\dot{\bm Y}_{n+\alpha_m}, \bm Y_{n+\alpha_f}) = \bm 0, \displaybreak[2] \\
\label{eq:gen_alpha_def_Y_n_plus_1}
& \bm Y_{n+1} = \bm Y_{n} + \Delta t_n \dot{\bm Y}_n, + \gamma \Delta t_n \left( \dot{\bm Y}_{n+1} - \dot{\bm Y}_{n}\right), \displaybreak[2] \\
\label{eq:gen_alpha_def_Y_n_alpha_m}
& \dot{\bm Y}_{n+\alpha_m} = \dot{\bm Y}_{n} + \alpha_m \left(\dot{\bm Y}_{n+1} - \dot{\bm Y}_{n} \right), \displaybreak[2] \\
\label{eq:gen_alpha_def_Y_n_alpha_f}
& \bm Y_{n+\alpha_f} = \bm Y_{n} + \alpha_f \left( \bm Y_{n+1} - \bm Y_{n} \right).
\end{align}
The parameters $\alpha_m$, $\alpha_f$, and $\gamma$ define the time integration scheme. It has been shown in \cite{Jansen2000} that, for linear problems, second-order accuracy in time can be achieved, provided
\begin{align*}
\gamma = \frac{1}{2} + \alpha_m - \alpha_f,
\end{align*}
and unconditional stability can be attained if
\begin{align*}
\alpha_m \geq \alpha_f \geq \frac12.
\end{align*}
The generalized-$\alpha$ method allows one to control the damping effect on the high frequency modes while maintain the accuracy of the temporal scheme. This desired property is achieved for first-order linear equations by choosing the parameters as
\begin{align}
\label{eq:gen_alpha_1st_order_ODE_parametrization_in_varrho}
\alpha_m = \frac{1}{2}\left( \frac{3-\varrho_{\infty}}{1+\varrho_{\infty}} \right), \quad \alpha_f = \frac{1}{1+\varrho_{\infty}}, \quad \gamma = \frac{1}{1+\varrho_{\infty}},
\end{align}
wherein $\varrho_{\infty}$ denotes the spectral radius of the amplification matrix at the highest mode \cite{Chung1993,Jansen2000}. In long-time nonlinear calculations, it is advisable to choose $\varrho_{\infty}$ strictly less than one to avoid detrimental effects from the high frequency modes \cite{Bazilevs2008,Jansen2000,Liu2013a}. Unless otherwise specified, we choose $\varrho_{\infty} = 0.5$ in this work.
\begin{remark}
Using the generalized-$\alpha$ method for the first-order structural dynamic problem has recently been shown to enjoy improved numerical properties in comparison with the generalized-$\alpha$ method applied for the second-order structural dynamic systems \cite{Kadapa2017}.
\end{remark}

\subsection{A segregated algorithm}
\label{sec:solid_segregated_algorithm}
It can be shown that the fully discrete kinematic equation \eqref{eq:gen_alpha_fully_discrete_solid_kinematic_eqn} is equivalent to 
\begin{align}
\label{eq:gen_alpha_fully_discrete_solid_kinematics_new}
\bar{\boldsymbol{\mathrm R}}_k(\dot{\bm Y}_{n+\alpha_m}, \bm Y_{n+\alpha_f}) := \dot{\bm U}_{n+\alpha_m} - \bm V_{n+\alpha_f} = \bm 0.
\end{align}
Invoking the relations \eqref{eq:gen_alpha_def_Y_n_plus_1}-\eqref{eq:gen_alpha_def_Y_n_alpha_f}, the left-hand side of above equation can be written explicitly as
\begin{align*}
\bar{\boldsymbol{\mathrm R}}_k(\dot{\bm Y}_{n+\alpha_m}, \bm Y_{n+\alpha_f}) = \frac{\alpha_m}{\gamma \Delta t_n} \left( \bm U_{n+1} - \bm U_n \right) + \left( 1 - \frac{\alpha_m}{\gamma} \right) \dot{\bm U}_n - \alpha_f \bm V_{n+1} - (1-\alpha_f) \bm V_n.
\end{align*}
Equations \eqref{eq:gen_alpha_fully_discrete_solid_pressure_rate}, \eqref{eq:gen_alpha_fully_discrete_solid_linear_momentum}, and \eqref{eq:gen_alpha_fully_discrete_solid_kinematics_new} constitute a system of nonlinear algebraic equations and can be solved by the Newton-Raphson method. The solution vector $\bm Y_{n+1}$ at the time step $t_{n+1}$ with $n=0, \cdots, n_{ts}-1$ is solved iteratively by means of a predictor multi-corrector scheme. We define 
\begin{align*}
\bm Y_{n+1,(i)} := \left\lbrace \bm U_{n+1,(i)}, P_{n+1,(i)}, \bm V_{n+1,(i)} \right\rbrace^T
\end{align*}
as the solution vector at the Newton-Raphson iteration step $i=0, \cdots, i_{max}$. We denote that the residual vectors evaluated at the iteration stage $i$ as
\begin{align}
\label{eq:definition_residual_R_at_intermediate_i}
\boldsymbol{\mathrm R}_{(i)} &:= \left\lbrace \bar{\boldsymbol{\mathrm R}}_{k,(i)}, \boldsymbol{\mathrm R}_{p,(i)}, \boldsymbol{\mathrm R}_{m,(i)} \right\rbrace^T, \\
\bar{\boldsymbol{\mathrm R}}_{k,(i)} &:= \bar{\boldsymbol{\mathrm R}}_k\left( \dot{\bm Y}_{n+\alpha_m, (i)}, \bm Y_{n+\alpha_f, (i)} \right), \\
\boldsymbol{\mathrm R}_{p,(i)} &:= \boldsymbol{\mathrm R}_p\left( \dot{\bm Y}_{n+\alpha_m, (i)}, \bm Y_{n+\alpha_f, (i)} \right), \\
\label{eq:definition_residual_R_m_at_intermediate_i}
\boldsymbol{\mathrm R}_{m,(i)} &:= \boldsymbol{\mathrm R}_m\left( \dot{\bm Y}_{n+\alpha_m, (i)}, \bm Y_{n+\alpha_f, (i)} \right).
\end{align}
The tangent matrix of the above nonlinear system is denoted as
\begin{align*}
\boldsymbol{\mathrm K}_{(i)} =
\begin{bmatrix}
\boldsymbol{\mathrm K}_{k,(i), \dot{\bm U}} & \boldsymbol{\mathrm K}_{k,(i), \dot{P}} & \boldsymbol{\mathrm K}_{k,(i), \dot{\bm V}} \\[0.3mm]
\boldsymbol{\mathrm K}_{p,(i), \dot{\bm U}} & \boldsymbol{\mathrm K}_{p,(i), \dot{P}} & \boldsymbol{\mathrm K}_{p,(i), \dot{\bm V}} \\[0.3mm]
\boldsymbol{\mathrm K}_{m,(i), \dot{\bm U}} & \boldsymbol{\mathrm K}_{m,(i), \dot{P}} & \boldsymbol{\mathrm K}_{m,(i), \dot{\bm V}}
\end{bmatrix},
\end{align*}
wherein
\begin{align}
\label{eq:definition_Kk_U_block}
& \boldsymbol{\mathrm K}_{k,(i), \dot{\bm U}} := \alpha_m \frac{\partial \bar{\boldsymbol{\mathrm R}}_{k,(i)}\left( \dot{\bm Y}_{n+\alpha_m, (i)}, \bm Y_{n+\alpha_f, (i)} \right)}{\partial \dot{\bm U}_{n+\alpha_m}} = \alpha_m \boldsymbol{\mathrm I}, \displaybreak[2] \\
& \boldsymbol{\mathrm K}_{k,(i), \dot{P}} := \bm 0, \displaybreak[2] \\
\label{eq:definition_Kk_V_block}
& \boldsymbol{\mathrm K}_{k,(i), \dot{\bm V}} := \alpha_f \gamma \Delta t_n \frac{\partial \bar{\boldsymbol{\mathrm R}}_{k,(i)}\left( \dot{\bm Y}_{n+\alpha_m, (i)}, \bm Y_{n+\alpha_f, (i)} \right)}{\partial \bm V_{n+\alpha_f}} = -\alpha_f \gamma \Delta t_n \boldsymbol{\mathrm I}, \displaybreak[2] \\
\label{eq:definition_Kp_U_block}
& \boldsymbol{\mathrm K}_{p,(i), \dot{\bm U}} := \alpha_f \gamma \Delta t_n \frac{\partial \boldsymbol{\mathrm R}_{p,(i)}\left( \dot{\bm Y}_{n+\alpha_m, (i)}, \bm Y_{n+\alpha_f, (i)} \right)}{\partial \bm U_{n+\alpha_f}}, \displaybreak[2] \\
& \boldsymbol{\mathrm K}_{p,(i), \dot{P}} := \alpha_m \frac{\partial \boldsymbol{\mathrm R}_{p,(i)}\left( \dot{\bm Y}_{n+\alpha_m, (i)}, \bm Y_{n+\alpha_f, (i)} \right)}{\partial \dot{P}_{n+\alpha_m}} + \alpha_f \gamma \Delta t_n \frac{\partial \boldsymbol{\mathrm R}_{p,(i)}\left( \dot{\bm Y}_{n+\alpha_m, (i)}, \bm Y_{n+\alpha_f, (i)} \right)}{\partial P_{n+\alpha_f}}, \displaybreak[2] \\
& \boldsymbol{\mathrm K}_{p,(i), \dot{\bm V}} := \alpha_m \frac{\partial \boldsymbol{\mathrm R}_{p,(i)}\left( \dot{\bm Y}_{n+\alpha_m, (i)}, \bm Y_{n+\alpha_f, (i)} \right)}{\partial \dot{\bm V}_{n+\alpha_m}} + \alpha_f \gamma \Delta t_n \frac{\partial \boldsymbol{\mathrm R}_{p,(i)}\left( \dot{\bm Y}_{n+\alpha_m, (i)}, \bm Y_{n+\alpha_f, (i)} \right)}{\partial \bm V_{n+\alpha_f}}, \displaybreak[2] \\
& \boldsymbol{\mathrm K}_{m,(i), \dot{\bm U}} := \alpha_f \gamma \Delta t_n \frac{\partial \boldsymbol{\mathrm R}_{m,(i)}\left( \dot{\bm Y}_{n+\alpha_m, (i)}, \bm Y_{n+\alpha_f, (i)} \right)}{\partial \bm U_{n+\alpha_f}}, \displaybreak[2] \\
& \boldsymbol{\mathrm K}_{m,(i), \dot{P}} := \alpha_m \frac{\partial \boldsymbol{\mathrm R}_{m,(i)}\left( \dot{\bm Y}_{n+\alpha_m, (i)}, \bm Y_{n+\alpha_f, (i)} \right)}{\partial \dot{P}_{n+\alpha_m}} + \alpha_f \gamma \Delta t_n \frac{\partial \boldsymbol{\mathrm R}_{m,(i)}\left( \dot{\bm Y}_{n+\alpha_m, (i)}, \bm Y_{n+\alpha_f, (i)} \right)}{\partial P_{n+\alpha_f}}, \displaybreak[2] \\
\label{eq:definition_Km_V_block}
& \boldsymbol{\mathrm K}_{m,(i), \dot{\bm V}} := \alpha_m \frac{\partial \boldsymbol{\mathrm R}_{m,(i)}\left( \dot{\bm Y}_{n+\alpha_m, (i)}, \bm Y_{n+\alpha_f, (i)} \right)}{\partial \dot{\bm V}_{n+\alpha_m}} + \alpha_f \gamma \Delta t_n \frac{\partial \boldsymbol{\mathrm R}_{m,(i)}\left( \dot{\bm Y}_{n+\alpha_m, (i)}, \bm Y_{n+\alpha_f, (i)} \right)}{\partial \bm V_{n+\alpha_f}}.
\end{align}
Notice that \eqref{eq:definition_Kk_U_block} and \eqref{eq:definition_Kk_V_block} are diagonal matrices. This special structure of the two sub-matrices can be exploited to perform a block decomposition of the tangent matrix $\boldsymbol{\mathrm K}$:
\begin{align*}
\boldsymbol{\mathrm K}_{(i)} =
\begin{bmatrix}
\boldsymbol{\mathrm I} & \bm 0 & \bm 0 \\[0.3mm]
\frac{1}{\alpha_m}\boldsymbol{\mathrm K}_{p,(i), \dot{\bm U}} & \boldsymbol{\mathrm K}_{p,(i), \dot{P}} & \boldsymbol{\mathrm K}_{p,(i), \dot{\bm V}} + \frac{\alpha_f \gamma \Delta t_n}{\alpha_m} \boldsymbol{\mathrm K}_{p,(i), \dot{\bm U}} \\[0.3mm]
\frac{1}{\alpha_m}\boldsymbol{\mathrm K}_{m,(i), \dot{\bm U}} & \boldsymbol{\mathrm K}_{m,(i), \dot{P}} & \boldsymbol{\mathrm K}_{m,(i), \dot{\bm V}} +  \frac{\alpha_f \gamma \Delta t_n}{\alpha_m} \boldsymbol{\mathrm K}_{m,(i), \dot{\bm U}}
\end{bmatrix}
\begin{bmatrix}
\alpha_m \boldsymbol{\mathrm I} & \bm 0 & -\alpha_f \gamma \Delta t_n \boldsymbol{\mathrm I} \\[0.3mm]
\bm 0 & \boldsymbol{\mathrm I} & \bm 0 \\[0.3mm]
\bm 0 & \bm 0 & \boldsymbol{\mathrm I}
\end{bmatrix}.
\end{align*}
With the above decomposition, the original linear system of equations for the Newton-Raphson iteration,
\begin{align*}
\boldsymbol{\mathrm K}_{(i)} \Delta \dot{\bm Y}_{n+1,(i)} = - \boldsymbol{\mathrm R}_{(i)}.
\end{align*}
can be solved in a two-stage segregated algorithm. In stage one, the following linear system of equations are solved to obtain the intermediate unknowns $\big[\Delta \dot{\bm U}^{*}_{n+1,(i)},  \Delta \dot{P}^{*}_{n+1,(i)}, \Delta \dot{\bm V}^{*}_{n+1,(i)} \big]^T$.
\begin{align}
\label{eq:newton_linear_system_separated_eqn_1}
\begin{bmatrix}
\boldsymbol{\mathrm I} & \bm 0 & \bm 0 \\[0.3mm]
\frac{1}{\alpha_m}\boldsymbol{\mathrm K}_{p,(i), \dot{\bm U}} & \boldsymbol{\mathrm K}_{p,(i), \dot{P}} & \boldsymbol{\mathrm K}_{p,(i), \dot{\bm V}} + \frac{\alpha_f \gamma \Delta t_n}{\alpha_m} \boldsymbol{\mathrm K}_{p,(i), \dot{\bm U}} \\[0.3mm]
\frac{1}{\alpha_m}\boldsymbol{\mathrm K}_{m,(i), \dot{\bm U}} & \boldsymbol{\mathrm K}_{m,(i), \dot{P}} & \boldsymbol{\mathrm K}_{m,(i), \dot{\bm V}} +  \frac{\alpha_f \gamma \Delta t_n}{\alpha_m} \boldsymbol{\mathrm K}_{m,(i), \dot{\bm U}}
\end{bmatrix}
\begin{bmatrix}
\Delta \dot{\bm U}^{*}_{n+1,(i)}  \\[0.3mm]
\Delta \dot{P}^{*}_{n+1,(i)} \\[0.3mm]
\Delta \dot{\bm V}^{*}_{n+1,(i)}
\end{bmatrix}
= -
\begin{bmatrix}
\bar{\boldsymbol{\mathrm R}}_{k,(i)} \\[0.3mm] 
\boldsymbol{\mathrm R}_{p,(i)} \\[0.3mm]
\boldsymbol{\mathrm R}_{m,(i)}
\end{bmatrix}.
\end{align}
In stage two, one solves the upper triangular matrix problem
\begin{align}
\label{eq:newton_linear_system_separated_eqn_2}
\begin{bmatrix}
\alpha_m \boldsymbol{\mathrm I} & \bm 0 & -\alpha_f \gamma \Delta t_n \boldsymbol{\mathrm I} \\[0.3mm]
\bm 0 & \boldsymbol{\mathrm I} & \bm 0 \\[0.3mm]
\bm 0 & \bm 0 & \boldsymbol{\mathrm I}
\end{bmatrix}
\begin{bmatrix}
\Delta \dot{\bm U}_{n+1,(i)}  \\[0.3mm]
\Delta \dot{P}_{n+1,(i)} \\[0.3mm]
\Delta \dot{\bm V}_{n+1,(i)}
\end{bmatrix}
= 
\begin{bmatrix}
\Delta \dot{\bm U}^{*}_{n+1,(i)}  \\[0.3mm]
\Delta \dot{P}^{*}_{n+1,(i)} \\[0.3mm]
\Delta \dot{\bm V}^{*}_{n+1,(i)}
\end{bmatrix}
\end{align}
to get the increments $\big[\Delta \dot{\bm U}_{n+1,(i)},  \Delta \dot{P}_{n+1,(i)}, \Delta \dot{\bm V}_{n+1,(i)} \big]^T$ for the iteration step $i$. From \eqref{eq:newton_linear_system_separated_eqn_1} and \eqref{eq:newton_linear_system_separated_eqn_2}, we have the following observations,
\begin{align*}
& \alpha_m \Delta \dot{\bm U}_{n+1,(i)} - \alpha_f \gamma \Delta t_n \Delta \dot{\bm V}_{n+1,(i)} = \Delta \dot{\bm U}^{*}_{n+1,(i)} = - \bar{\boldsymbol{\mathrm R}}^k_{(i)}, \\
& \Delta \dot{P}_{n+1,(i)} = \Delta \dot{P}^{*}_{n+1,(i)}, \\
& \Delta \dot{\bm V}_{n+1,(i)} = \Delta \dot{\bm V}^{*}_{n+1,(i)}.
\end{align*}
With the above relations, the linear system \eqref{eq:newton_linear_system_separated_eqn_1} can be reduced to a smaller problem,
\begin{align}
\label{eq:newton_linear_system_smaller_eqn_1}
&
\begin{bmatrix}
\boldsymbol{\mathrm K}_{p,(i), \dot{P}} & \boldsymbol{\mathrm K}_{p,(i), \dot{\bm V}} + \frac{\alpha_f \gamma \Delta t_n}{\alpha_m} \boldsymbol{\mathrm K}_{p,(i), \dot{\bm U}} \\[0.3mm]
\boldsymbol{\mathrm K}_{m,(i), \dot{P}} & \boldsymbol{\mathrm K}_{m,(i), \dot{\bm V}} +  \frac{\alpha_f \gamma \Delta t_n}{\alpha_m} \boldsymbol{\mathrm K}_{m,(i), \dot{\bm U}}
\end{bmatrix}
\begin{bmatrix}
\Delta \dot{P}_{n+1,(i)} \\[0.3mm]
\Delta \dot{\bm V}_{n+1,(i)}
\end{bmatrix}
= -
\begin{bmatrix}
\boldsymbol{\mathrm R}_{p,(i)} - \frac{1}{\alpha_m}  \boldsymbol{\mathrm K}_{p,(i), \dot{\bm U}}\bar{\boldsymbol{\mathrm R}}_{k,(i)} \\[0.3mm]
\boldsymbol{\mathrm R}_{m,(i)}- \frac{1}{\alpha_m}  \boldsymbol{\mathrm K}_{m,(i), \dot{\bm U}}\bar{\boldsymbol{\mathrm R}}_{k,(i)}
\end{bmatrix}.
\end{align}
The linear system \eqref{eq:newton_linear_system_separated_eqn_2} can be reduced to the relation
\begin{align}
\label{eq:newton_linear_system_smaller_eqn_2}
& \Delta \dot{\bm U}_{n+1,(i)} = \frac{\alpha_f \gamma \Delta t_n}{\alpha_m} \Delta \dot{\bm V}_{n+1,(i)} - \frac{1}{\alpha_m}\bar{\boldsymbol{\mathrm R}}^k_{(i)}.
\end{align}
Therefore, the solution procedure can be consistently rewritten into two smaller problems. One first solves the equation \eqref{eq:newton_linear_system_smaller_eqn_1} to obtain $\big[\Delta \dot{P}_{n+1,(i)}, \Delta \dot{\bm V}_{n+1,(i)} \big]^T$. Then the displacement increment $\Delta \dot{\bm U}_{n+1,(i)}$ can be obtained through the relation \eqref{eq:newton_linear_system_smaller_eqn_2}. We can summarize the above discussion as the following segregated predictor multi-corrector algorithm.

\noindent \textbf{Predictor stage}: Set:
\begin{align*}
\bm Y_{n+1, (0)} &= \bm Y_{n}, \\
\dot{\bm Y}_{n+1, (0)} &= \frac{\gamma - 1}{\gamma} \dot{\bm Y}_{n}.
\end{align*}

\noindent \textbf{Multi-corrector stage}:
Repeat the following steps \(i =1, \dots, i_{max}\):
\begin{enumerate}
\item Evaluate the solution vectors at the intermediate stages:
\begin{align*}
\dot{\bm Y}_{n+\alpha_m, (i)} &= \dot{\bm Y}_n + \alpha_m \left( \dot{\bm Y}_{n+1,(i-1)} - \dot{\bm Y}_n \right), \\
\bm Y_{n+\alpha_f, (i)} &= \bm Y_n + \alpha_f \left( \bm Y_{n+1,(i-1)} - \bm Y_n \right).
\end{align*}

\item Assemble the residual vectors $\boldsymbol{\mathrm R}_{(i)}$ based on \eqref{eq:definition_residual_R_at_intermediate_i}-\eqref{eq:definition_residual_R_m_at_intermediate_i} using the solution evaluated at the intermediate stages.

\item Let $\|\boldsymbol{\mathrm R}_{(i)}\|_{l^2}$ denote the $l^2$-norm of the residual vector. If either one of the following stopping criteria 
\begin{align*}
& \frac{\|\boldsymbol{\mathrm R}_{(i)}\|_{l^2}}{\|\boldsymbol{\mathrm R}_{(0)}\|_{l^2}} \leq tol_R, \qquad \|\boldsymbol{\mathrm R}_{(i)}\|_{l^2} \leq tol_A,
\end{align*}
is satisfied for prescribed tolerances $tol_R$, $tol_A$, set the solution vector at time step $t_{n+1}$ as $\dot{\bm Y}_{n+1} = \dot{\bm Y}_{n+1, (i-1)}$ and $\bm Y_{n+1} = \bm Y_{n+1, (i-1)}$, and exit the multi-corrector stage; otherwise, continue to step 4.

\item Assemble the tangent matrices \eqref{eq:definition_Kp_U_block}-\eqref{eq:definition_Km_V_block}.

\item Solve the linear system of equations \eqref{eq:newton_linear_system_smaller_eqn_1} for $\Delta \dot{P}_{n+1,(i)}$ and $\Delta \dot{\bm V}_{n+1,(i)}$.

\item Obtain $\Delta \dot{\bm U}_{n+1,(i)}$ from the solution $\Delta \dot{\bm V}_{n+1,(i)}$ according to the relation \eqref{eq:newton_linear_system_smaller_eqn_2}.

\item Update the solution vectors as
\begin{align*}
\dot{\bm Y}_{n+1,(i)} &= \dot{\bm Y}_{n+1,(i)} + \Delta \dot{\bm Y}_{n+1,(i)}, \\
\bm Y_{n+1,(i)} &= \bm Y_{n+1,(i)} + \gamma \Delta t_n \Delta \dot{\bm Y}_{n+1,(i)}.
\end{align*}
and return to step 1.
\end{enumerate}
Since the kinematic equation is linear, it is reasonable to expect that this equation can be solved in one Newton-Raphson iteration. Indeed, we have the following proposition.
\begin{proposition}
\label{prop:Rk_2_zero}
In the above algorithm, $\bar{\boldsymbol{\mathrm R}}_{k,(i)} = \bm 0$ for $i \geq 2$.
\end{proposition}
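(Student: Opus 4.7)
The plan is to exploit the linearity of $\bar{\boldsymbol{\mathrm R}}_k$ in $\dot{\bm U}_{n+\alpha_m}$ and $\bm V_{n+\alpha_f}$, and then show that the relation \eqref{eq:newton_linear_system_smaller_eqn_2} is precisely the one that makes a single Newton update zero out the kinematic residual. After one pass of the multi-corrector, every subsequent pass finds a zero kinematic residual to begin with.

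First, I would compute $\bar{\boldsymbol{\mathrm R}}_{k,(i+1)} - \bar{\boldsymbol{\mathrm R}}_{k,(i)}$ using the explicit expression
\begin{align*}
\bar{\boldsymbol{\mathrm R}}_{k,(i)} = \dot{\bm U}_{n+\alpha_m,(i)} - \bm V_{n+\alpha_f,(i)}.
\end{align*}
From the intermediate-stage definitions \eqref{eq:gen_alpha_def_Y_n_alpha_m}--\eqref{eq:gen_alpha_def_Y_n_alpha_f} and the update formulas in step 7 of the algorithm (reading $\dot{\bm Y}_{n+1,(i)} = \dot{\bm Y}_{n+1,(i-1)} + \Delta\dot{\bm Y}_{n+1,(i)}$ and $\bm Y_{n+1,(i)} = \bm Y_{n+1,(i-1)} + \gamma \Delta t_n \Delta\dot{\bm Y}_{n+1,(i)}$), I get
\begin{align*}
\dot{\bm U}_{n+\alpha_m,(i+1)} - \dot{\bm U}_{n+\alpha_m,(i)} &= \alpha_m\, \Delta \dot{\bm U}_{n+1,(i)}, \\
\bm V_{n+\alpha_f,(i+1)} - \bm V_{n+\alpha_f,(i)} &= \alpha_f\gamma\Delta t_n\, \Delta \dot{\bm V}_{n+1,(i)}.
\end{align*}
Subtracting these gives
\begin{align*}
\bar{\boldsymbol{\mathrm R}}_{k,(i+1)} - \bar{\boldsymbol{\mathrm R}}_{k,(i)} = \alpha_m \Delta \dot{\bm U}_{n+1,(i)} - \alpha_f\gamma\Delta t_n \Delta \dot{\bm V}_{n+1,(i)}.
\end{align*}

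Next, I would invoke the reduced update \eqref{eq:newton_linear_system_smaller_eqn_2}, rewritten as
\begin{align*}
\alpha_m \Delta \dot{\bm U}_{n+1,(i)} - \alpha_f\gamma\Delta t_n \Delta \dot{\bm V}_{n+1,(i)} = -\bar{\boldsymbol{\mathrm R}}_{k,(i)},
\end{align*}
which is simply the first block row of the decomposed linear system. Substituting yields the telescoping identity $\bar{\boldsymbol{\mathrm R}}_{k,(i+1)} = \bar{\boldsymbol{\mathrm R}}_{k,(i)} - \bar{\boldsymbol{\mathrm R}}_{k,(i)} = \bm 0$. In particular, after the first corrector step $(i=1)$ has been executed, one has $\bar{\boldsymbol{\mathrm R}}_{k,(2)} = \bm 0$, and by the same identity $\bar{\boldsymbol{\mathrm R}}_{k,(i)} = \bm 0$ for every $i \geq 2$.

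There is no serious obstacle here: the result is essentially a consistency check on the block decomposition. The only subtlety worth emphasizing is that the linear kinematic residual is exactly cancelled by the update because the Schur-complement-type elimination in \eqref{eq:newton_linear_system_smaller_eqn_1}--\eqref{eq:newton_linear_system_smaller_eqn_2} is algebraically equivalent to the original Newton system; once the kinematic constraint is linearly solved (as it is at $i=1$), it remains satisfied throughout the remaining iterations, and the effective work of the multi-corrector reduces to updating the pressure and velocity via \eqref{eq:newton_linear_system_smaller_eqn_1}, with displacements recovered algebraically from \eqref{eq:newton_linear_system_smaller_eqn_2}.
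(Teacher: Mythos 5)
Your proof is correct and rests on exactly the same algebraic fact as the paper's: the first block row of the decomposed system, $\alpha_m \Delta \dot{\bm U}_{n+1,(i)} - \alpha_f\gamma\Delta t_n \Delta \dot{\bm V}_{n+1,(i)} = -\bar{\boldsymbol{\mathrm R}}_{k,(i)}$, is precisely the increment of the kinematic residual between successive corrector passes. The only (cosmetic) difference is that you package this as a single telescoping identity valid for all $i\geq 1$, whereas the paper proves the base case $\bar{\boldsymbol{\mathrm R}}_{k,(2)}=\bm 0$ by direct expansion and then runs an induction using the simplified update; your version makes the induction unnecessary.
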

\noindent The proof of this proposition is given in \ref{appd:proof_prop_Rk_2_zero}. Due to this fact, the right-hand side of \eqref{eq:newton_linear_system_smaller_eqn_1} becomes $-\left[\boldsymbol{\mathrm R}_{p,(i)}, \boldsymbol{\mathrm R}_{m,(i)} \right]^T$, and the relation \eqref{eq:newton_linear_system_smaller_eqn_2} reduces to
\begin{align*}
\Delta \dot{\bm U}_{n+1,(i)} = \frac{\alpha_f \gamma \Delta t_n}{\alpha_m} \Delta \dot{\bm V}_{n+1,(i)}, \mbox{ for } i \geq 2.
\end{align*}
\begin{remark}
In general, $\bar{\boldsymbol{\mathrm R}}^k_{(1)} \neq \bm 0$. In the algorithm we presented, the same-$\boldsymbol{\mathrm Y}$ predictor \cite{Jansen2000} is adopted, and it is straightforward to show that, with this predictor,
\begin{align*}
\bar{\boldsymbol{\mathrm R}}^k_{(1)} &= \frac{\alpha_m}{\gamma \Delta t_n} \left( \bm U_{n+1,(0)} - \bm U_n \right) + \left( 1 - \frac{\alpha_m}{\gamma} \right) \dot{\bm U}_n - \alpha_f \bm V_{n+1,(0)} - (1-\alpha_f) \bm V_n \\
&= \left( 1 - \frac{\alpha_m}{\gamma} \right) \dot{\bm U}_n - \bm V_n.
\end{align*}
If the time-stepping parameters are $\alpha_m = \alpha_f = \gamma = 1$ and the predictors are chosen as
\begin{align*}
\bm U_{n+1,(0)} = \bm U_n, \qquad
\bm V_{n+1,(0)} = \bm 0,
\end{align*}
one can get $\bar{\boldsymbol{\mathrm R}}^k_{(1)} = \bm 0$. Setting $\alpha_m = \alpha_f = \gamma = 1$ corresponds to the backward Euler method.
In \cite{Rossi2016}, a non-traditional predictor is chosen to enforce $\bar{\boldsymbol{\mathrm R}}^k_{(1)} = \bm 0$.
\end{remark}
\begin{remark}
This segregated algorithm based on block decomposition was first proposed by G. Scovazzi and co-authors in \cite{Scovazzi2016}. This solution procedure \eqref{eq:newton_linear_system_smaller_eqn_1}-\eqref{eq:newton_linear_system_smaller_eqn_2} significantly reduces the size of the linear system for the implicit solver. In comparison with the second-order pure-displacement formulation, writing the system into a first-order system does not lead to a significant increase in computational cost. As will be revealed in Section \ref{sec:fsi_temporal}, the first-order formulation is very appealing for FSI problems.
\end{remark}

\subsection{Stabilization parameters}
The design of the stabilization parameter is of critical importance for the behavior of the stabilized finite element formulation. In this work, a practical setting of the stabilization parameters is
\begin{align*}
\bm \tau_M = \tau_M \bm I_{n_d}, \quad \tau_M = c_m \frac{\Delta x}{c \rho}, \quad
\tau_C = c_c c \Delta x \rho,
\end{align*}
wherein $\Delta x$ is the diameter of the circumscribing sphere of the tetrahedral element, $c_m$ and $c_c$ are two non-dimensional parameters, $c$ is the maximum wave speed in the solid body. In this work, the formula for $c$ is chosen based on a small-strain isotropic linear elastic material \cite{Hughes1988,Scovazzi2016}. For compressible materials, $c$ is given by the bulk wave speed
\begin{align*}
c = \sqrt{\frac{\lambda + 2 \mu}{\rho_0}},
\end{align*}
and for incompressible materials, $c$ is given by the shear wave speed
\begin{align*}
c = \sqrt{\frac{\mu}{\rho_0}}.
\end{align*}

\begin{remark}
The stabilization parameter is the crux of the design of the VMS formulation. In \cite{Hughes1988}, the authors proposed two choices of the stabilization parameter for the small-strain elastodynamic problem: $0.5 \Delta x / c$ and $0.5 \Delta t$. In \cite{Gil2016, Lee2014}, the stabilization parameter is designed based on $\Delta t$. In \cite{Scovazzi2016}, the authors proposed a set of stabilization parameters as $c_{s15} \Delta t / 2 c_{CFL}$ or $c_{s15} \Delta x / 2c$, wherein $c_{s15}$ is a non-dimensional parameter, and $c_{CFL}$ is the global CFL number. In \cite{Rossi2016}, the stabilization parameter for compressible materials is designed as
\begin{align*}
\frac{1}{2}\max \left[\frac{\Delta x}{100 c}, \min\left( \Delta t, \frac{\Delta x}{c} \right) \right],
\end{align*}
and
\begin{align*}
\frac{ c_{r16} }{2} \max \left[ \frac{\Delta x}{100 c}, \min\left( \Delta t, \frac{\Delta x}{c} \right) \right]
\end{align*}
for incompressible materials, with $c_{r16}$ in the range $[0.01, 0.03]$. The purpose of this design is to enhance the robustness of the algorithm when the time step $\Delta t$ is too large or too small. We favor the design based on $\Delta x / c$ to make the solution independent of the time step size.
\end{remark}

\begin{remark}
In the current design, the elastic wave speed $c$ is based on the small-strain elastic theory. By exploiting the eigenvalue structure of the large deformation problem, one can obtain the elastic wave speed for nonlinear materials \cite{Bonet2015,Gil2014}. This will surely lead to a better design of the stabilization parameters.
\end{remark}

\section{Formulation for fluid dynamics and fluid-structure interaction}
\label{sec:numerical_FSI}
In this section, we present a new framework for FSI problems based on the unified continuum model derived in Section \ref{sec:continuum_mechanics}. The formulation for solid dynamics is directly adopted from the one developed in Section \ref{sec:numerical_solid}; the formulation for fluid dynamics is constructed based on the general VMS formulation given in Section \ref{sec:VMS}. The finite element formulation, time integration, and a new solution procedure are discussed in this section. In all discussions related to FSI problems, we use a superscript $f$ to indicate quantities related to fluids, a superscript $m$ to indicate quantities related to the mesh motion in the fluid subdomain, and a superscript $s$ to indicate quantities related to solids.

\subsection{Initial-boundary value problem}
\begin{figure}
	\begin{center}
	\begin{tabular}{c}
\includegraphics[angle=0, trim=120 160 120 130, clip=true, scale = 0.45]{./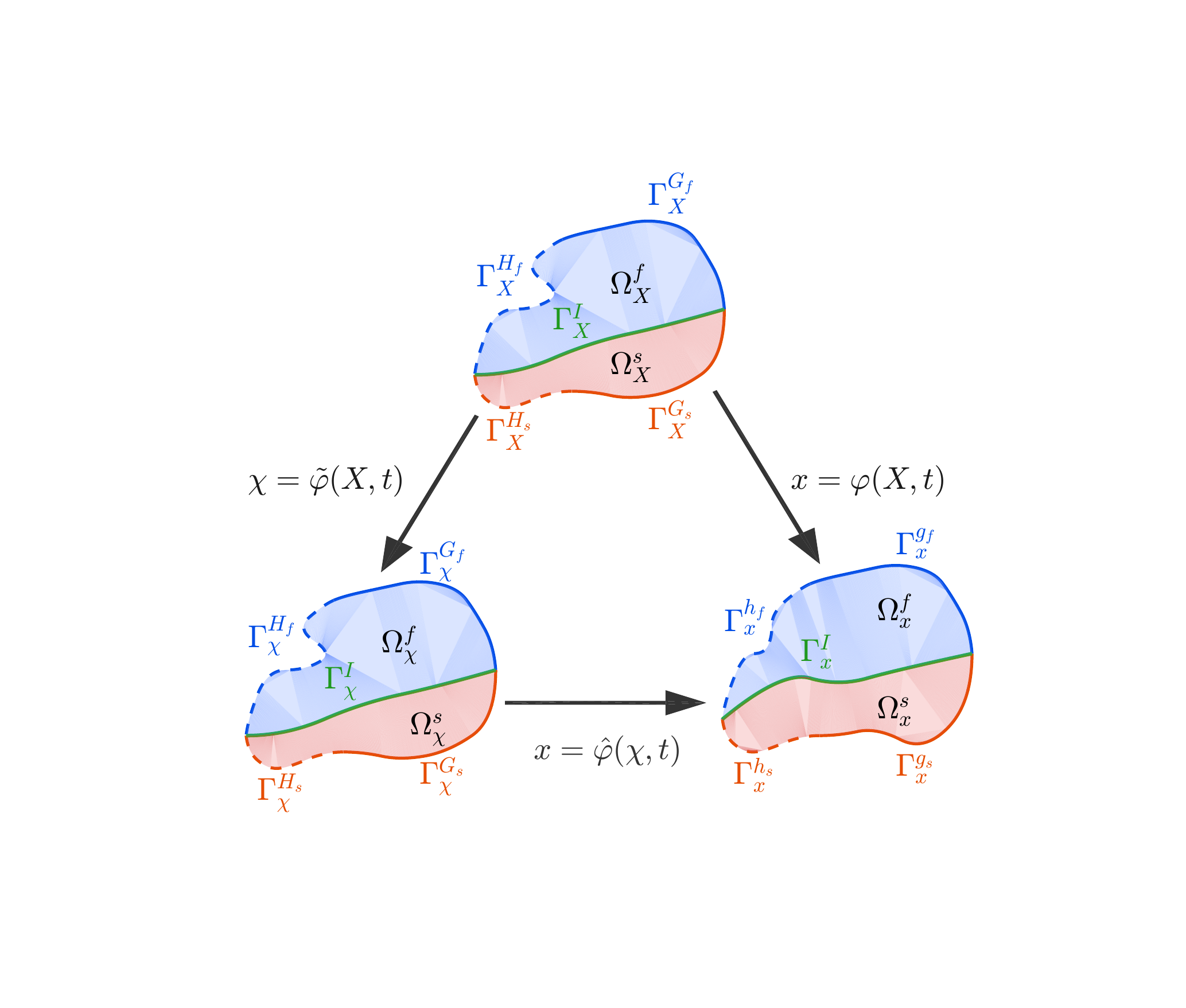}
\end{tabular}
\end{center}
\caption{Illustration of the fluid-structural interaction problem setting.} 
\label{fig:fsi_domain_bc_setting}
\end{figure}
In this section, we present the initial-boundary value problem for the FSI problem. The time interval of interest is denoted as $(0, T)$, with $T>0$. The domain occupied by the continuum body in the referential frame $\Omega_{\bm \chi}$  admits the decomposition
\begin{align*}
\Omega_{\bm \chi} = \overline{\Omega_{\bm \chi}^{f} \cup \Omega_{\bm \chi}^{s} },  \qquad \emptyset = \Omega_{\bm \chi}^{f} \cap \Omega_{\bm \chi}^{s}.
\end{align*}
$\Omega_{\bm \chi}^f$ and $\Omega_{\bm \chi}^s$ are the two subdomains occupied by the fluid and solid respectively. The fluid-solid interface is a $\mathbb R^{n_d -1}$-dimensional manifold and is denoted as $\Gamma_{\bm \chi}^{I}$. The current domain $\Omega_{\bm x}$ can be decomposed correspondingly as
\begin{align*}
\Omega_{\bm x} = \overline{\Omega_{\bm x}^{f} \cup \Omega_{\bm x}^{s} }, \qquad \emptyset = \Omega_{\bm x}^{f} \cap \Omega_{\bm x}^{s}.
\end{align*}
The boundary $\Gamma_{\bm x} = \partial \Omega_{\bm x}$ can be partitioned into four non-overlapping subdivisions:
\begin{align*}
\Gamma_{\bm x} = \Gamma_{\bm x}^{g_s} \cup \Gamma_{\bm x}^{g_f} \cup \Gamma_{\bm x}^{h_s} \cup \Gamma_{\bm x}^{h_f}.
\end{align*}
In the above decomposition, $\Gamma_{\bm x}^{g_s}$ represents the Dirichlet part of the solid boundary; $\Gamma_{\bm x}^{g_f}$ represents the Dirichlet part of the fluid boundary; $\Gamma_{\bm x}^{h_s}$ represents the Neumann part of the solid boundary; $\Gamma_{\bm x}^{h_f}$ represents the Neumann part of the fluid boundary. The unit outward normal vector to $\Gamma_{\bm x}$ is denoted as $\bm n$. The fluid-solid interface in the current configuration is denoted as $\Gamma_{\bm x}^{I}$; $\bm n^s$ and $\bm n^f$ represent the unit outward normal vector to $\Omega_{\bm x}^s$ and $\Omega_{\bm x}^f$ on the interface $\Gamma_{\bm x}^I$, respectively. The configurations and the boundary decomposition are illustrated in Figure \ref{fig:fsi_domain_bc_setting}. The referential configuration can be defined by the Lagrangian-to-referential map
$
\bm \chi = \tilde{\bm \varphi}\left(\bm X, t \right).
$
We define the map $\tilde{\bm \varphi}$ in the solid subdomain $\Omega_{\bm X}^{s}$ as the identity map
\begin{align}
\label{eq:fsi_solid_subdomain_lag_to_ref_map}
\bm \chi = \tilde{\bm \varphi}\left(\bm X, t \right) = \bm X, \quad \mbox{ for } \bm X \in \Omega_{\bm X}^{s} \mbox{ and } t \geq 0.
\end{align}
Due to the composition relation \eqref{eq:ALE_map_composition}, one has $\bm \varphi = \hat{\bm \varphi} \circ \bm{id_{\Omega_{\bm X}^{s}}}$. This implies that the material description is adopted in the solid subdomain. In the fluid subdomain, the referential configuration can be determined by the referential-to-Eulerian map $\hat{\bm \varphi}$. Typically, one determines $\hat{\bm u}^m$ in $\Omega^{f}_{\bm \chi}$ first, and the referential-to-Eulerian map in the fluid subdomain is given by the definition \eqref{eq:ale_mesh_displacement_def}. The motion of the referential domain needs to maintain the regularity for functions defined on $\Omega_{\bm \chi}$ \cite{Nobile2001}. Oftentimes, the construction of this mapping is tailored to specific problems. In this work, we consider two options: the harmonic extension algorithm and the pseudo-linear-elasticity algorithm.
\paragraph{The harmonic extension algorithm}
One simple and effective way of constructing the map $\hat{\bm \varphi}$ is by solving for $\hat{\bm u}^m$ as a harmonic extension of the trace of $\bm U^s$ on $\Gamma^{I}_{\bm \chi}$ \cite{Zeng2016}, viz.
\begin{align}
\label{eq:fsi_geometry_laplace}
-\nabla_{\bm \chi} \cdot \left( \nabla_{\bm \chi} \hat{\bm u}^m \right) &= \bm 0, && \mbox{ in } \Omega_{\bm \chi}^{f}, \displaybreak[2] \\
\hat{\bm u}^m &= \bm U^s, && \mbox{ on } \Gamma^I_{\bm \chi}.
\end{align}
Once $\hat{\bm u}^m$ is obtained, the mesh velocity $\hat{\bm v}^m$ in the fluid subdomain can be calculated as
\begin{align}
\hat{\bm v}^m = \left. \frac{\partial \hat{\bm \varphi}}{\partial t} \right|_{\bm \chi} = \left. \frac{\partial \hat{\bm u}^m }{\partial t} \right|_{\bm \chi} \hspace{3mm} \mbox{ in } \Omega_{\bm \chi}^{f}.
\end{align}
In all, we define the mesh displacement $\hat{\bm u}$ and the mesh velocity $\hat{\bm v}$ as
\begin{align*}
\hat{\bm u} = \left\{
  \begin{array}{ll}
    \bm U^s &  \mbox{ in } \Omega^s_{\bm \chi} = \Omega^s_{\bm X}  \\
    \hat{\bm u}^m & \mbox{ in } \Omega^f_{\bm \chi}
  \end{array}
\right.
,
\hspace{8mm}
\hat{\bm v} = \left\{
  \begin{array}{ll}
    \bm V^s &  \mbox{ in } \Omega^s_{\bm \chi} = \Omega^s_{\bm X} \\
    \hat{\bm v}^m & \mbox{ in } \Omega^f_{\bm \chi}
  \end{array}
\right.
.
\end{align*}

\paragraph{The pseudo-linear-elasticity algorithm}
One may also model the motion of the fluid subdomain by solving a succession of pesudo-linear-elastostatic equations \cite{Johnson1994,Zeng2016}. We consider a time instant $\tilde{t}<t$ such that $\tilde{t}$ is close to $t$. In numerical computations, $\tilde{t}$ is often conveniently chosen as the previous time step. The mesh displacement at time $\tilde{t} < t$ is given by
$
\hat{\bm u}\left( \bm \chi, \tilde{t} \right) :=  \tilde{\bm x} - \bm \chi = \hat{\bm \varphi}\left( \bm \chi, \tilde{t} \right) - \bm \chi.
$
Consequently, one has the following identity,
\begin{align*}
\hat{\bm \varphi}\left( \bm \chi, t \right) = \hat{\bm \varphi}\left( \bm \chi, \tilde{t} \right) + \hat{\bm u}\left( \bm \chi, t \right) - \hat{\bm u}\left( \bm \chi, \tilde{t}\right).
\end{align*}
Freezing the time instants $\tilde{t}$ and $t$, we define $\tilde{\bm u}^m(\tilde{\bm x})$ as $
\tilde{\bm u}^m \circ \hat{\bm \varphi}(\bm \chi, \tilde{t}) := \hat{\bm u}\left( \bm \chi, t \right) - \hat{\bm u}\left( \bm \chi, \tilde{t}\right).
$
Given the fictitious Lam\'e parameters $\lambda^m$ and $\mu^m$, $\hat{\bm u}^m$ is determined by solving the following linear elastostatic equations:
\begin{align}
\label{eq:fsi_geometry_elasticity}
& \nabla_{\tilde{\bm x}}  \cdot \left( \mu^m \left( \nabla_{\tilde{\bm x}} \tilde{\bm u}^m + \left(\nabla_{\tilde{\bm x}} \tilde{\bm u}^m\right)^T  \right) + \lambda^m  \nabla_{\tilde{\bm x}} \cdot \tilde{\bm u}^m \bm I \right) = \bm 0, && \mbox{ in } \Omega^f_{\tilde{\bm x}}, \\
& \tilde{\bm u}^m(\tilde{\bm x}) = \left( \bm u^s_t - \bm u^s_{\tilde{t}} \right) \circ \hat{\bm \varphi}^{-1}\left( \tilde{\bm x}, \tilde{t} \right), && \mbox{ on } \Gamma_{\tilde{\bm x}}^I.
\end{align}
In the above, $\bm u^s_t$ and $\bm u^s_{\tilde{t}}$ are the solid displacements in the current configurations at time $t$ and $\tilde{t}$, respectively. In our implementation, the fictitious Lam\'e parameters are divided by the Jacobian determinant of the element mapping \cite{Johnson1994}. This effectively increases the robustness of the mesh moving algorithm.

The fluid-solid body can be viewed as a single continuum body governed by the following balance equations.
\begin{align}
\label{eq:unified_fsi_continuum_mass}
0 &= \left. \beta_{\theta}(p)\frac{\partial p}{\partial t}\right|_{\bm \chi} + \beta_{\theta}(p) \left( \bm v - \hat{\bm v} \right) \cdot \nabla_{\bm x} p + \nabla_{\bm x} \cdot \bm v, && \mbox{ in } \Omega_{\bm x}, \\
\label{eq:unified_fsi_continuum_linear_momentum}
\bm 0 &= \rho(p)  \left.  \frac{\partial \bm v}{\partial t}\right|_{\bm \chi}  + \rho(p) \left( \nabla_{\bm x} \bm v \right) \left( \bm v - \hat{\bm v} \right)  - \nabla_{\bm x} \cdot \bm \sigma_{dev} + \nabla_{\bm x} p - \rho(p) \bm b, && \mbox{ in } \Omega_{\bm x}.
\end{align}
In the fluid subdomain, we consider viscous incompressible fluid flow, with the constitutive relations given in Section \ref{sec:examples_of_equations}. Due to the fact that $\beta_{\theta}^f(p) = 0$, the strong-form problem \eqref{eq:unified_fsi_continuum_mass}-\eqref{eq:unified_fsi_continuum_linear_momentum} in the fluid subdomain can be further simplified as 
\begin{align}
\label{eq:fsi_fluid_strong_form_mass_eqn}
& 0 = \nabla_{\bm x} \cdot \bm v^f,  &&  \mbox{ in } \Omega_{\bm x}^f, \displaybreak[2] \\
\label{eq:fsi_fluid_strong_form_linear_mom_eqn}
& \bm 0 = \left. \rho_0^f \frac{\partial \bm v^f}{\partial t} \right|_{\bm \chi} + \rho_0^f \left( \nabla_{\bm x} \bm v^f \right) \left(\bm v^f - \hat{\bm v} \right) - \nabla_{\bm x} \cdot \bm \sigma^f_{dev} + \nabla_{\bm x} p^f - \rho_0^f \bm b,  && \mbox{ in } \Omega_{\bm x}^f.
\end{align}
In the solid subdomain, we have $\hat{\bm v} = \bm v$ due to the Lagrangian description \eqref{eq:fsi_solid_subdomain_lag_to_ref_map} adopted in the solid subproblem. Consequently, the strong-form problem \eqref{eq:unified_fsi_continuum_mass}-\eqref{eq:unified_fsi_continuum_linear_momentum} for the solid can be explicitly written as
\begin{align}
\label{eq:fsi_solids_strong_form_kinematic}
& \bm 0 = \frac{d\bm u^s}{dt} - \bm v^s, && \mbox{ in } \Omega_{\bm x}^s, \displaybreak[2] \\
\label{eq:fsi_solids_strong_form_pressure}
& 0 = \beta^{s}_{\theta}(p^s) \frac{dp^s}{dt} + \nabla_{\bm x} \cdot \bm v^s && \mbox{ in } \Omega_{\bm x}^s,  \displaybreak[2] \\
& \bm 0 = \rho^s(p^s) \frac{d\bm v^s}{dt} - \nabla_{\bm x} \cdot \bm \sigma^{s}_{dev} + \nabla_{\bm x} p^s - \rho^s(p^s) \bm b, && \mbox{ in } \Omega_{\bm x}^s.
\end{align}
Given the Dirichlet data $\bm g^s$ and $\bm g^f$ and the boundary tractions $\bm h^s$ and $\bm h^f$, the boundary conditions can be stated as
\begin{align*}
& \bm u^s = \bm g^s, && \mbox{ on } \Gamma_{\bm x}^{g_s}, \\
& \bm v^s = \frac{d \bm g^s}{dt}, && \mbox{ on } \Gamma_{\bm x}^{g_s}, \\
& \bm v^f = \bm g^f, && \mbox{ on } \Gamma_{\bm x}^{g_f}, \\
& (\bm \sigma^s_{dev} - p^s\bm I) \bm n = \bm h^s, && \mbox{ on } \Gamma_{\bm x}^{h_s}, \\
& (\bm \sigma^f_{dev} - p^f\bm I) \bm n = \bm h^f, && \mbox{ on } \Gamma_{\bm x}^{h_f}.
\end{align*}
In the fluid subdomain, the initial conditions are given by a divergence-free velocity field $\bm v^f_0$,
\begin{align}
\bm v^f\left( \bm x, 0 \right) = \bm v^f_0\left(\bm x \right). 
\end{align}
In the solid subdomain, the initial conditions are specified as
\begin{align}
\bm u^s\left(\bm x, 0 \right) = \bm u^s_0\left( \bm x \right), \quad p^s\left( \bm x, 0 \right) = p^s_0\left(\bm x \right), \quad \bm v^s\left(\bm x, 0\right) = \bm v^s_0\left( \bm x \right).
\end{align}

\subsection{Variational multiscale formulation}
In this section, we apply the VMS formulation developed for the general continuum problem in Section \ref{sec:VMS} to the FSI problem. The VMS formulation for the solid problem has been derived in Section \ref{subsec:vms_solids}. For completeness of the FSI formulation, it is stated as follows. Let $\mathcal S_{\bm u_h}^s$, $\mathcal S_{p_h}^s$, and $\mathcal S_{\bm v_h}^s$ denote the finite dimensional trial solution spaces for the solid displacement, pressure, and velocity in the current domain, and let $\mathcal V_{\bm u_h}^s$, $\mathcal V_{p_h}^s$, and $\mathcal V_{\bm v_h}^s$ denote the corresponding test function spaces. We assume that the Dirichlet boundary conditions are built into the definitions of the trial solution spaces. The variational multiscale formulation is stated as follows. Find $\bm y_h^s(t) := \left\lbrace  \bm u_h^s(t), p_h^s(t), \bm v_h^s(t)\right\rbrace^T \in \mathcal S_{\bm u_h}^s \times \mathcal S_{p_h}^s \times \mathcal S_{\bm v_h}^s$ such that
\begin{align}
& \mathbf B^s_{k}\left( \bm w^s_{\bm u_h}; \dot{\bm y}_h^s, \bm y_h^s \right) = \bm 0, && \forall \bm w^s_{\bm u_h} \in \mathcal V_{\bm u_h}^s, \\
& \mathbf B^s_{p}\left( w^s_{p_h}; \dot{\bm y}_h^s, \bm y_h^s \right) = 0, && \forall w^s_{p_h} \in \mathcal V_{p_h}^s, \\
& \mathbf B^s_{m}\left( \bm w^s_{\bm v_h}; \dot{\bm y}_h^s, \bm y_h^s \right) = \bm 0, && \forall \bm w^s_{\bm v_h} \in \mathcal V_{\bm v_h}^s,
\end{align}
where
\begin{align}
& \mathbf B^s_{k}\left( \bm w^s_{\bm u_h}; \dot{\bm y}_h^s, \bm y_h^s \right) := \int_{\Omega_{\bm x}^s} \bm w^s_{\bm u_h} \cdot \left( \frac{d\bm u_h^s}{dt} - \bm v_h^s \right) d\Omega_{\bm x}, \displaybreak[2] \\
& \mathbf B^s_{p}\left( w^s_{p_h}; \dot{\bm y}_h^s, \bm y_h^s \right) := \int_{\Omega_{\bm x}^s} w^s_{p_h} \left( \beta_{\theta}^s(p^s_h) \frac{dp_h^s}{dt} + \nabla_{\bm x} \cdot \bm v_h^s \right) d\Omega_{\bm x} \nonumber \displaybreak[2]  \\
& \hspace{15mm} + \int_{\Omega_{\bm x}^{s\prime}} \nabla_{\bm x} w^s_{p_h} \cdot \bm \tau_M^s \left( \rho^s(p_h^s) \frac{d\bm v_h^s}{dt} - \nabla_{\bm x} \cdot \bm \sigma^s_{dev}(\bm u^s_h) + \nabla_{\bm x} p^s_h - \rho^s(p^s_h) \bm b \right) d\Omega_{\bm x}, \displaybreak[2]  \\
& \mathbf B^s_{m}\left( \bm w^s_{\bm v_h}; \dot{\bm y}_h^s, \bm y_h^s \right) := \int_{\Omega_{\bm x}^s} \bm w^s_{\bm v_h} \cdot \rho^s(p^s_h) \frac{d\bm v_h^s}{dt} + \nabla_{\bm x} \bm w^s_{\bm v_h} : \bm \sigma^s_{dev}(\bm u^s_h) - \nabla_{\bm x} \cdot \bm w^s_{\bm v_h} p_h^s - \bm w^s_{\bm v_h}\cdot \rho^s(p^s_h) \bm b d\Omega_{\bm x} \nonumber \displaybreak[2]  \\
& \hspace{15mm} -\int_{\Gamma_{\bm x}^{h_s}}\bm w_{\bm v_h}^s \cdot \bm h^s d\Gamma_{\bm x} + \int_{\Omega_{\bm x}^{s\prime}} \nabla_{\bm x} \cdot \bm w^s_{\bm v_h} \tau_C^s \left(\beta_{\theta}^s(p^s_h) \frac{dp_h^s}{dt} + \nabla_{\bm x} \cdot \bm v_h^s \right) d\Omega_{\bm x}.
\end{align} 
In the above formulation, $\Omega_{\bm x}^{s\prime}$ represents the union of solid element interiors. The above formulation can be conveniently pulled back to the material frame of reference, as was done in Section \ref{subsec:vms_solids}.

Next, we give the formulation for the motion of the fluid subdomain. We only present the variational formulation for the harmonic extension algorithm. The formulation based on the pseudo-linear-elastic algorithm can be similarly constructed. Let $\mathcal S^m_{\hat{\bm u}_h}$ denote the trial solution space of the mesh displacement $\hat{\bm u}^m_h$, and let $\mathcal V^m_{\hat{\bm u}_h}$ denote the corresponding test function space. The kinematic boundary condition $\hat{\bm u}_h^m = \bm U^s_h$ on $\Gamma_{\bm \chi}^I$ are built into the definition of the space $\mathcal S^m_{\hat{\bm u}_h}$. Find $\hat{\bm u}^m_h \in \mathcal S^m_{\hat{\bm u}_h}$ such that
\begin{align}
\mathbf B^m\left( \bm w^m_h; \hat{\bm u}^m_h \right) = 0, \qquad \forall \bm w^m_h \in \mathcal V^m_{\hat{\bm u}_h},
\end{align}
where
\begin{align}
\mathbf B^m\left( \bm w^m_h; \hat{\bm u}^m_h \right) := \int_{\Omega_{\bm \chi}^f} \nabla_{\bm \chi} \bm w_h^m \cdot \nabla_{\bm \chi}\hat{\bm u}^m_h d\Omega_{\bm \chi}.
\end{align}
With $\hat{\bm u}^m_h$, the mesh velocity in the fluid subdomain can be obtained as
\begin{align*}
\hat{\bm v}_h^m := \left. \frac{\partial \hat{\bm u}^m_h}{\partial t} \right|_{\bm \chi}.
\end{align*}

Lastly, we present the VMS formulation for the fluid problem. Let $\mathcal S^f_{p_h}$ and $\mathcal S^f_{\bm v_h}$ denote the trial solution space of the fluid pressure and velocity, and let $\mathcal V^f_{p_h}$ and $\mathcal V^f_{\bm v_h}$ denote the test function spaces. We assume that functions in the space $\mathcal S^f_{\bm v_h}$ satisfy the Dirichlet boundary condition on $\Gamma_{\bm x}^{g_f}$. The VMS formulation is stated as follows. Find $\bm y_h^f(t):= \left\lbrace p_h^f(t), \bm v_h^f(t) \right\rbrace \in \mathcal S^f_{p_h} \times \mathcal S^f_{\bm v_h}$ such that
\begin{align}
& \mathbf B^f_p\left( w^f_{p_h}; \dot{\bm y}^f_h, \bm y_h^f \right) = 0, && \forall w^f_{p_h} \in \mathcal V^f_{p_h}, \displaybreak[2] \\
& \mathbf B^f_m \left( \bm w^f_{\bm v_h} ;  \dot{\bm y}_h^f, \bm y_h^f \right) = 0, && \forall \bm w^f_{\bm v_h} \in \mathcal V^f_{\bm v_h},
\end{align}
where
\begin{align}
\label{eq:fsi_fluid_residual_based_vms_mass}
& \mathbf B^f_p\left( w^f_{p_h}; \dot{\bm y}_h^f, \bm y_h^f \right) := \int_{\Omega_{\bm x}^f} w^f_{p_h} \nabla_{\bm x} \cdot \bm v_h^f d\Omega_{\bm x}  - \int_{\Omega_{\bm x}^{f}} \nabla_{\bm x} w^f_{p_h} \cdot  \bm v^{f\prime} d\Omega_{\bm x}, \displaybreak[2] \\
\label{eq:fsi_fluid_residual_based_vms_momentum}
& \mathbf B^f_m \left( \bm w^f_{\bm v_h} ;  \dot{\bm y}_h^f, \bm y_h^f ; \hat{\bm v}_h \right) := \int_{\Omega_{\bm x}^f} \bm w^f_{\bm v_h} \cdot \left( \left. \rho_0^f \frac{\partial \bm v_h^f}{\partial t} \right|_{\bm \chi} + \rho_0^f \left( \nabla_{\bm x} \bm v_h^f \right) \left(\bm v_h^f - \hat{\bm v}^m_h \right) \right) d\Omega_{\bm x} \nonumber \displaybreak[2] \\
& \hspace{3mm} - \int_{\Omega_{\bm x}^f} \nabla_{\bm x} \cdot \bm w^f_{\bm v_h} p_h^f d\Omega_{\bm x} + \int_{\Omega_{\bm x}^f} \nabla_{\bm x} \bm w^f_{\bm v_h} : \bm \sigma_{dev}^f(\bm v^f_h) d\Omega_{\bm x} - \int_{\Omega_{\bm x}^f} \bm w^f_{\bm v_h} \cdot \rho^f_0 \bm b d\Omega_{\bm x} - \int_{\Gamma^{h_f}_{\bm x}} \bm w^f_{\bm v_h} \cdot \bm h^f d\Gamma_{\bm x} \nonumber \\
& \hspace{3mm} - \int_{\Omega_{\bm x}^{f\prime}} \nabla_{\bm x} \bm w^f_{\bm v_h} : \left( \rho^f_0 \bm v^{f\prime} \otimes \left( \bm v_h^f - \hat{\bm v}^m_h \right) \right) d\Omega_{\bm x} - \int_{\Omega_{\bm x}^{f\prime}} \nabla_{\bm x} \bm w^f_{\bm v_h} : \left( \rho^f_0 \bm v_h^f \otimes \bm v^{f\prime} \right) d\Omega_{\bm x} \nonumber \displaybreak[2] \\
& \hspace{3mm} - \int_{\Omega_{\bm x}^{f\prime}} \nabla_{\bm x} \bm w^f_{\bm v_h} : \left( \rho^f_0 \bm v^{f\prime} \otimes \bm v^{f\prime} \right) d\Omega_{\bm x} - \int_{\Omega_{\bm x}^{f\prime}} \nabla_{\bm x} \cdot \bm w^f_{\bm v_h} p^{f\prime} d\Omega_{\bm x}, \displaybreak[2] \\
& \bm v^{f\prime} := -\bm \tau_{M}^f \left( \left. \rho_0^f \frac{\partial \bm v_h^f}{\partial t} \right|_{\bm \chi} + \rho_0^f \left( \nabla_{\bm x} \bm v_h^f \right) \left(\bm v_h^f - \hat{\bm v}^m_h \right) + \nabla_{\bm x} p_h^f - \nabla_{\bm x} \cdot \bm \sigma^f_{dev}(\bm v^f_h) - \rho_0^f \bm b \right), \displaybreak[2] \\
& p^{f\prime} := -\tau^f_C \nabla_{\bm x} \cdot \bm v_h^f, \displaybreak[2] \\
& \bm \tau^f_M := \tau^f_M \bm I_{n_d}, \displaybreak[2] \\
\label{eq:fsi_fluid_def_tau_m}
& \tau^f_M := \frac{1}{\rho_0^f}\left( \frac{C_T}{\Delta t^2} + \left(\bm v_h^f - \hat{\bm v}^m_h \right) \cdot \bm G \left(\bm v_h^f - \hat{\bm v}^m_h \right) + C_I \left( \frac{\bar{\mu}}{\rho^f_0} \right)^2 \bm G : \bm G \right)^{-\frac12}, \displaybreak[2] \\
& \tau^f_C := \frac{1}{\tau_M \bm g \cdot \bm g}, \displaybreak[2] \\
& G_{ij} := \sum_{k=1}^{n_d} \frac{\partial \xi_k}{\partial x_i}\frac{\partial \xi_k}{\partial x_j}, \displaybreak[2] \\
& \bm G : \bm G := \sum_{i,j=1}^{n_d} G_{ij} G_{ij}, \displaybreak[2] \\
& g_i := \sum_{j=1}^{n_d} \frac{\partial \xi_j}{\partial x_i}, \displaybreak[2] \\
& \bm g \cdot \bm g := \sum_{i=1}^{n_d} g_i g_i. 
\end{align}
In the above, $\bm \xi = \left\lbrace \xi_i \right\rbrace_{i=1}^{n_d}$ are the coordinates of an element in the parent domain, and $C_I$ is a positive constant derived from an element-wise inverse estimate \cite{Franca1992}. Unless otherwise specified, the value of $C_T$ is taken to be $4$. The value of $C_I$ is independent of the mesh size but relies on the polynomial order of the interpolation basis functions. For linear interpolations, $C_I$ is suggested to be $36$ \cite[p.~65]{Figuero2006}.

\subsection{Formulation for the coupled problem}
Based on the individual subproblems given in the previous section, the semi-discrete FSI problem can be stated as follows. Find $\bm y^s_h(t) := \left\lbrace  \bm u^s_h(t), p^s_h(t), \bm v^s_h(t)\right\rbrace \in \mathcal S_{\bm u_h}^s \times \mathcal S_{p_h}^s \times \mathcal S_{\bm v_h}^s$, $\hat{\bm u}^m_h \in \mathcal S^m_{\hat{\bm u}_h}$, and $\bm y^f_h(t):= \left\lbrace p^f_h(t), \bm v^f_h(t) \right\rbrace \in \mathcal S^f_{p_h} \times \mathcal S^f_{\bm v_h}$ such that
\begin{align}
\label{eq:fsi_coupled_semi_discrete}
& \mathbf B^s_k\left( \bm w^s_{\bm u_h}; \dot{\bm y}^s_h, \bm y^s_h \right) + \mathbf B^s_p\left( w^s_{p_h}; \dot{\bm y}^s_h, \bm y^s_h \right) + \mathbf B^s_m\left( \bm w^s_{\bm v_h}; \dot{\bm y}^s_h, \bm y^s_h \right) + \mathbf B^m\left( \bm w^m_h ; \hat{\bm u}^m_h \right) \nonumber \\
& \hspace{8mm} + \mathbf B^f_p\left( w^f_{p_h}; \dot{\bm y}^f_h, \bm y^f_h \right) + \mathbf B^f_m\left( \bm w^f_{\bm v_h}; \dot{\bm y}^f_h, \bm y^f_h \right) + \mathbf B^f_{Stab}\left( \bm w^f_{\bm v_h}; \dot{\bm y}^f_h, \bm y^f_h \right)  = 0,
\end{align}
wherein the last term is an additional stabilization term defined as
\begin{align}
\label{eq:fsi_coupled_semi_discrete_additional_stabilization}
\mathbf B^f_{Stab}\left( \bm w^f_{\bm v_h}; \dot{\bm y}^f_h, \bm y^f_h \right) :=& \int_{\Omega_{\bm x}^f} \left( \nabla_{\bm x}\bm w^f_{\bm v_h} \bm v^{f\prime} \right) \cdot  \bar{\tau} \left( \nabla_{\bm x}\bm v^f \bm v^{f\prime} \right), \\
\bar{\tau} := \rho_0^f \left( \bm v^{f\prime} \cdot \bm G \bm v^{f\prime} \right)^{-\frac12}. \nonumber
\end{align}
This stabilization term \eqref{eq:fsi_coupled_semi_discrete_additional_stabilization} provides additional stabilization to control oscillations \cite{Hughes1986b,Hughes1986c,Shakib1991}. Due to the choice of $\bar{\tau}$, the term \eqref{eq:fsi_coupled_semi_discrete_additional_stabilization} is proportional to the residual, and therefore the consistency of the semi-discrete formulation \eqref{eq:fsi_coupled_semi_discrete} is maintained. In this work, the whole fluid-solid domain is treated as a single continuum body and is discretized by a single set of mesh with continuous basis functions across the fluid-solid interface. This mesh choice and equal-order interpolations directly imply the satisfaction of the following kinematic constraint relations on the interface,
\begin{align}
\label{eq:fsi_coupled_semi_discrete_kinematic_constraint_variables}
\bm u^s_h = \hat{\bm u}^m_h,  \quad \bm v^s_h = \bm v^f_h.
\end{align}
It also implies the following relations for the test functions,
\begin{align}
\label{eq:fsi_coupled_semi_discrete_kinematic_constraint_test_functions}
\bm w^s_{\bm v_h} = \bm w^f_{\bm v_h}.
\end{align}
Invoking standard variational arguments \cite{Bazilevs2008}, the relation \eqref{eq:fsi_coupled_semi_discrete_kinematic_constraint_test_functions} leads to the weak continuity of the traction across the fluid-solid interface, viz.
\begin{align*}
\int_{\Gamma_{\bm x}^{I}} \bm w^f_{\bm v_h} \cdot \left( \left( \bm \sigma_{dev}^s - p^s \bm I \right) \bm n^{s} + \left( \bm \sigma_{dev}^f - p^f \bm I \right) \bm n^{f} \right)  d\Gamma_{\bm x} = 0.
\end{align*}
The above relation, together with \eqref{eq:fsi_coupled_semi_discrete_kinematic_constraint_variables}, ensures the correct coupling between the fluid and the solid sub-systems.

\subsection{Temporal discretization}
\label{sec:fsi_temporal}
In this section, we apply the generalized-$\alpha$ method to the semi-discrete formulation \eqref{eq:fsi_coupled_semi_discrete} to construct the fully discrete scheme. The time interval $[0,T)$ is divided into a set of $n_{ts}$ subintervals of size $\Delta t_n := t_{n+1} - t_n$ delimited by a discrete time vector $\left\lbrace t_n \right\rbrace_{n=0}^{n_{ts}}$. Let $\bm y^s_n := \left\lbrace \bm u^s_n, p^s_n, \bm v^s_n \right\rbrace$ and $\dot{\bm y}^s_n := \left\lbrace \dot{\bm u}^s_n, \dot{p}^s_n, \dot{\bm v}^s_n \right\rbrace$ denote the solution vector and its first-order time derivative of the solid displacement, pressure, and velocity evaluated at time $t_n$; let $\bm y^f_n := \left\lbrace \hat{\bm u}^m_n, p^f_n, \bm v^f_n \right\rbrace$ and $\dot{\bm y}^f_n := \left\lbrace \hat{\bm v}^m_n, \dot{p}^f_n, \dot{\bm v}^f_n \right\rbrace$ denote the solution vector and its first-order time derivative of the mesh displacement in the fluid subdomain, fluid pressure, and fluid velocity evaluated at time $t_n$. Let $N_A$ denote the basis function on the current configuration and let $\hat{N}_A$ denote the corresponding basis function defined on the reference configuration. They are related by the mapping $\hat{\varphi}$ as $N_A = \hat{N}_A \circ \hat{\varphi}^{-1}$. We denote the residual vectors as
\begin{align*}
& \boldsymbol{\mathrm R}^m\left( \dot{\bm y}_n, \bm y_n \right) := \left\lbrace  \mathbf B^m\left( \hat{N}_A \bm e_i ; \hat{\bm u}_n, \bm u^s_n \right) \right\rbrace , \displaybreak[2] \\
& \boldsymbol{\mathrm R}^f_p\left( \dot{\bm y}_n, \bm y_n \right) := \left\lbrace  \mathbf B^f_p\left( N_A \bm e_i ; \dot{\bm y}^s_n, \bm y^s_n \right) \right\rbrace , \displaybreak[2] \\
& \boldsymbol{\mathrm R}^f_m\left( \dot{\bm y}_n, \bm y_n \right) := \left\lbrace  \mathbf B^f_m\left( N_A \bm e_i ; \dot{\bm y}^s_n, \bm y^s_n \right) \right\rbrace, \displaybreak[2] \\
& \boldsymbol{\mathrm R}^s_k\left( \dot{\bm y}_n, \bm y_n \right) := \left\lbrace  \mathbf B^s_k\left( N_A \bm e_i ; \dot{\bm y}^s_n, \bm y^s_n \right) \right\rbrace ,\displaybreak[2] \\
& \boldsymbol{\mathrm R}^s_p\left( \dot{\bm y}_n, \bm y_n \right) := \left\lbrace  \mathbf B^s_p\left( N_A \bm e_i ; \dot{\bm y}^s_n, \bm y^s_n \right) \right\rbrace , \displaybreak[2]\\
& \boldsymbol{\mathrm R}^s_m\left( \dot{\bm y}_n, \bm y_n \right) := \left\lbrace  \mathbf B^s_m\left( N_A \bm e_i ; \dot{\bm y}^s_n, \bm y^s_n \right) \right\rbrace .
\end{align*}
The fully discrete scheme can be stated as follows. At time step $t_n$, given $\dot{\bm y}_n$ and $\bm y_n$, find $\dot{\bm y}_{n+1}$ and $\bm y_{n+1}$ such that
\begin{align}
\label{eq:gen_alpha_mesh_motion}
& \boldsymbol{\mathrm R}^m\left( \dot{\bm y}_{n+\alpha_m}, \bm y_{n+\alpha_f} \right) = \bm 0, \displaybreak[2] \\
& \boldsymbol{\mathrm R}^f_p\left( \dot{\bm y}_{n+\alpha_m}, \bm y_{n+\alpha_f} \right) = \bm 0, \displaybreak[2] \\
& \boldsymbol{\mathrm R}^f_m\left( \dot{\bm y}_{n+\alpha_m}, \bm y_{n+\alpha_f} \right) = \bm 0, \displaybreak[2] \\
& \boldsymbol{\mathrm R}^s_k\left( \dot{\bm y}_{n+\alpha_m}, \bm y_{n+\alpha_f} \right) = \bm 0, \displaybreak[2] \\
& \boldsymbol{\mathrm R}^s_p\left( \dot{\bm y}_{n+\alpha_m}, \bm y_{n+\alpha_f} \right) = \bm 0, \displaybreak[2] \\
\label{eq:gen_alpha_solid_momentum}
& \boldsymbol{\mathrm R}^s_m\left( \dot{\bm y}_{n+\alpha_m}, \bm y_{n+\alpha_f} \right) = \bm 0, \displaybreak[2] \\
\label{eq:gen_alpha_y_n1_def}
& \bm y_{n+1} = \bm y_n + \Delta t_n \dot{\bm y}_n + \gamma \Delta t_n \left( \dot{\bm y}_{n+1} - \dot{\bm y}_n \right), \displaybreak[2] \\
\label{eq:gen_alpha_y_alpha_m}
& \dot{\bm y}_{n+\alpha_m} = \dot{\bm y}_n + \alpha_m \left( \dot{\bm y}_{n+1} - \dot{\bm y}_n \right), \displaybreak[2] \\
\label{eq:gen_alpha_y_alpha_f}
& \bm y_{n+\alpha_f} = \bm y_n + \alpha_f \left( \bm y_{n+1} - \bm y_n \right).
\end{align}
The choice of the parameters $\alpha_m$, $\alpha_f$ and $\gamma$ is given by \eqref{eq:gen_alpha_1st_order_ODE_parametrization_in_varrho}.

\begin{remark}
\label{remark:fsi_gen_alpha_dilema}
It is shown in \cite{Chung1993} that for second-order ordinary differential equations, the parametrization for $\alpha_m$ in terms of $\varrho_{\infty}$ takes the form
\begin{align*}
\alpha_m = \frac{2-\varrho_{\infty}}{1+\varrho_{\infty}}.
\end{align*}
Only when $\varrho_{\infty} = 0$, the parametrizations of $\alpha_m$ for the first-order and the second-order system coincide. In FSI simulations, oftentimes one solves the solid problem using the pure displacement formulation and solves the fluid problem using the Navier-Stokes equations \cite{Bazilevs2013}. In staggered FSI solvers, one can choose separate, optimal parameters for the generalized-$\alpha$ methods \cite{Dettmer2006}; in monolithic FSI solves, the mismatch of the $\alpha_m$ parameter engenders a dilemma \cite[pp.~119-120]{Bazilevs2013}. In \cite{Bazilevs2008}, the parametrization \eqref{eq:gen_alpha_1st_order_ODE_parametrization_in_varrho} is adopted for the whole FSI system since the Navier-Stokes equations are more challenging than the structural problems. However, a drawback with that approach is the non-optimal high-frequency dissipation in the structural equations. In \cite{Kuhl2003}, the optimal choices of $\alpha_m$ are applied to the fluid and the solid equations separately. That leads to incompatibility of the acceleration at the fluid-solid interface and has been observed to engender failure in FSI simulations \cite{Joosten2010}. In our proposed approach, this issue is properly addressed since the solid mechanics problem is written as a first-order system by introducing the displacement-velocity kinematic equations. This allows one to achieve optimal numerical dissipation in the monolithic FSI solver without violating the kinematic compatibility. 
\end{remark}

\subsection{A predictor multi-corrector algorithm based on the segregated algorithm}
The coupled nonlinear system of equations \eqref{eq:gen_alpha_mesh_motion}-\eqref{eq:gen_alpha_solid_momentum} can be solved monolithically using the Newton-Raphson method \cite{Quarteroni2007}. For FSI problems, a monolithic solver can be implemented using the block-iterative \cite{Tezduyar2007}, quasi-direct \cite{Tezduyar2007a}, or the direct coupling techniques \cite{Bazilevs2008}. In the quasi-direct coupling approach, the fluid and solid equations are treated as a block and the mesh is treated as a block. One solves a block of equations using the most recent unknowns from the other block. The quasi-direct coupling approach enjoys a good balance of robustness and computational cost and has been recommended for general FSI problems \cite{Bazilevs2013a}. In this work, we invoke the quasi-direct coupling methodology and the segregated algorithm for the solid equations developed in Section \ref{sec:solid_segregated_algorithm}. This leads to a novel algorithm for the nonlinear solver in FSI problems. This algorithm is stated as follows.

\noindent \textbf{Predictor stage}: 
\begin{enumerate}
\item Set
\begin{align*}
\bm y_{n+1, (0)} &= \bm y_{n}, \\
\dot{\bm y}_{n+1, (0)} &= \frac{\gamma - 1}{\gamma} \dot{\bm y}_{n}.
\end{align*}
\item Evaluate the solutions at the intermediate stage as
\begin{align*}
\dot{\bm y}_{n+\alpha_m, (1)} &= \dot{\bm y}_n + \alpha_m \left( \dot{\bm y}_{n+1,(0)} - \dot{\bm y}_n \right), \\
\bm y_{n+\alpha_f, (1)} &= \bm y_n + \alpha_f \left( \bm y_{n+1,(0)} - \bm y_n \right).
\end{align*}
\end{enumerate}

\noindent \textbf{Multi-corrector stage}:
Repeat the following steps \(i =1,2, \dots, i_{max}\):
\begin{enumerate}
\item Assemble the residual vectors of the nonlinear system using the above intermediate stage solutions:
\begin{align*}
\boldsymbol{\mathrm R}^f_{p,(i)} &:= \boldsymbol{\mathrm R}^f_p\left( \dot{\bm y}_{n+\alpha_m,(i)}, \bm y_{n+\alpha_f, (i)} \right), \displaybreak[2] \\
\boldsymbol{\mathrm R}^f_{m,(i)} &:= \boldsymbol{\mathrm R}^f_m\left( \dot{\bm y}_{n+\alpha_m, (i)}, \bm y_{n+\alpha_f, (i)} \right), \displaybreak[2] \\
\boldsymbol{\mathrm R}^s_{p,(i)} &:= \boldsymbol{\mathrm R}^s_p\left( \dot{\bm y}_{n+\alpha_m, (i)}, \bm y_{n+\alpha_f, (i)} \right), \displaybreak[2] \\
\boldsymbol{\mathrm R}^s_{m,(i)} &:= \boldsymbol{\mathrm R}^s_m\left( \dot{\bm y}_{n+\alpha_m, (i)}, \bm y_{n+\alpha_f, (i)} \right).
\end{align*}

\item Let $\| \left(\boldsymbol{\mathrm R}^f_{p,(i)}; \boldsymbol{\mathrm R}^f_{m,(i)}, \boldsymbol{\mathrm R}^s_{p,(i)}; \boldsymbol{\mathrm R}^s_{m,(i)} \right) \|_{l^2}$ denote the $l^2$-norm of the residual vector. If either one of the stopping criteria
\begin{align*}
& \frac{\| \left(\boldsymbol{\mathrm R}^f_{p,(i)}; \boldsymbol{\mathrm R}^f_{m,(i)}, \boldsymbol{\mathrm R}^s_{p,(i)}; \boldsymbol{\mathrm R}^s_{m,(i)} \right) \|_{l^2}}{\| \left(\boldsymbol{\mathrm R}^f_{p,(0)}; \boldsymbol{\mathrm R}^f_{m,(0)}, \boldsymbol{\mathrm R}^s_{p,(0)}; \boldsymbol{\mathrm R}^s_{m,(0)} \right) \|_{l^2}} \leq tol_R, \\
& \| \left(\boldsymbol{\mathrm R}^f_{p,(i)}; \boldsymbol{\mathrm R}^f_{m,(i)}, \boldsymbol{\mathrm R}^s_{p,(i)}; \boldsymbol{\mathrm R}^s_{m,(i)} \right) \|_{l^2} \leq tol_A
\end{align*}
is satisfied for prescribed tolerances  $tol_R$, $tol_A$, set the solution vector at time step $t_{n+1}$ as $\bm y_{n+1} = \bm y_{n+1, (i-1)}$ and exit the multi-corrector stage; otherwise, continue to step 3. 

\item Assemble the tangent matrices and solve the linear system of equations:
\label{step:fsi_implict_solve}
\begin{align}
\label{eq:fsi_coupled_linear_equation_fluid_p}
\frac{\partial \boldsymbol{\mathrm R}^f_p}{\partial \dot{\bm v}^f_{n+1}} \Delta \dot{\bm v}^f_{n+1,(i)} + \frac{\partial \boldsymbol{\mathrm R}^f_p}{\partial \dot{p}^f_{n+1}} \Delta \dot{p}^f_{n+1, (i)} &= -\boldsymbol{\mathrm R}^f_{p,(i)} ,  \displaybreak[2] \\
\frac{\partial \boldsymbol{\mathrm R}^f_m}{\partial \dot{\bm v}^f_{n+1}}\Delta \dot{\bm v}^f_{n+1,(i)} + \frac{\partial \boldsymbol{\mathrm R}^f_m}{\partial \dot{p}^f_{n+1}} \Delta \dot{p}^f_{n+1, (i)} &= -\boldsymbol{\mathrm R}^f_{m,(i)} , \displaybreak[2] \\
\frac{\partial \boldsymbol{\mathrm R}^s_p}{\partial \dot{\bm v}^s_{n+1}} \Delta \dot{\bm v}^s_{n+1,(i)} + \frac{\partial \boldsymbol{\mathrm R}^s_p}{\partial \dot{p}^s_{n+1}} \Delta \dot{p}^s_{n+1, (i)} &= -\boldsymbol{\mathrm R}^s_{p,(i)} ,  \displaybreak[2] \\
\label{eq:fsi_coupled_linear_equation_solid_m}
\frac{\partial \boldsymbol{\mathrm R}^s_m}{\partial \dot{\bm v}^s_{n+1}}\Delta \dot{\bm v}^s_{n+1,(i)} +  \frac{\partial \boldsymbol{\mathrm R}^s_m}{\partial \dot{p}^s_{n+1}} \Delta \dot{p}^s_{n+1, (i)} &= -\boldsymbol{\mathrm R}^s_{m,(i)}.
\end{align}

\item Update the iterates as
\begin{align*}
\dot{\bm v}^f_{n+1,(i)} &= \dot{\bm v}^f_{n+1,(i-1)} + \Delta \dot{\bm v}^f_{n+1,(i)}, \displaybreak[2] \\
\bm v^f_{n+1,(i)} &= \bm v^f_{n+1,(i-1)} + \gamma \Delta t_n \Delta \dot{\bm v}^f_{n+1,(i)}, \displaybreak[2] \\
\dot{p}^f_{n+1, (i)} &= \dot{p}^f_{n+1,(i-1)} + \Delta \dot{p}^f_{n+1,(i)}, \displaybreak[2] \\
p^f_{n+1,(i)} &= p^f_{n+1,(i-1)} + \gamma \Delta t_n \Delta \dot{p}^f_{n+1,(i)}, \displaybreak[2] \\
\dot{\bm v}^s_{n+1,(i)} &= \dot{\bm v}^s_{n+1,(i-1)} + \Delta \dot{\bm v}^s_{n+1,(i)}, \displaybreak[2] \\
\bm v^s_{n+1,(i)} &= \bm v^s_{n+1,(i-1)} + \gamma \Delta t_n \Delta \dot{\bm v}^s_{n+1,(i)}, \displaybreak[2] \\
\dot{p}^s_{n+1, (i)} &= \dot{p}^s_{n+1,(i-1)} + \Delta \dot{p}^s_{n+1,(i)}, \displaybreak[2] \\
p^s_{n+1,(i)} &= p^s_{n+1,(i-1)} + \gamma \Delta t_n \Delta \dot{p}^s_{n+1,(i)}.
\end{align*}

\item Obtain $\Delta \dot{\bm u}^s_{n+1,(i)}$ from $\Delta \dot{\bm v}^s_{n+1,(i)}$ using the relation \eqref{eq:newton_linear_system_smaller_eqn_2}.

\item Update the solid displacement as
\begin{align*}
\dot{\bm u}^s_{n+1,(i)} &= \dot{\bm u}^s_{n+1,(i-1)} + \Delta \dot{\bm u}^s_{n+1,(i)}, \displaybreak[2] \\
\bm u^s_{n+1,(i)} &= \bm u^s_{n+1,(i-1)} + \gamma \Delta t_n \Delta \dot{\bm u}^s_{n+1,(i)}.
\end{align*}

\item Solve the mesh motion equation 
\begin{align}
\label{eq:fsi_coupled_linear_equation_mesh_motion_in_algorithm}
\boldsymbol{\mathrm R}^m\left( \dot{\bm y}_{n+\alpha_m, (i)}, \bm y_{n+\alpha_f, (i)} \right) = 0.
\end{align}

\item Obtain the mesh velocity by 
\begin{align*}
\hat{\bm v}^m_{n+1,(i)} = \frac{\hat{\bm u}^m_{n+1,(i)} - \hat{\bm u}^m_{n}}{\gamma \Delta t_n} + \frac{\gamma - 1}{\gamma } \hat{\bm v}^m_{n}.
\end{align*}

\item Evaluate the solution vectors at the intermediate stage as
\begin{align*}
\dot{\bm y}_{n+\alpha_m, (i+1)} &= \dot{\bm y}_n + \alpha_m \left( \dot{\bm y}_{n+1,(i)} - \dot{\bm y}_n \right), \\
\bm y_{n+\alpha_f, (i+1)} &= \bm y_n + \alpha_f \left( \bm y_{n+1,(i)} - \bm y_n \right).
\end{align*}

\end{enumerate}
There are $2n_{d}+1$ degrees of freedom associated with each computational node in both subdomains; in step \ref{step:fsi_implict_solve}, $n_d + 1$ degrees of freedom need to be solved by the implicit solver. In the implicit solver, the mass and linear momentum equations are discretized by VMS in space and the generalized-$\alpha$ method in time. In the assembly routine, one only needs to call the proper constitutive relations for the material. In doing so, one can implement the solution vector and the tangent matrix problem over the whole continuum body in a unified approach, which may simplify the data structure management in the numerical implementation. The linear system of equations \eqref{eq:fsi_coupled_linear_equation_fluid_p}-\eqref{eq:fsi_coupled_linear_equation_solid_m} can be solved by means of the Generalized Minimal Residual (GMRES) algorithm \cite{Saad1986} with an additive Schwarz preconditioner; the linear system of equations for the mesh motion \eqref{eq:fsi_coupled_linear_equation_mesh_motion_in_algorithm} is symmetric positive definite and hence can be solved by means of the conjugate gradient method. The PETSc package \cite{Balay1997} is adopted to provide an interface for a wide range of solver options.


\section{Benchmark computations}
\label{sec:benchmark}
In this section, we first use manufactured solutions to assess the convergence behavior of the algorithm developed in Section \ref{sec:numerical_solid} for both compressible and incompressible materials. Following that, a classical benchmark problem for incompressible solids is studied, and the results for the VMS formulation with $P_1/P_1$ and $Q_1/Q_1$ elements are compared with the results of the $\bar{F}$-projection method. In the last, two FSI benchmark problems are studied to demonstrate the effectiveness of the new FSI formulation proposed in Section \ref{sec:numerical_FSI}.
 
\subsection{Manufactured solution for compressible hyperelasticity}
In the first example, the displacement field is given in a closed-form formulation
\begin{align}
\label{eq:3d_manu_solution_disp}
\bm U = \frac{1}{T_0^2} t^2 \begin{bmatrix}
X \cos(\beta_1 Z) - Y \sin(\beta_1 Z) - X \\
X \sin(\beta_1 Z) + Y \cos(\beta_1 Z) - Y \\
0
\end{bmatrix}
\end{align}
for a unit cube (1 cm $\times$ 1 cm $\times$ 1 cm). In the prescribed displacement field, $T_0 = 1.0\times 10^{-3}$ s, $\beta_1 = 10^{-3}\pi $ rad/cm. This manufactured solution describes rotation of the cube with the bottom surface $Z=0$ clamped. A static version of this manufactured solution has been used in \cite{Leger2016} for code verification purposes. The material model is chosen as the Neo-Hookean model with the volumetric energy given by \eqref{eq:psi_vol_ST91}. Its Gibbs free energy is
\begin{align*}
G\left( \tilde{\bm C}, p \right) = \frac{\mu^s}{2\rho_0}  \left( \textup{tr}\tilde{\bm C} - 3 \right) + \frac{p \sqrt{p^2+\kappa^2}-p^2}{2\kappa \rho_0} - \frac{\kappa}{2\rho_0}\ln \left( \frac{\sqrt{p^2+\kappa^2}-p}{\kappa} \right).
\end{align*}
The material parameters are chosen as $\mu^s = 3.70$ MPa, $\kappa = 1.11 \times 10^1$ MPa, and $\rho_0 = 1.0 \times 10^3$ $\textup{kg/m}^3$. The corresponding Poisson's ratio is 0.35. The displacement and velocity on the bottom surface are fixed to be zero, traction boundary conditions are applied on the rest surfaces. The analytic form of the traction as well as the forcing terms can be obtained from the analytic formulation of the displacement field \eqref{eq:3d_manu_solution_disp}. In particular, to corroborate the proposed formulation \eqref{eq:isothermal_governing_disp_velo}-\eqref{eq:isothermal_governing_eqn_momentum}, the forcing term $\bm B$ is obtained by inserting \eqref{eq:3d_manu_solution_disp} into the pure displacement formulation. To obtain a mesh with uniform element size, we generate a hexahedral mesh first and decompose each hexahedron into six tetrahedrons with equal size, as is shown in Figure \ref{fig:decomp_cube_into_six_tets}.  The problem is simulated up to $T=5.0\times 10^{-4}$ s with a fixed time step size $\Delta t = 5.0\times 10^{-6}$ s. The stabilization parameters are chosen with $c_m = 0.1$ and $c_c = 0.1$. The relative errors of the displacement, velocity, pressure, deformation gradient, and the deviatoric part of the Cauchy stress in the $\mathcal L^2$-norm are plotted in Figure \ref{fig:manu_sol_3d_conv_plot}. Note immediately that the $\mathcal L^2$-norms of the errors in displacement and velocity converge quadratically; the asymptotic convergence rate for the pressure is 1.8; the deformation gradient and the deviatoric part of the Cauchy stress converge linearly. In \cite{Masud2005}, a similar sub-optimal convergence rate for the pressure field has been observed, using a different stabilized formulation.
\begin{figure}
	\begin{center}
	\begin{tabular}{c}
\includegraphics[angle=0, trim=0 0 700 0, clip=true, scale = 0.20]{./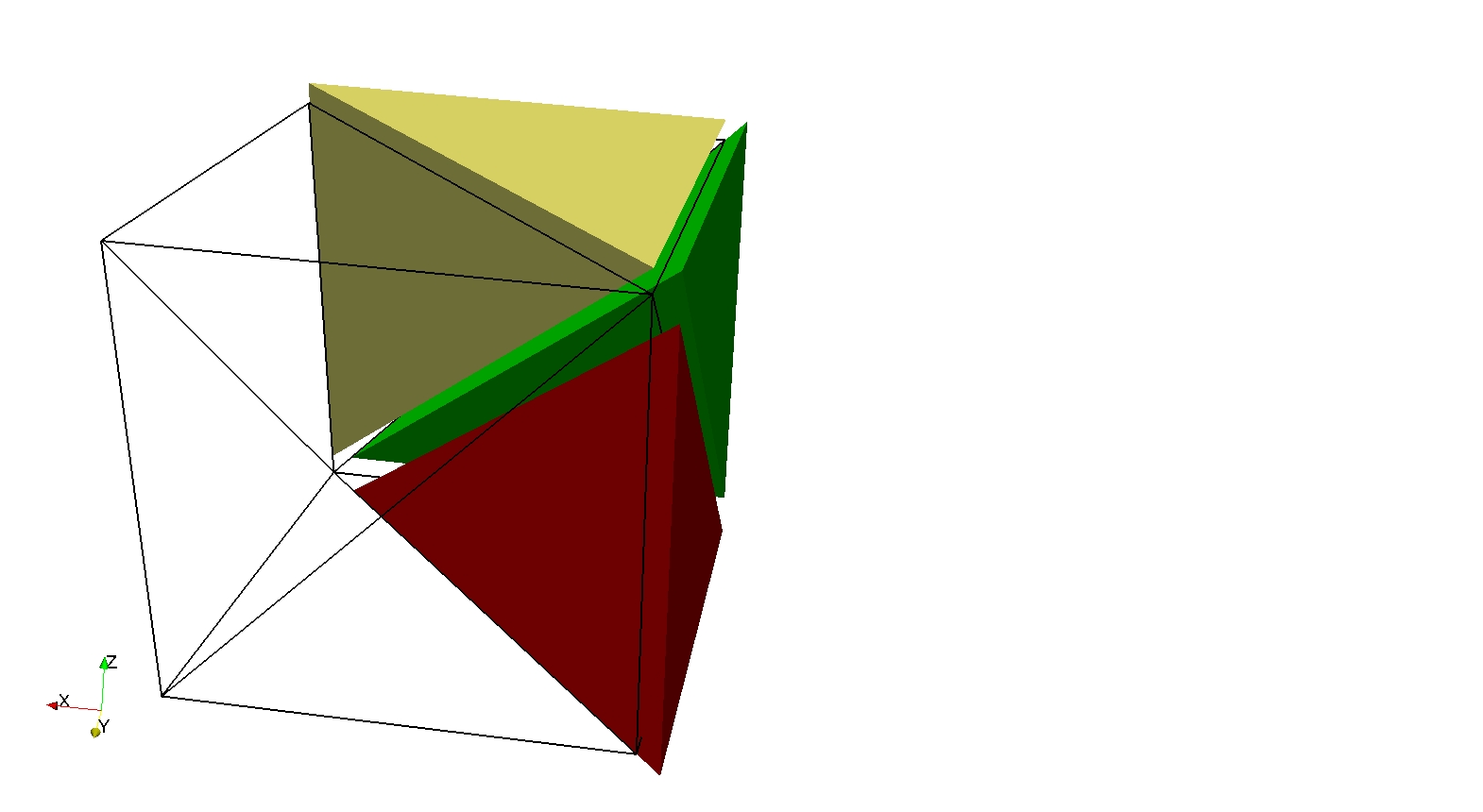}
\end{tabular}
\caption{Decomposition of a unit cube into six tetrahedrons with equal size. The cube is first split into two triangular prisms. Each prism is further divided into three tetrahedrons. The diameter of the tetrahedrons' circumscribing sphere is $\sqrt{3}$ times the edge length of the cube. This decomposition allows one to create a structured tetrahedral mesh.} 
\label{fig:decomp_cube_into_six_tets}
\end{center}
\end{figure}

\begin{figure}
	\begin{center}
	\begin{tabular}{c}
\includegraphics[angle=0, trim=50 80 50 50, clip=true, scale = 0.55]{./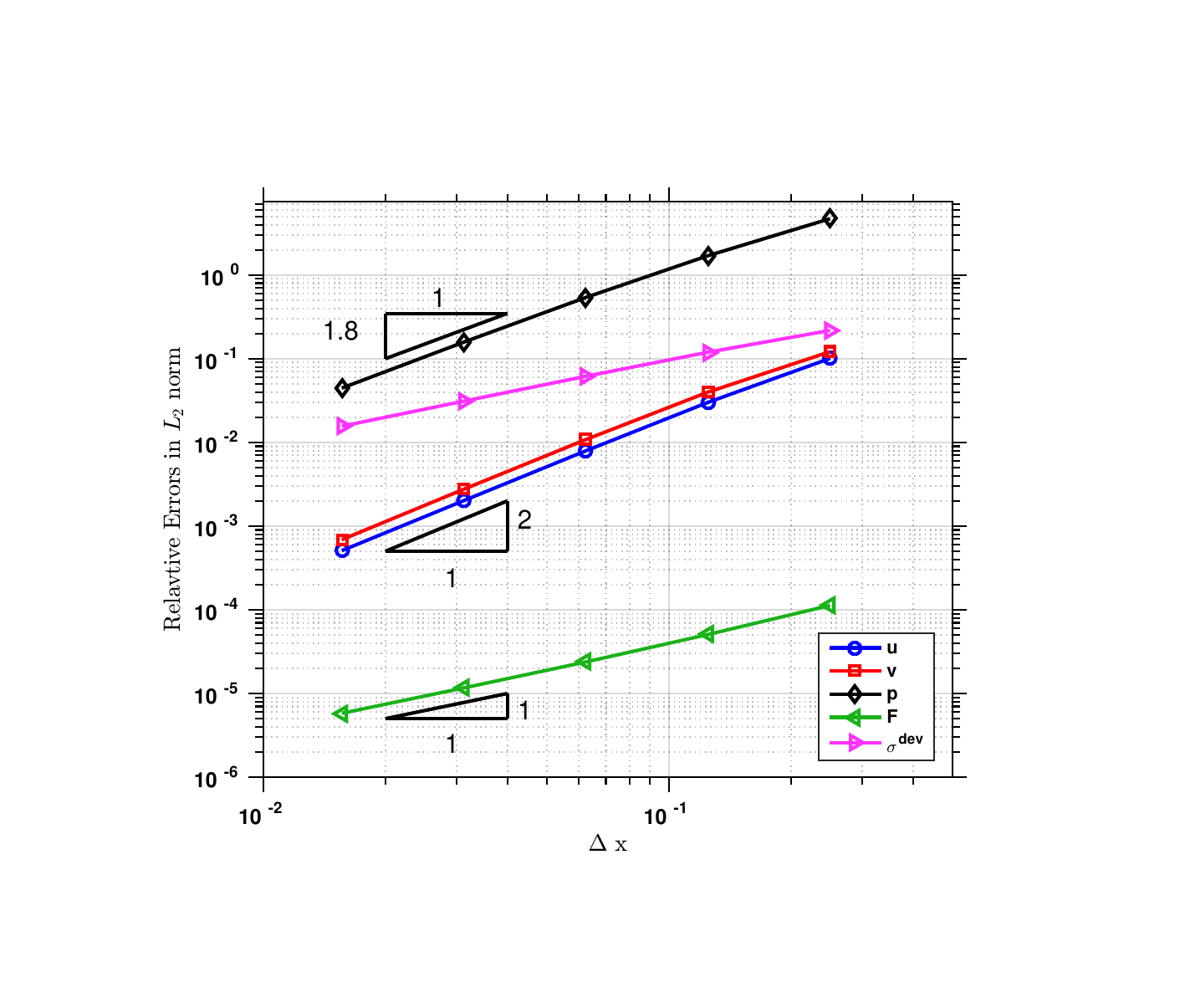}
\end{tabular}
\end{center}
\caption{Three-dimensional manufactured solution for compressible hyperelasticity: Spatial convergence rates of the displacement, velocity, pressure, deformation gradient, and the deviatoric part of the Cauchy stress.} 
\label{fig:manu_sol_3d_conv_plot}
\end{figure}

\begin{figure}
	\begin{center}
	\begin{tabular}{c}
\includegraphics[angle=0, trim=50 80 50 50, clip=true, scale = 0.55]{./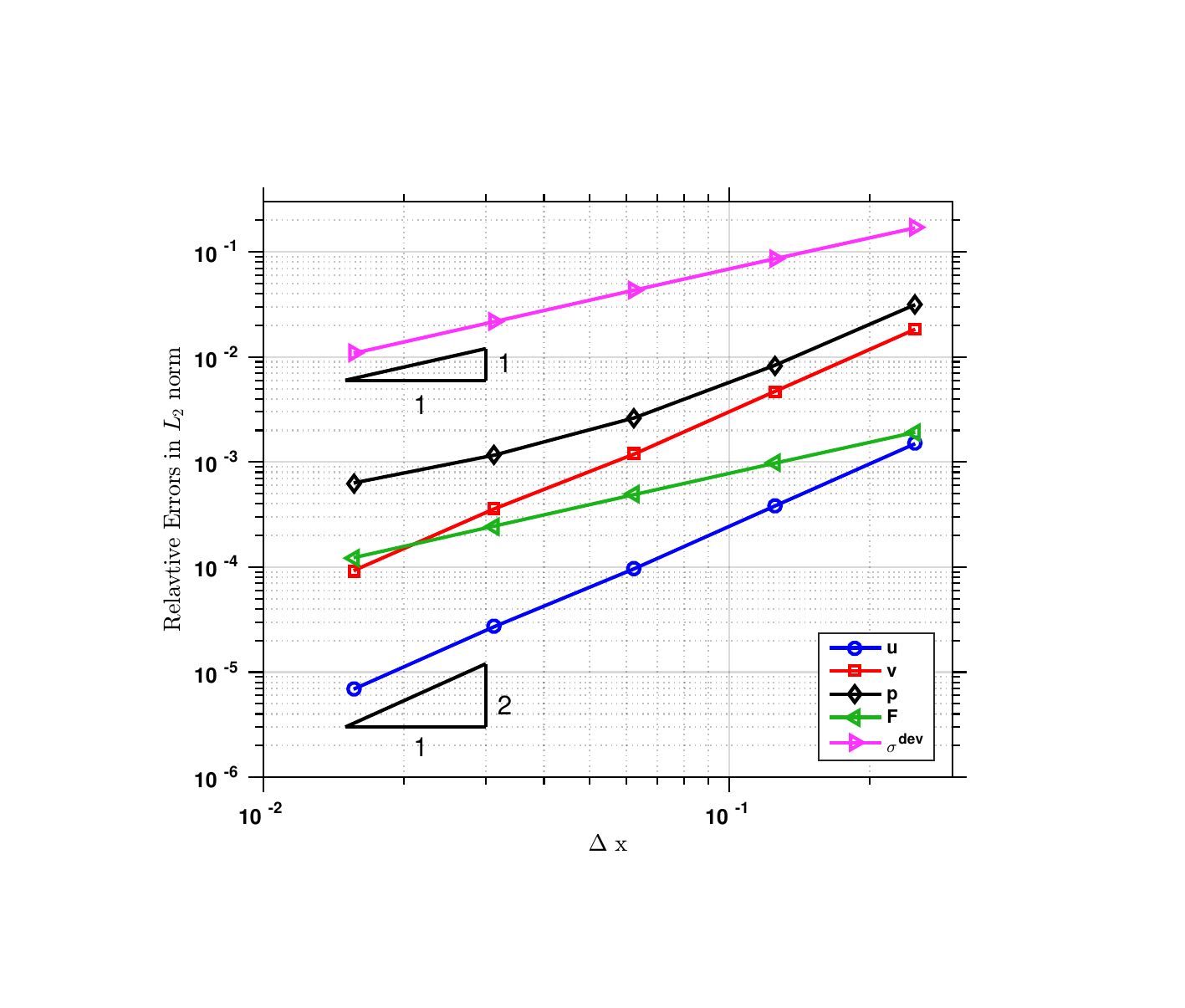}
\end{tabular}
\end{center}
\caption{Three-dimensional manufactured solution for incompressible hyperelasticity: Spatial convergence rates of the displacement, velocity, pressure, deformation gradient, and the deviatoric part of the Cauchy stress.} 
\label{fig:manu_sol_3d_incompressible_conv_plot}
\end{figure}

\subsection{Manufactured solution for fully incompressible hyperelasticity}
In the second example, we consider a fully incompressible Neo-Hookean material model. Its Gibbs free energy is 
\begin{align}
G\left( \tilde{\bm C}, p \right) = \frac{\mu^s}{2\rho_0} \left( \textup{tr}\tilde{\bm C} -3 \right) + \frac{p}{\rho_0} = \frac{\mu^s}{2\rho_0} \left( \textup{tr}\bm C -3 \right) + \frac{p}{\rho_0}.
\end{align}
The geometrical domain is again a unit cube (1 cm $\times$ 1 cm $\times$ 1 cm). The shear modulus is chosen as $100$ KPa; the density $\rho_0$ is chosen to be $1.0\times10^3$ $\textup{kg/m}^3$. 
The analytic forms of the displacement and the pressure fields are given as follows.
\begin{align}
\label{eq:manu_2_exa_disp}
\bm U &= \frac{L_0}{T_0^2} t^2 \begin{bmatrix}
\sin(\gamma_2 Y) \sin(\gamma_2 Z) \\
0 \\
0
\end{bmatrix}, \\
\label{eq:manu_2_exa_pres}
P &= \frac{M_0}{L_0T_0^4}t^2 \sin(\beta_2 X) \sin(\beta_2 Y) \sin(\beta_2 Z).
\end{align}
In the above, $M_0=1.0\times 10^{-3}$ kg, $L_0 = 1.0\times 10^{-2}$m, $T_0 = 1.0\times 10^{-3}$ s, $\beta_2 = 0.2\pi$ rad/cm, and $\gamma_2 = 0.1\pi$ rad/cm. The displacement and velocity on the bottom surface are fixed to be zero, and traction boundary conditions are applied on the rest surfaces. The analytic form of the tractions over the boundaries and the forcing term can be obtained from the given displacement and pressure fields \eqref{eq:manu_2_exa_disp}-\eqref{eq:manu_2_exa_pres}. Similar to the previous example, the forcing term is obtained by inserting \eqref{eq:manu_2_exa_disp} into the pure displacement formulation. The problems are computed up to $T=5.0 \times 10^{-4}$ s with a fixed time step size $\Delta t = 2.5\times 10^{-6}$ s. The stabilization parameters are chosen as $c_m = 0.1$ and $c_c = 0.1$. The relative errors of the displacement, velocity, pressure, deformation gradient, and the deviatoric part of the Cauchy stress in the $\mathcal L^2$-norm are plotted in Figure \ref{fig:manu_sol_3d_incompressible_conv_plot}. For the fully incompressible material, the convergence rates for the displacement and velocity are of order two; the convergence rates for the deformation gradient and the deviatoric part of the Cauchy stress are first-order; the convergence rate for the pressure becomes first-order for the fully incompressible material. Although there is no convergence proof for this nonlinear problem, the convergence rate of the pressure in the $\mathcal L^2$-norm for linear elasticity is proved to be first-order \cite{Franca1991,Hughes1986a}. Therefore, the first-order convergence rate for the pressure field is indeed expected.

\begin{figure}
	\begin{center}
	\begin{tabular}{c}
\includegraphics[angle=0, trim=60 10 10 30, clip=true, scale = 0.4]{./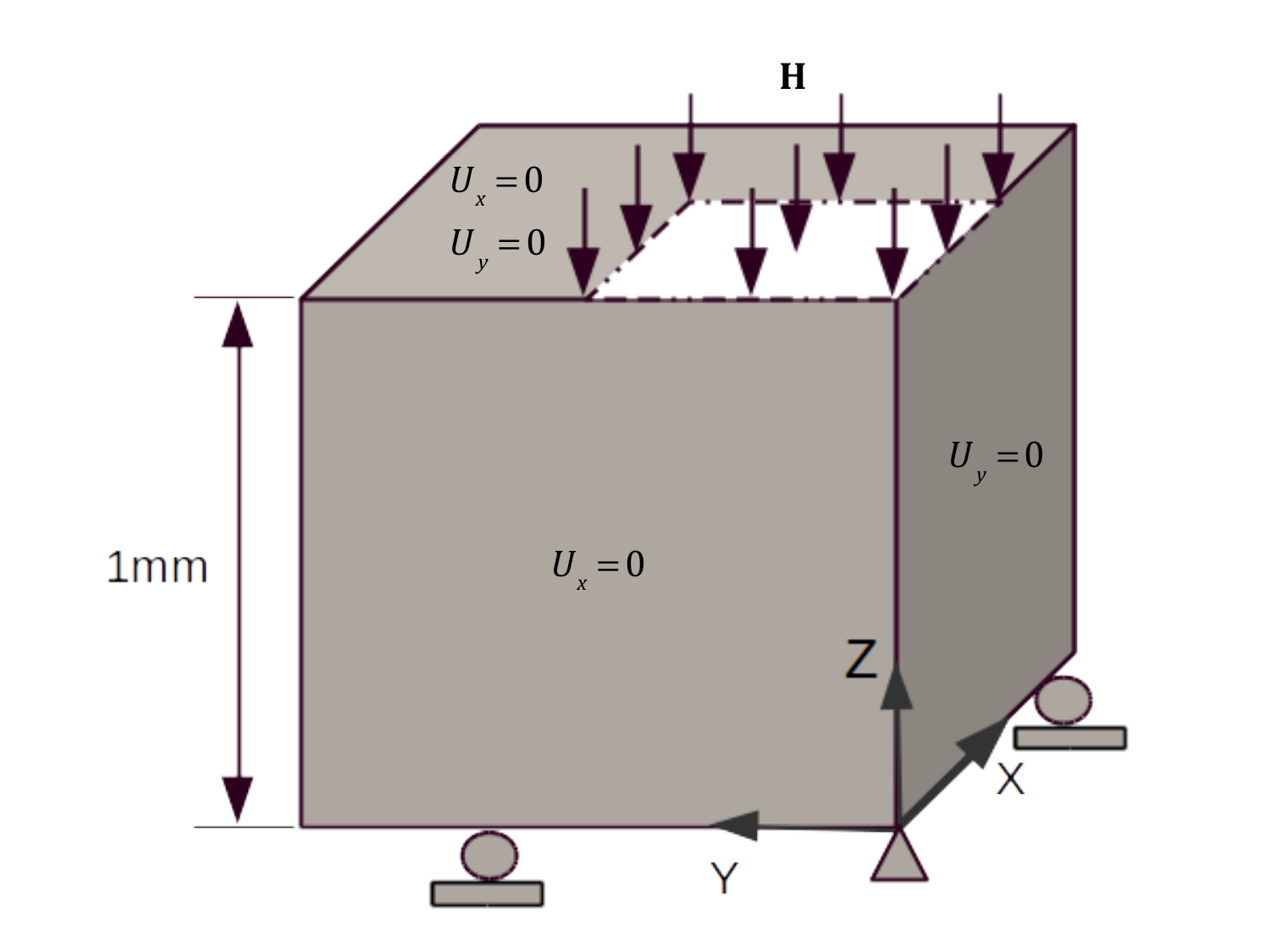}
\end{tabular}
\caption{Three-dimensional compression of a block: geometry of the referential configuration and the boundary conditions.} 
\label{fig:block_compression_setting}
\end{center}
\end{figure}

\begin{figure}
	\begin{center}
	\begin{tabular}{c}
\includegraphics[angle=0, trim=100 80 100 80, clip=true, scale = 0.5]{./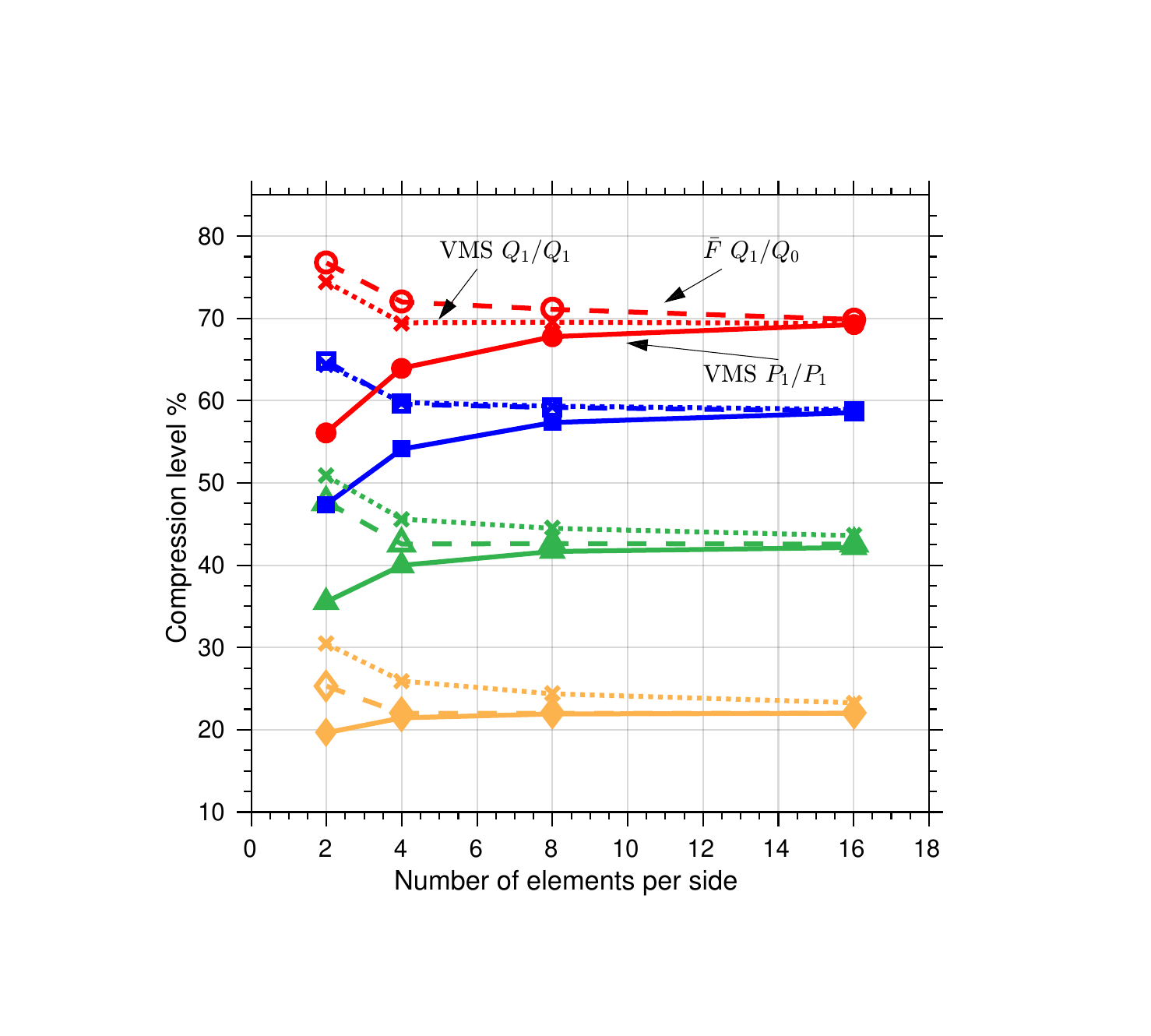}
\end{tabular}
\caption{Three-dimensional compression level in \% versus the number of elements per side: Comparison of results for $\Lambda = 20$ (in yellow color), $\Lambda = 40$ (in green color), $\Lambda = 60$ (in blue color), and $\Lambda = 80$ (in red color). The solid lines represent solutions from the VMS formulation with P1/P1 element; the dashed line represent solutions from the $\bar{F}$-projection method with Q1/Q0 element; the dotted line represent solutions from the VMS formulation with Q1/Q1 element.} 
\label{fig:block_compression_compare}
\end{center}
\end{figure}

\subsection{Nearly incompressible block under compression }
This example was initially proposed in \cite{Reese2000} as a benchmark problem for nearly incompressible solids. The original problem was proposed as a quasi-static problem. In this work, we pose the problem in the dynamic setting and adopt the Neo-Hookean model described by the following Gibbs free energy function.
\begin{align}
\label{eq:3d_block_compression_material_model}
G\left( \tilde{\bm C}, p \right) = \frac{\mu^s}{2\rho_0}  \left( \textup{tr}\tilde{\bm C} - 3 \right)  + \frac{p \sqrt{p^2+\kappa^2}-p^2}{2\kappa \rho_0} - \frac{\kappa}{2\rho_0}\ln \left( \frac{\sqrt{p^2+\kappa^2}-p}{\kappa} \right).
\end{align}
Notice that the material model we used here is slightly different from the one used in the original paper \cite{Reese2000} because we demand that the Gibbs free energy takes the decoupled form \eqref{eq:gibbs_free_energy_additive_split} and the isochoric part of the energy is a function of $\tilde{\bm C}$. Following \cite{Reese2000}, the material parameters are chosen as $\mu^s = 80.194$ MPa, $\kappa = 400889.806$ MPa, and $\rho_0 = 1.0 \times 10^3$ $\textup{kg/m}^3$. The corresponding Poisson's ratio is 0.4999. The problem setting is illustrated in Figure \ref{fig:block_compression_setting}. Symmetry boundary conditions are applied on the $X=Y=Z=0$ planes. A `dead' surface load $\bm H$ is applied on a quarter portion of the top of the block, and the load assumes the negative z-direction in the reference configuration. The magnitude of $\bm H$ is measured by $|\bm H| = \Lambda H_0$, and the reference value $H_0 = 4$ MPa. We calculate the compression level of the upper center point with $\Lambda = 20$, $40$, $60$, and $80$. The load force is gradually increased as a linear function of time. The problem is integrated in time with $200$ time steps. We performed simulations using the $\bar{F}$-projection method with $Q_1/Q_0$ element and the VMS formulation with $Q_1/Q_1$ element and $P_1/P_1$ element. The stabilization parameters are chosen as $c_m = 0.1$ and $c_c = 0.1$. In Figure \ref{fig:block_compression_compare}, the compression levels of the upper center point with mesh refinement for different loading ratios and different numerical methods are illustrated. The VMS formulation with $P_1/P_1$ element tends to give very stiff response with coarse meshes. In contrast, the $\bar{F}$-projection method with $Q_1/Q_0$ element and the VMS formulation with $Q_1/Q_1$ element tend to give very soft response with coarse meshes. With mesh refinement, convergent results are obtained for all three different methods.

\subsection{Flow over an elastic cantilever}
The two-dimensional flow-induced oscillation of an elastic cantilever attached to a fixed square block was initially designed in \cite{Wall2000} as a benchmark problem for FSI algorithms. It has been used extensively in the literature to assess the quality of FSI algorithms \cite{Bazilevs2008,Dettmer2006,Wood2010}. In this work, the original two-dimensional problem is extended to a three-dimensional problem by extruding the original problem in the third direction \cite{Wood2010}. On the inflow surface, a uniform flow in the x-direction is imposed with magnitude $51.3$ $\textup{cm/s}$; on the outflow surface, a zero traction boundary condition is applied; on the four lateral boundary surfaces, slip boundary conditions (zero normal velocity and zero tangential traction) are applied. The geometry and boundary conditions are illustrated in Figure \ref{fig:fsi_beam_benchmark_geometry}. The fluid density and dynamic shear viscosity are set to be $\rho_0^f = 1.18\times 10^{-3}$ $\textup{g/cm}^3$ and  $\bar{\mu} = 1.82 \times 10^{-4}$ poise, respectively. The solid is modeled as the Neo-Hookean material with the volumetric free energy given by \eqref{eq:psi_vol_M94},
\begin{align*}
G\left( \tilde{\bm C}, p \right) = \frac{\mu^s }{2\rho_0} \left( \textup{tr}\tilde{\bm C} - 3 \right) -\frac{\kappa}{\rho_0} \ln \left(\frac{\kappa}{p+\kappa} \right).
\end{align*}
The solid referential density $\rho^s_0$ is $0.1$ $\textup{g/cm}^3$, the shear modulus $\mu^s$ and the Poisson's ratio are $9.2593 \times 10^5$ $\textup{dyn/cm}^2$ and $0.35$, respectively.

\begin{figure}
	\begin{center}
	\begin{tabular}{c}
\includegraphics[angle=0, trim=0 20 0 60, clip=true, scale = 0.5]{./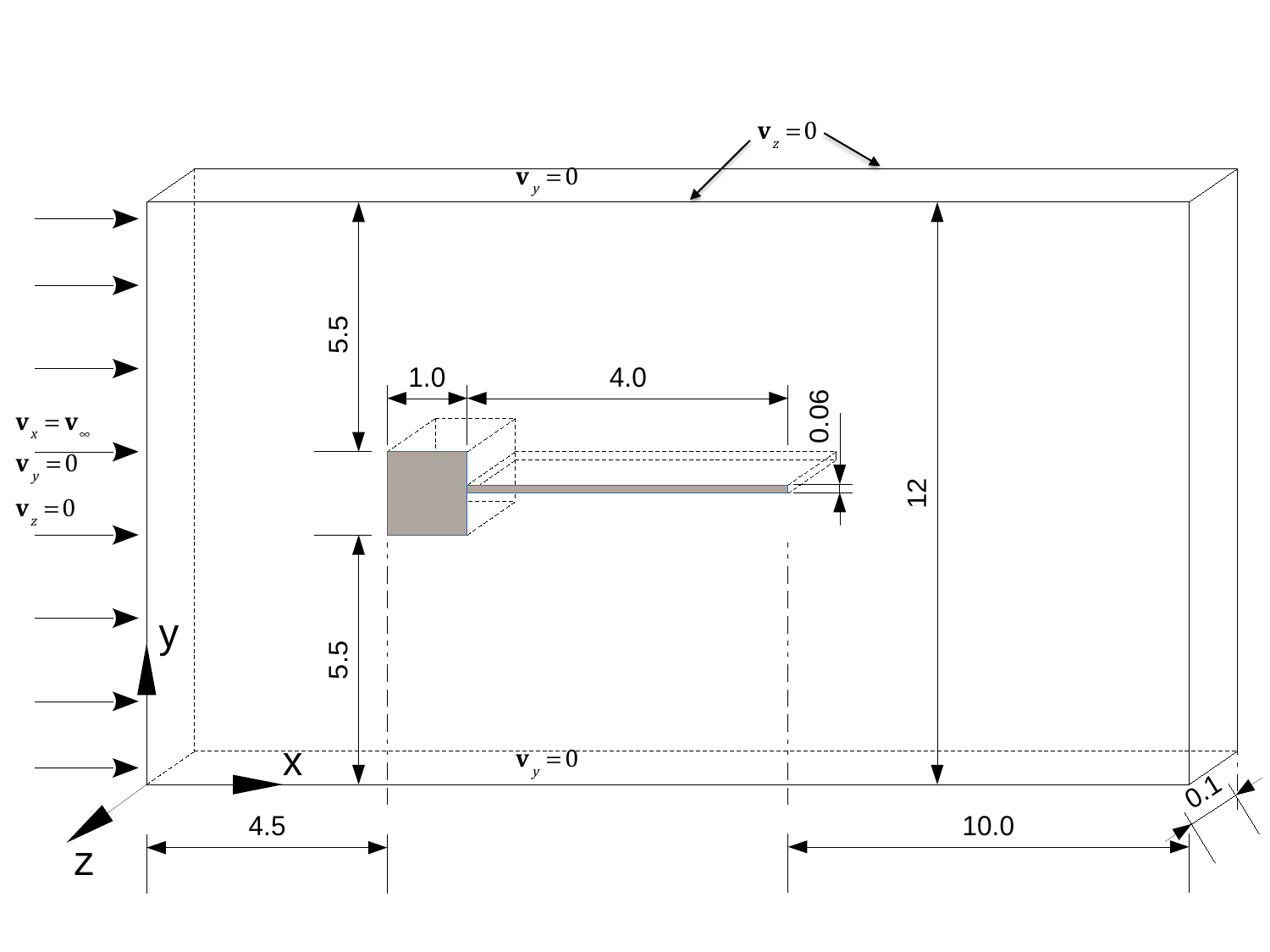} 
\end{tabular}
\caption{Flow over an elastic cantilever: geometry setting and boundary conditions.} 
\label{fig:fsi_beam_benchmark_geometry}
\end{center}
\end{figure}

\begin{figure}
	\begin{center}
	\begin{tabular}{cc}
\includegraphics[angle=0, trim=120 60 60 60, clip=true, scale = 0.11]{./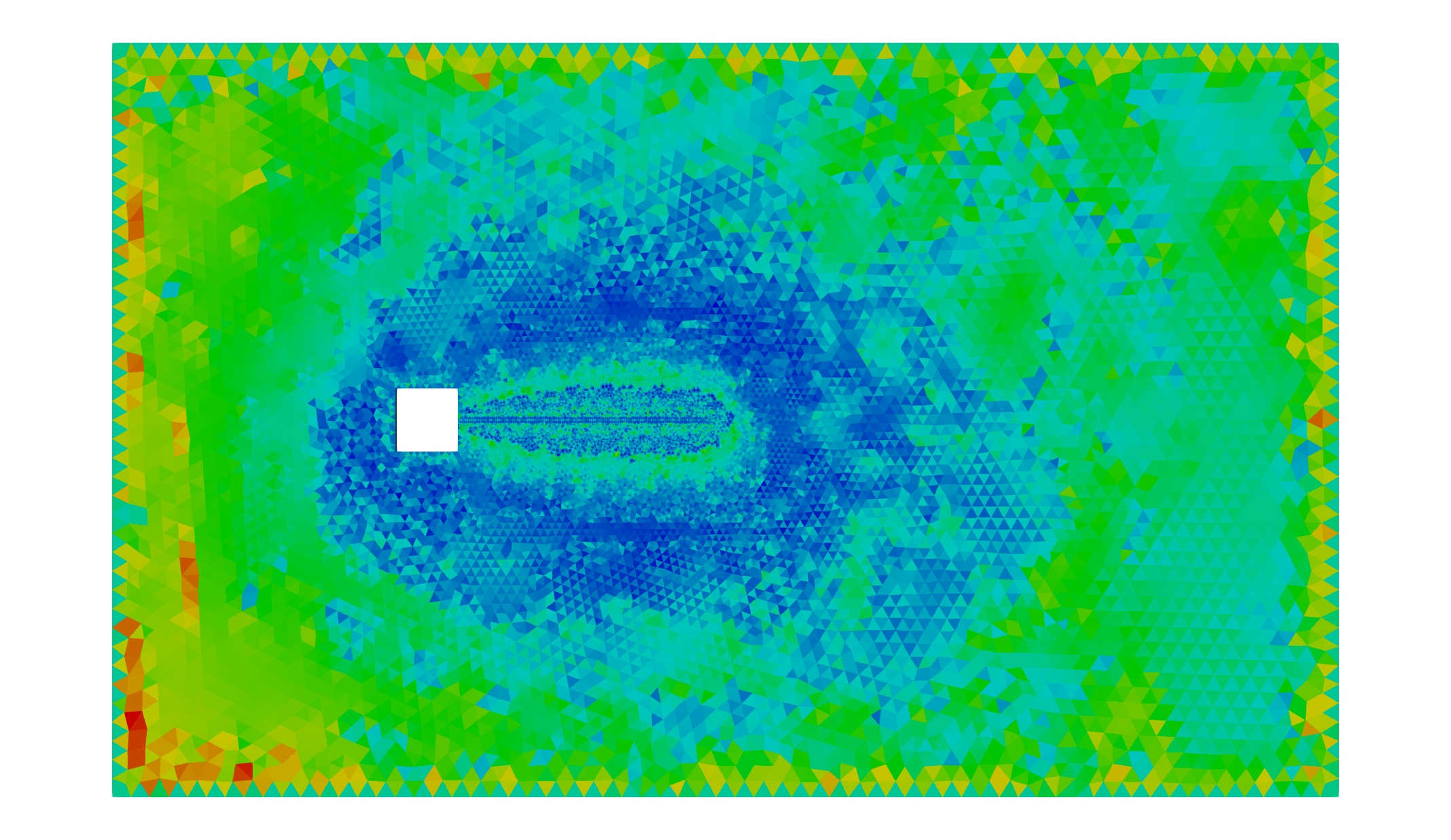} &
\includegraphics[angle=0, trim=120 60 60 60, clip=true, scale = 0.11]{./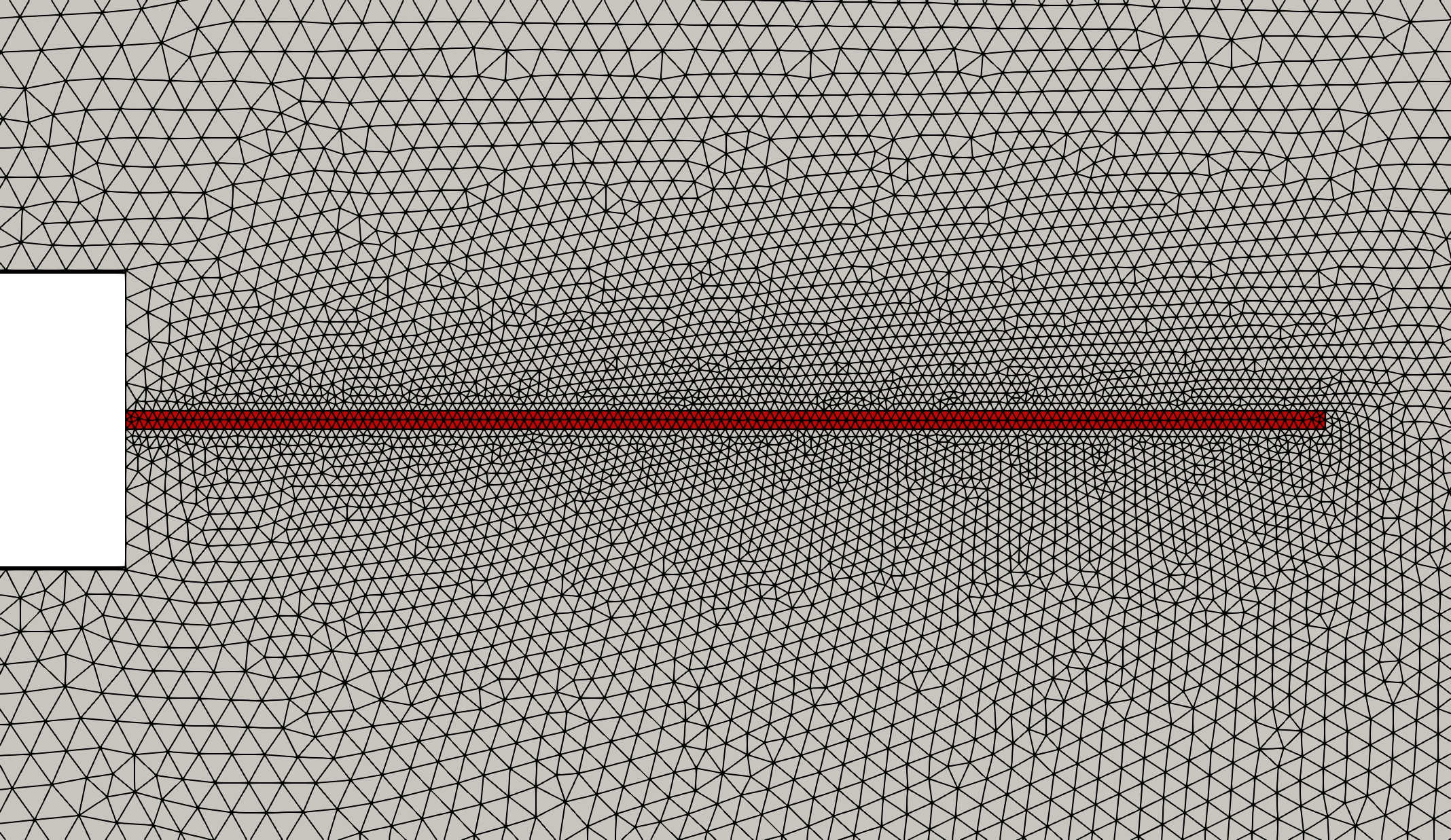} \\[1mm]
\includegraphics[angle=0, trim=500 1060 500 0, clip=true, scale = 0.12]{./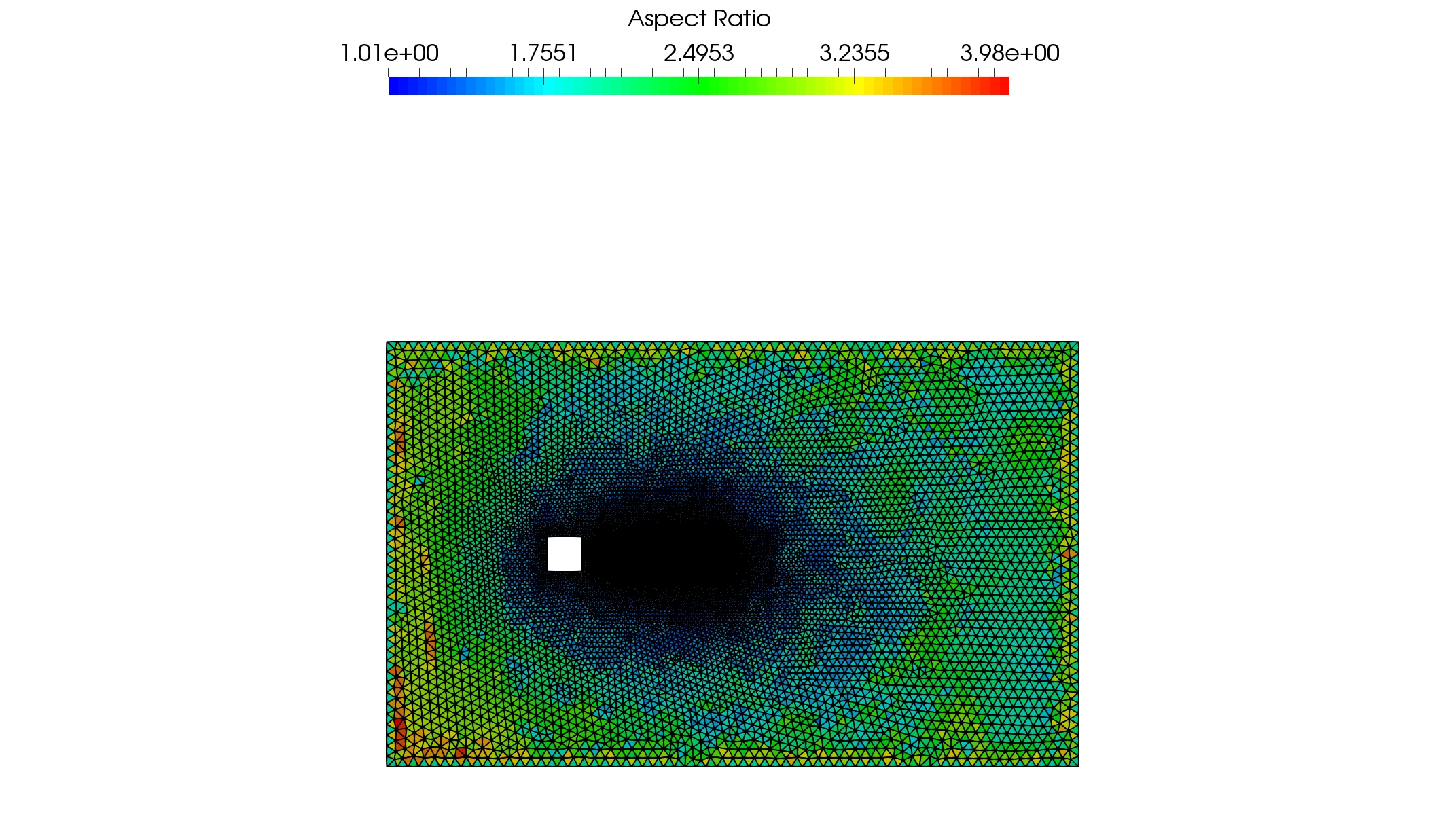}  & \\
(a) & (b)
\end{tabular}
\caption{The FSI mesh of the flow over an elastic cantilever problem. (a) The mesh employed in the computations. The color represents the aspect ratio of the tetrahedral elements. (b) Detailed view of the mesh near the cantilever. The solid subdomain is depicted with red color. There are two elements through the thickness of the cantilever.} 
\label{fig:fsi_beam_benchmark_mesh}
\end{center}
\end{figure}


\begin{figure}
\centering
\includegraphics[angle=0, trim=120 80 120 80, clip=true, scale = 0.5]{./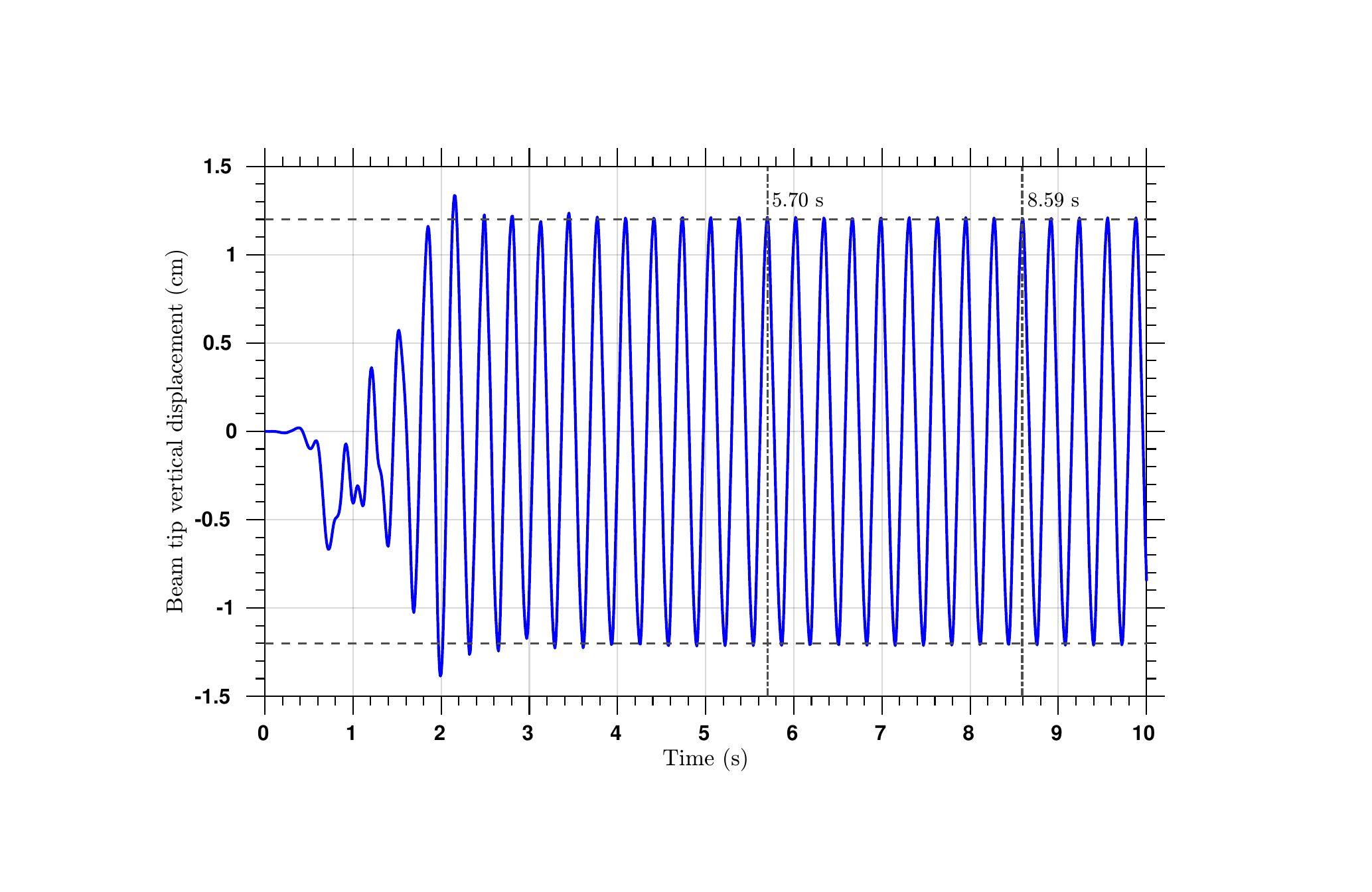}
\caption{Flow over an elastic cantilever: Vertical displacement of the tip of the cantilever. The tip vertical displacement is 1.2 cm. There are 9 periods between 5.70 s and 8.59 s. The average period of the oscillation is 0.32 s.} 
\label{fig:fsi_beam_benchmark_tip_disp}
\end{figure}

\begin{table}[htbp]
\begin{center}
\tabcolsep=0.19cm
\renewcommand{\arraystretch}{1.5}
\begin{tabular}{c c c }
\hline
Author & Oscillation period (s) & Tip displacement (cm) \\
\hline
W.A. Wall \cite{Wall2000} & 0.31 - 0.36 & 1.12 - 1.32 \\
W.G. Dettmer and D. Peri\'c \cite{Dettmer2006} & 0.32 - 0.34 & 1.1 - 1.4 \\
Y. Bazilevs, et al. \cite{Bazilevs2008} & 0.33 & 1.0 - 1.5 \\
C. Wood, et al. \cite{Wood2010} & 0.32 - 0.36 & 1.10 - 1.20 \\
Current work & 0.32 & 1.20 \\
\hline 
\end{tabular}
\end{center}
\caption{Comparison of the obtained results with reported results in the literature.}
\label{table:beam_disp_period_compare}
\end{table}

\begin{figure}
\begin{center}
\begin{tabular}{cc}
\scalebox{0.22}{\includegraphics[angle=0, trim=360 360 860 360,
clip=true]{./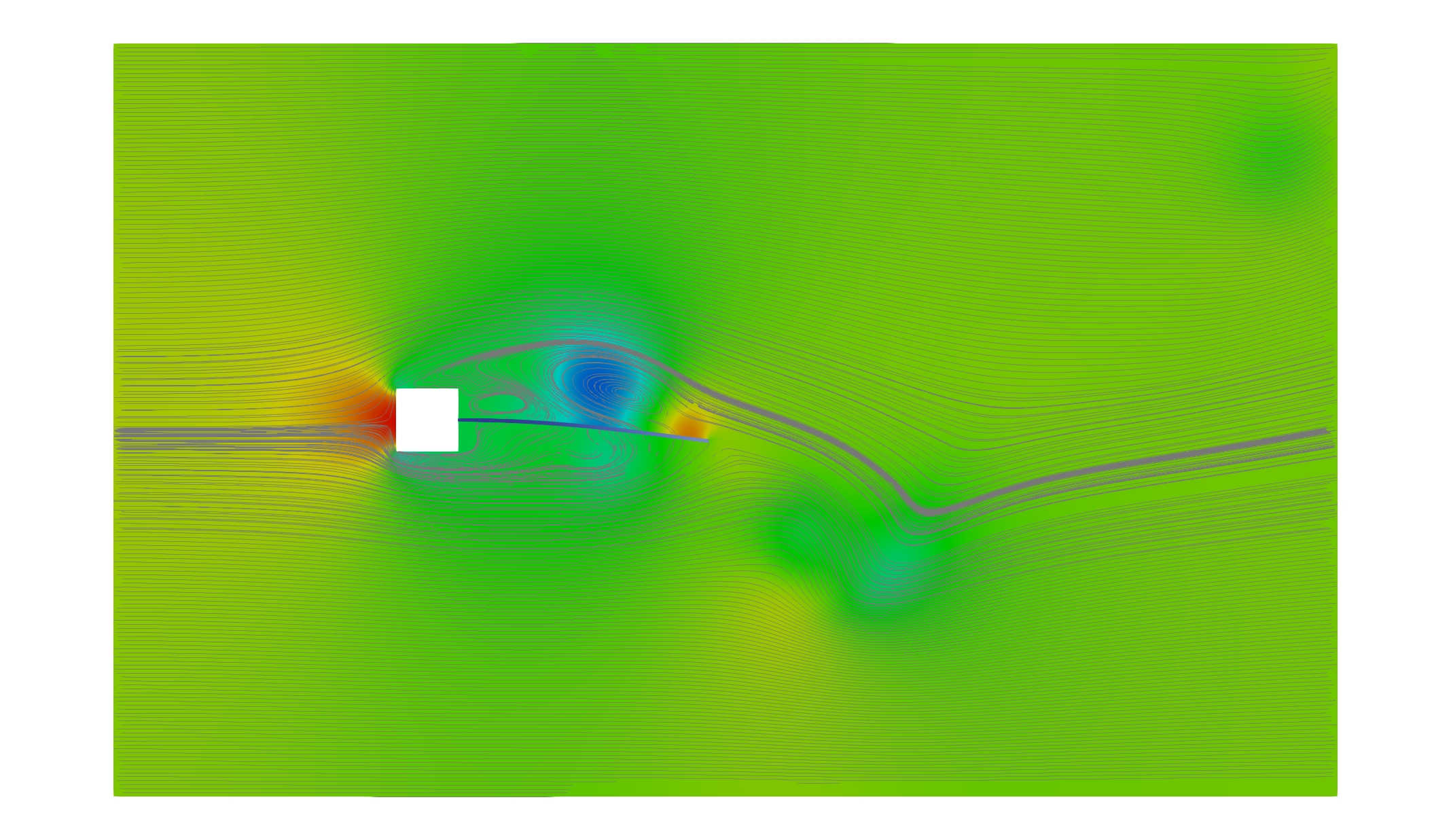} }  &  
\scalebox{0.22}{\includegraphics[angle=0, trim=360 360 860 360,
clip=true]{./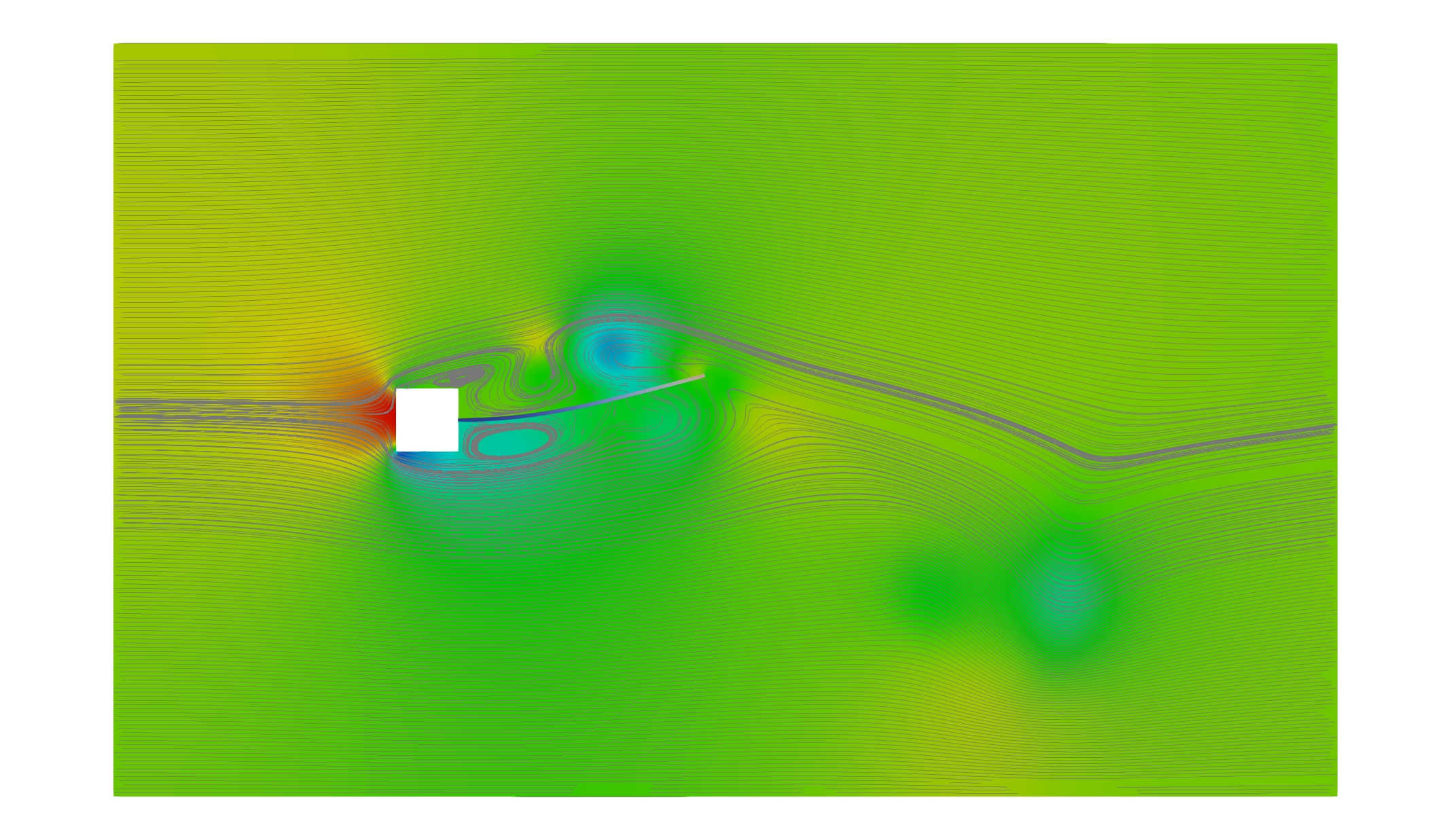} } \\
5.60 s & 5.65 s \\
\scalebox{0.22}{\includegraphics[angle=0, trim=360 360 860 360,
clip=true]{./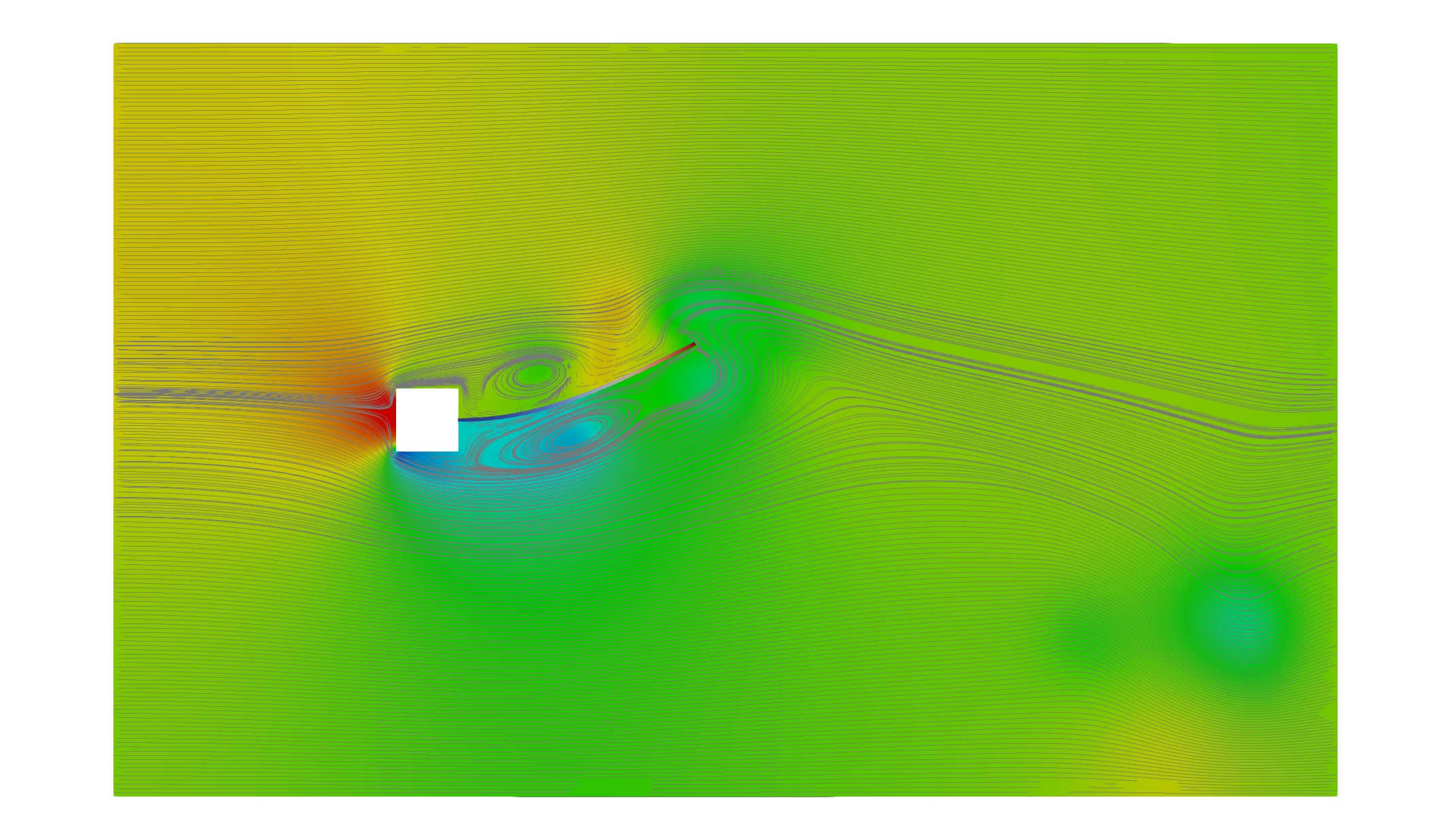} }  &  
\scalebox{0.22}{\includegraphics[angle=0, trim=360 360 860 360,
clip=true]{./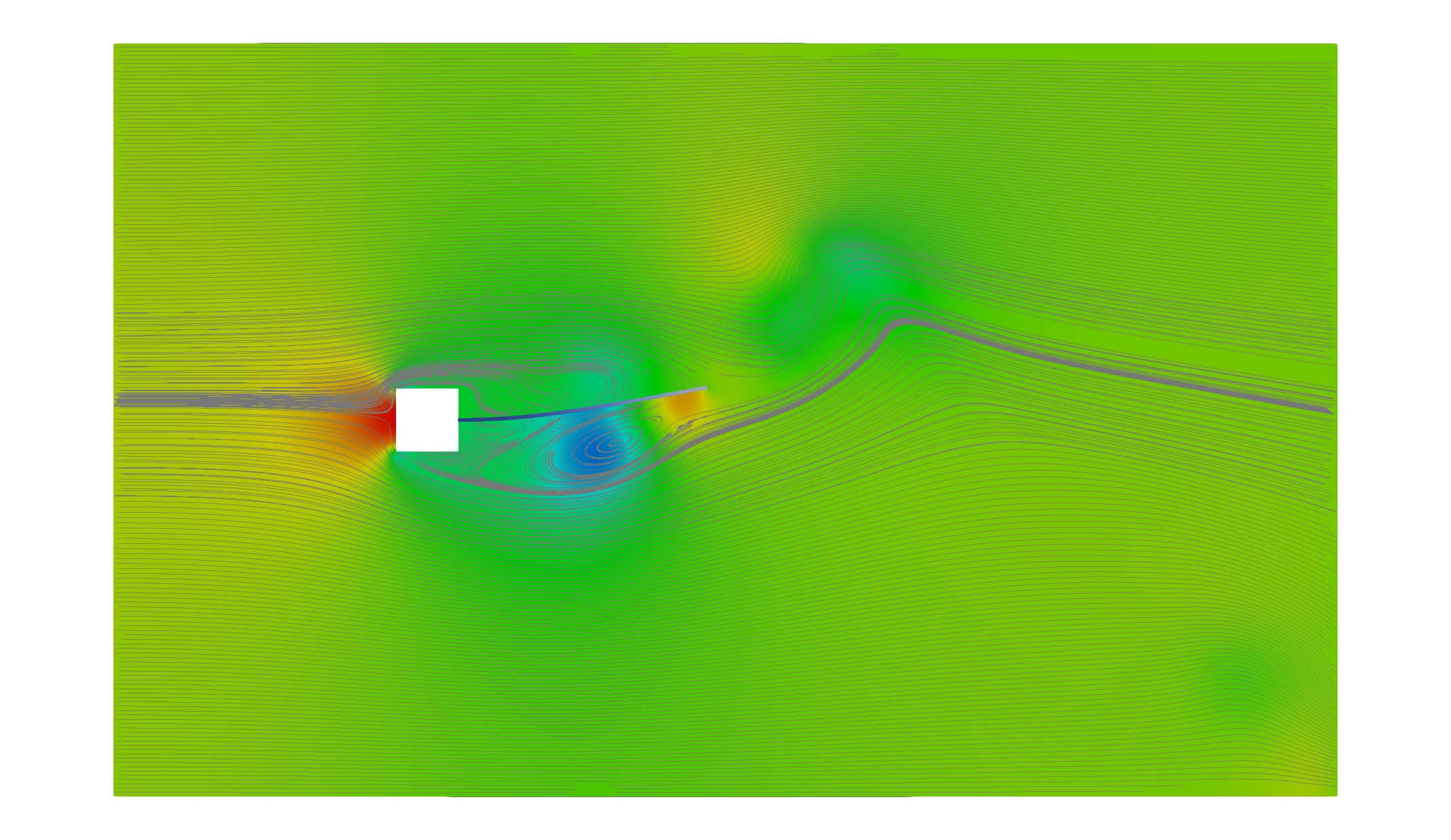} } \\
5.70 s & 5.75 s \\
\scalebox{0.22}{\includegraphics[angle=0, trim=360 360 860 360,
clip=true]{./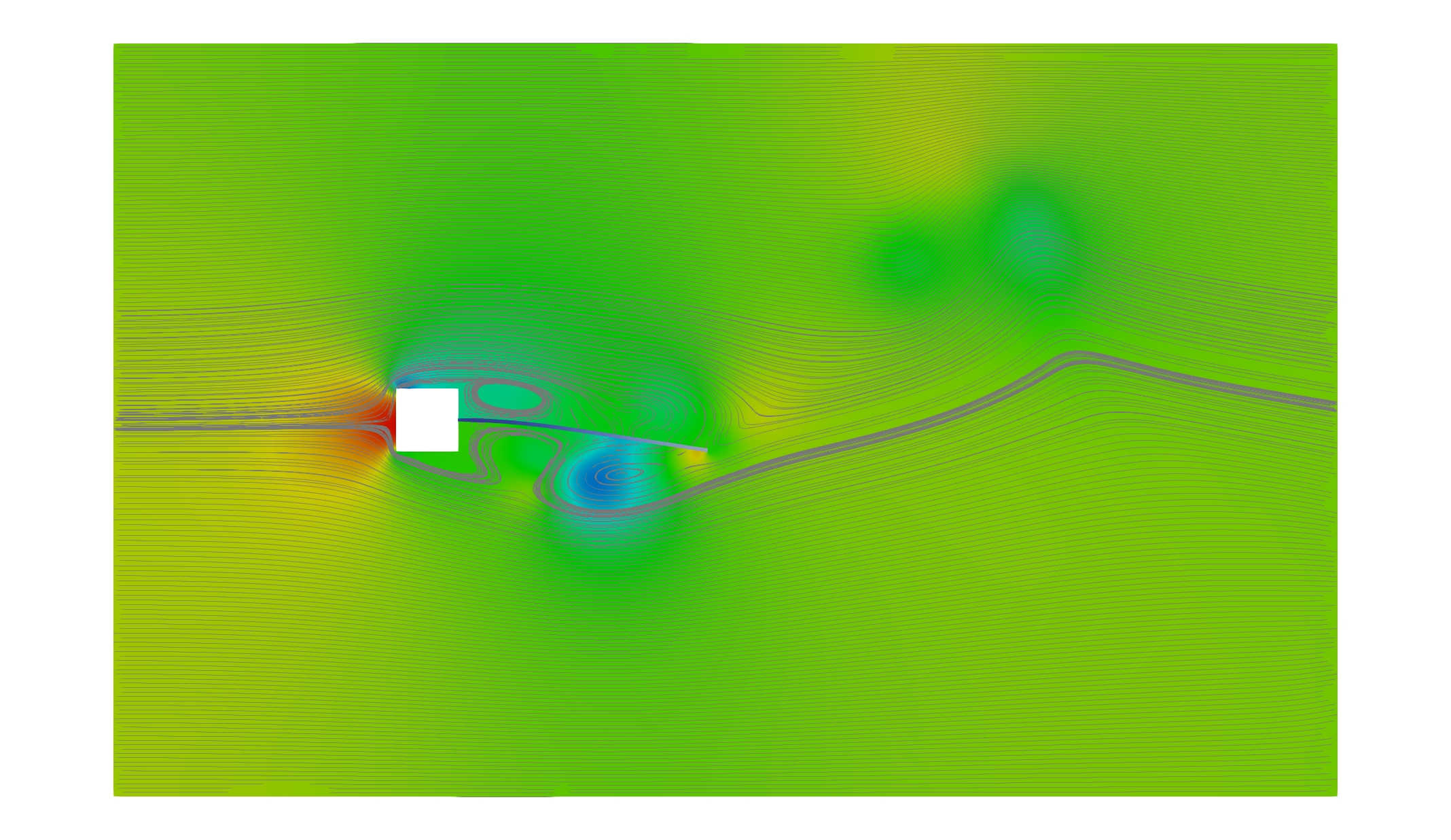} }  &  
\scalebox{0.22}{\includegraphics[angle=0, trim=360 360 860 360,
clip=true]{./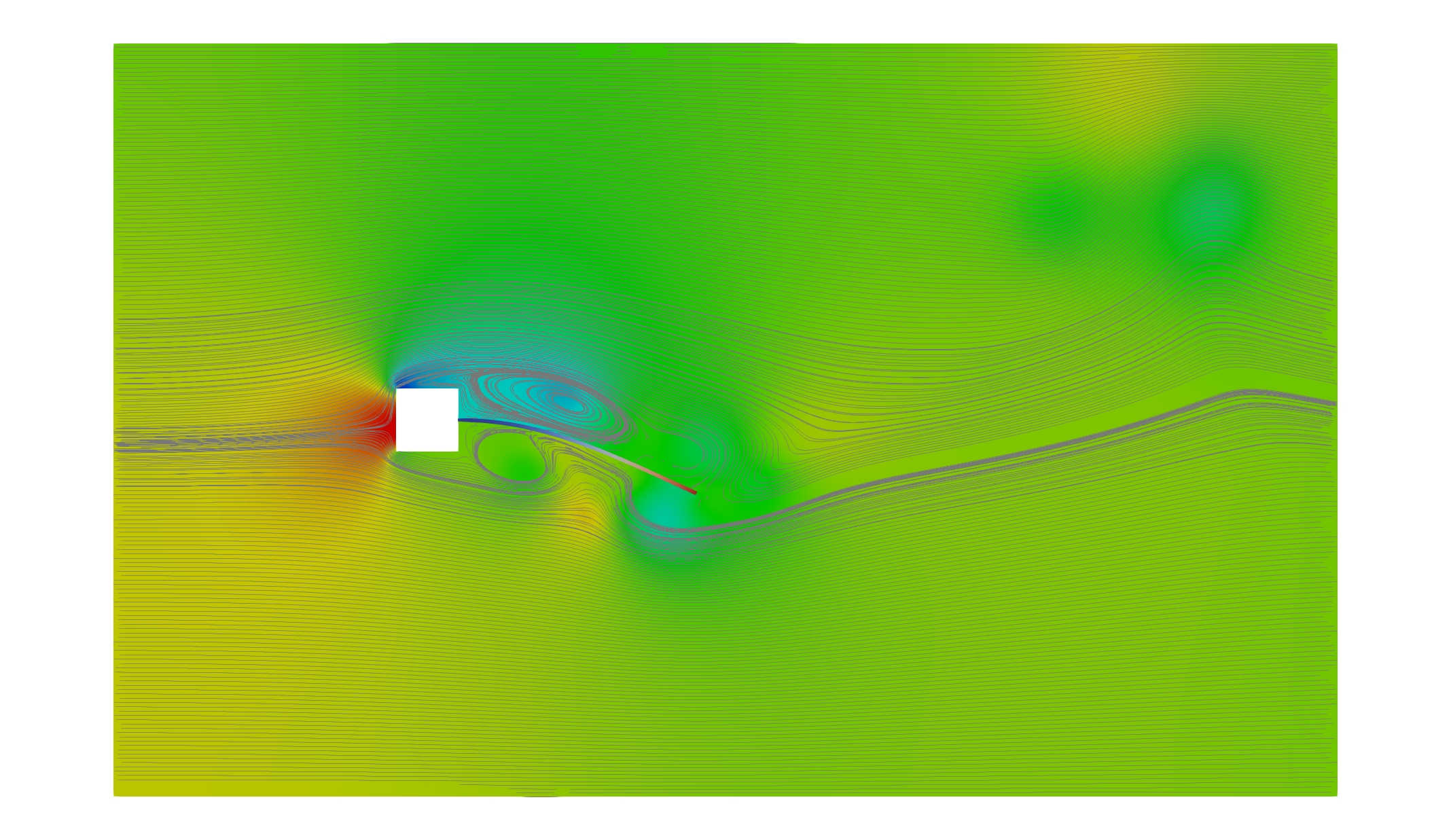} } \\
5.80 s & 5.85 s
\end{tabular}
\includegraphics[angle=0, trim=0 1000 0 0, clip=true, scale = 0.18]{./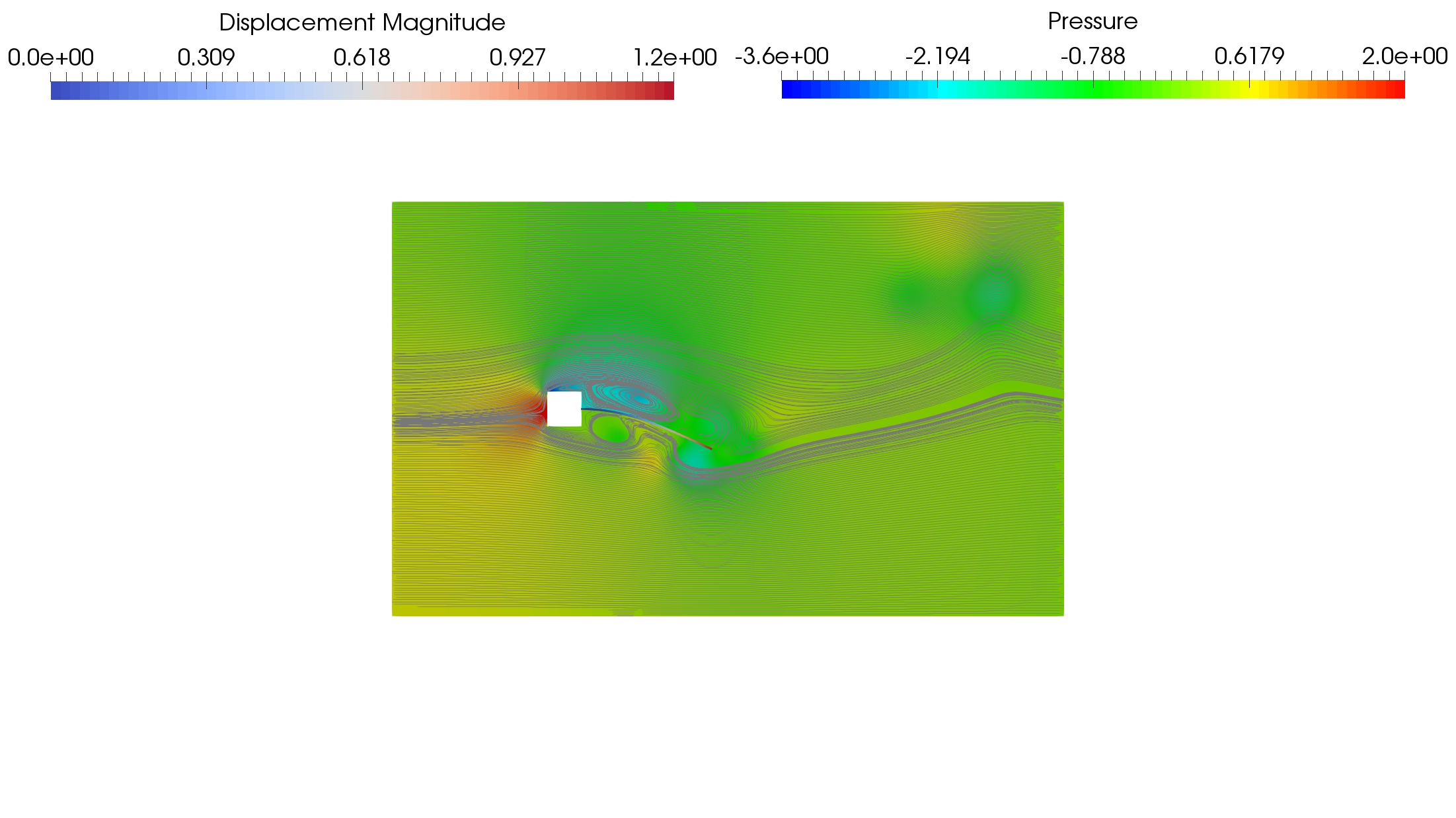}
\caption{Flow over an elastic cantilever: The fluid pressure and the cantilever displacement magnitude are depicted with the centimetre-gram-second units. The fluid streamlines are illustrated as well. } 
\label{fig:beam_illustration}
\end{center}
\end{figure}

The fluid subdomain is discretized with 74696 tetrahedral elements, and the solid subdomain is discretized with 5369 tetrahedral elements (Figure \ref{fig:fsi_beam_benchmark_mesh}). The mesh motion is governed by the pseudo-linear-elasticity algorithm, wherein the mesh Young's modulus is set to be unity and the mesh Poisson's ratio is set to be 0.3. The mesh is fixed at the inflow and outflow boundary surfaces as well as the square block surface. On the lateral surfaces, the mesh is only allowed to move in the tangential direction. In the algorithm, the mesh Young's modulus is multiplied with the inverse of the Jacobian determinant of the element mapping to provide the Jacobian stiffening. This procedure is found to be necessary for this simulation. The problem is simulated with $\Delta t = 1.0 \times 10^{-3}$ s up to $T=10.0$ s. We set  $c_m = c_c = 0$ in the study of this example. In our numerical experiences, for compressible materials, setting $c_m = c_c = 0$ gives more accurate results. The vertical displacement of the cantilever over time is plotted in Figure \ref{fig:fsi_beam_benchmark_tip_disp}, and the comparison of the obtained results with those in the literature is listed in Table \ref{table:beam_disp_period_compare}. The results of the new FSI computational framework are in good agreement with the reported results. Snapshots of the simulation results are depicted in Figure \ref{fig:beam_illustration}.

\subsection{The Greenshields-Weller numerical benchmark}
The FSI benchmark example we consider next describes wave propagation in an elastic tube \cite{Greenshields2005}. This benchmark example has been adopted to verify several existing cardiovascular FSI solvers \cite{Bazilevs2006,Passerini2013}. The computational domain consists of a right circular hollow cylinder representing the elastic tube and an inner right circular cylinder representing the fluid domain. The length of the tube is $10$ cm, the inner radius for the fluid domain is $1$ cm, and the outer radius is $1.2$ cm. For the elastic material, the reference density $\rho^s_0 = 1$ $\textup{g/cm}^3$, the Young's modulus $E^s = 1.0 \times 10^7$ $\textup{dyn/cm}^2$, and the Poisson's ratio is $0.3$. In the original benchmark problem \cite{Greenshields2005}, a small-strain linear elastic material is used. Apparently, the small-strain model is not incorporated in our theory. We adopt the Neo-Hookean model with the volumetric free energy given by \eqref{eq:psi_vol_M94},
\begin{align}
\label{eq:GW_material_model}
G\left( \tilde{\bm C}, p \right) = \frac{\mu^s }{2\rho_0} \left( \textup{tr}\tilde{\bm C} - 3 \right) -\frac{\kappa}{\rho_0} \ln \left(\frac{\kappa}{p+\kappa} \right).
\end{align}
In doing so, the density is given by $\rho = \rho_0 (1+p/\kappa)$, which is identical to the density-pressure relation used in \cite{Greenshields2005}. The fluid is described by the incompressible Navier-Stokes equations with fluid density $\rho^f_0 = 1$ $\textup{g/cm}^3$ and dynamic shear viscosity $\bar{\mu} = 0.04$ poise. The initial conditions for all fields are set to be zero. A step change in pressure is imposed on the inlet surface and the pressure value $p_{in}$ is set to be $5$ kPa. At the fluid outlet surface and the lateral surface of the solid, stress-free boundary conditions are applied. The solid velocity in the axial direction on the inlet and outlet surfaces are set to be zero. The geometry and the boundary conditions are illustrated in Figure \ref{fig:fsi_benchmark_setting} (a). The solid domain is discretized by structured tetrahedral elements. In the fluid domain, a boundary layer mesh is created with thickness $0.2$ cm, and the rest fluid domain is unstructured. The mesh we used is depicted in Figure \ref{fig:fsi_benchmark_setting} (b). There are $2.19\times 10^6$ elements and $3.76 \times 10^5$ nodal points in the finite element mesh. The problem is simulated with a fixed time step $\Delta t = 2.0 \times 10^{-7}$ s up to $T=8.0\times 10^{-3}$ s. Since the material is compressible, we set $c_m = c_c = 0$ in the study of this example. The mesh motion is given by the harmonic extension algorithm. The analytic value of the wave speed is $8.77$ m/s. To obtain the numerically predicted wave speed, we define the wave front as the location where the pressure is $2.5$ kPa. Using a linear function to fit the locations of the wave front with respect to time, we obtain that the numerical wave speed is $8.49$ m/s, which is 3\% lower than the analytic value and 1\% lower than the numerical prediction given in \cite{Greenshields2005} (8.58 m/s). With the pressure wave at time $8.0$ ms, we may obtain the distance between two peaks is $2.84$ cm, and consequently the frequency of the wave is 298.9 Hz. The analytic values for the wave frequency using three different formulas are 269 Hz, 308 Hz, and 336 Hz, respectively \cite{Greenshields2005}, and the numerical prediction given in \cite{Greenshields2005} is 318 Hz. Our numerical predicted wave frequency is 10 \% higher, 3\% lower, and 12\% lower than the analytic values, and 6 \% lower than the numerical prediction given by \cite{Greenshields2005}. The same problem is also simulated with the Young's modulus $E^s = 1.0 \times 10^6$ $\textup{dyn/cm}^2$ and $p_{in} = 0.5$ kPa. In this case, the analytic value of the wave speed  $2.77$ m/s. Our numerical prediction gives $2.70$ m/s, which is 3\% lower than the theoretical value.  These results demonstrate satisfactory agreement with published results.

\begin{figure}
	\begin{center}
	\begin{tabular}{c}
\includegraphics[angle=0, trim=80 50 80 30, clip=true, scale = 0.66]{./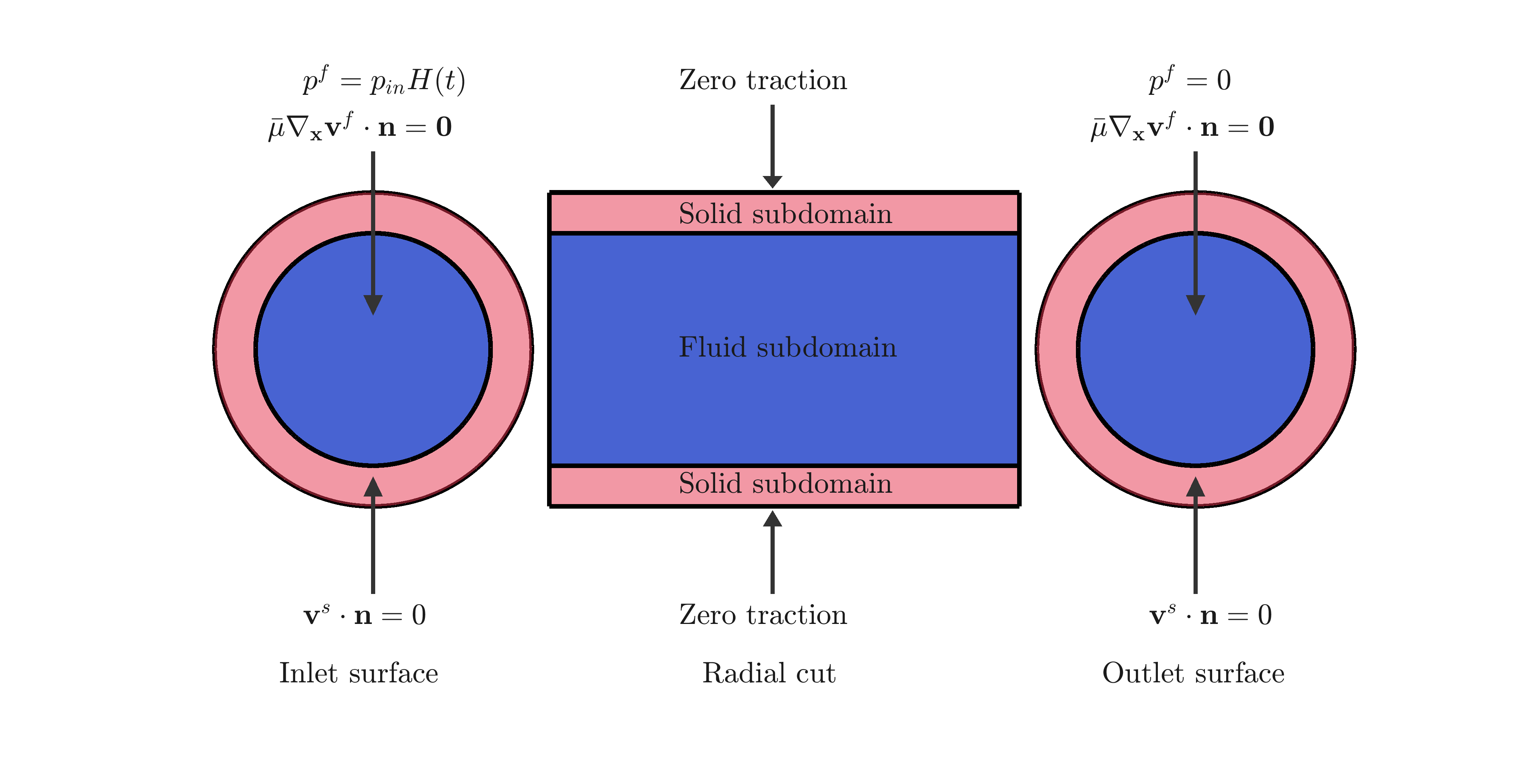} \\
(a) \\[1cm]
\includegraphics[angle=0, trim=80 0 80 0, clip=true, scale = 0.12]{./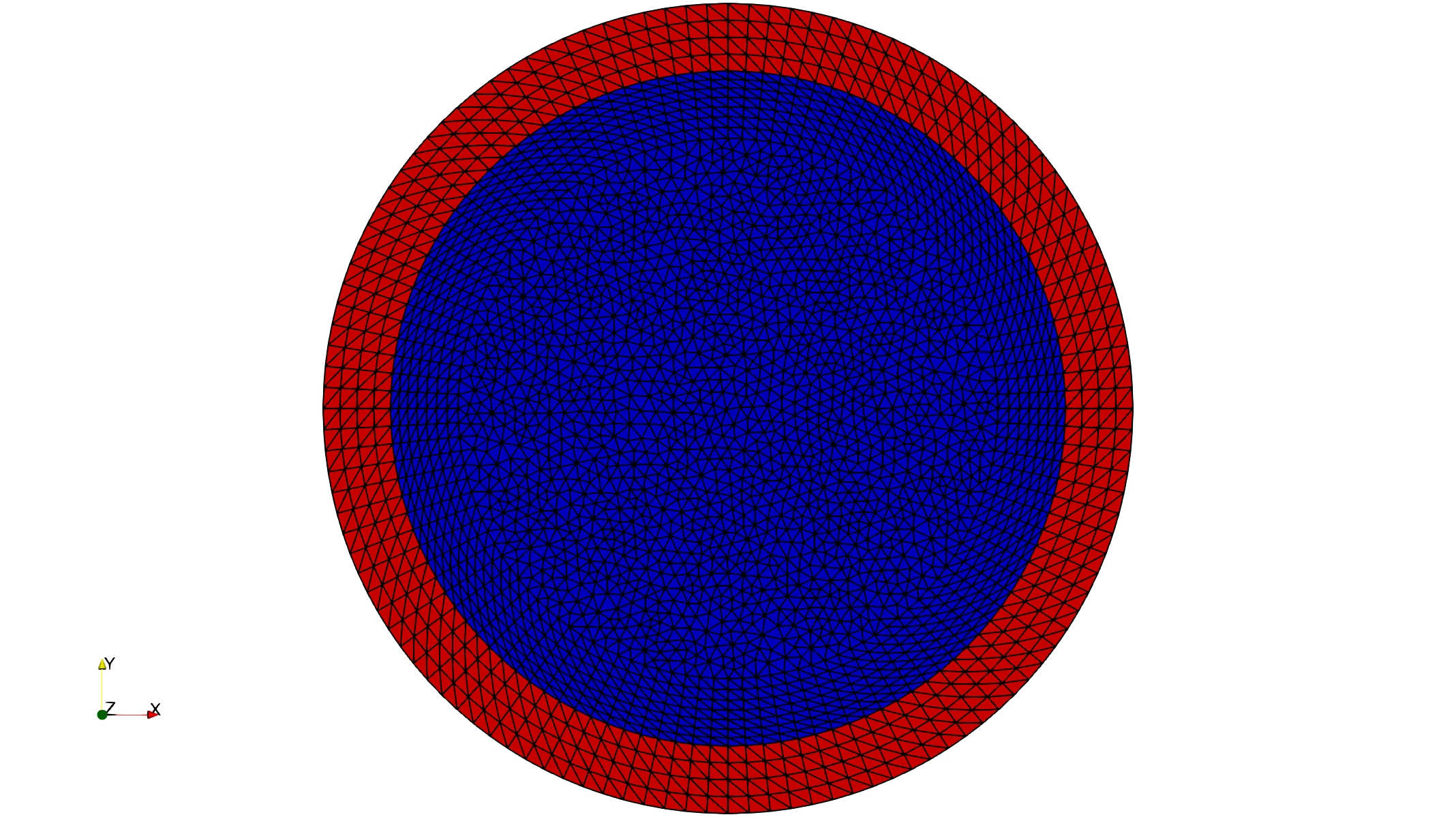} \\[5mm]
(b)
\end{tabular}
\caption{The Greenshields-Weller benchmark problem: (a) Geometry setting and boundary conditions. $H(t)$ represents the Heaviside step function of time $t$; (b) FSI mesh on the outflow surface. The solid subdomain is depicted with red color and the fluid subdomain is depicted with blue color.} 
\label{fig:fsi_benchmark_setting}
\end{center}
\end{figure}

\begin{figure}
\begin{center}
\begin{tabular}{c}
\scalebox{0.18}{\includegraphics[angle=0, trim=0 300 0 300,
clip=true]{./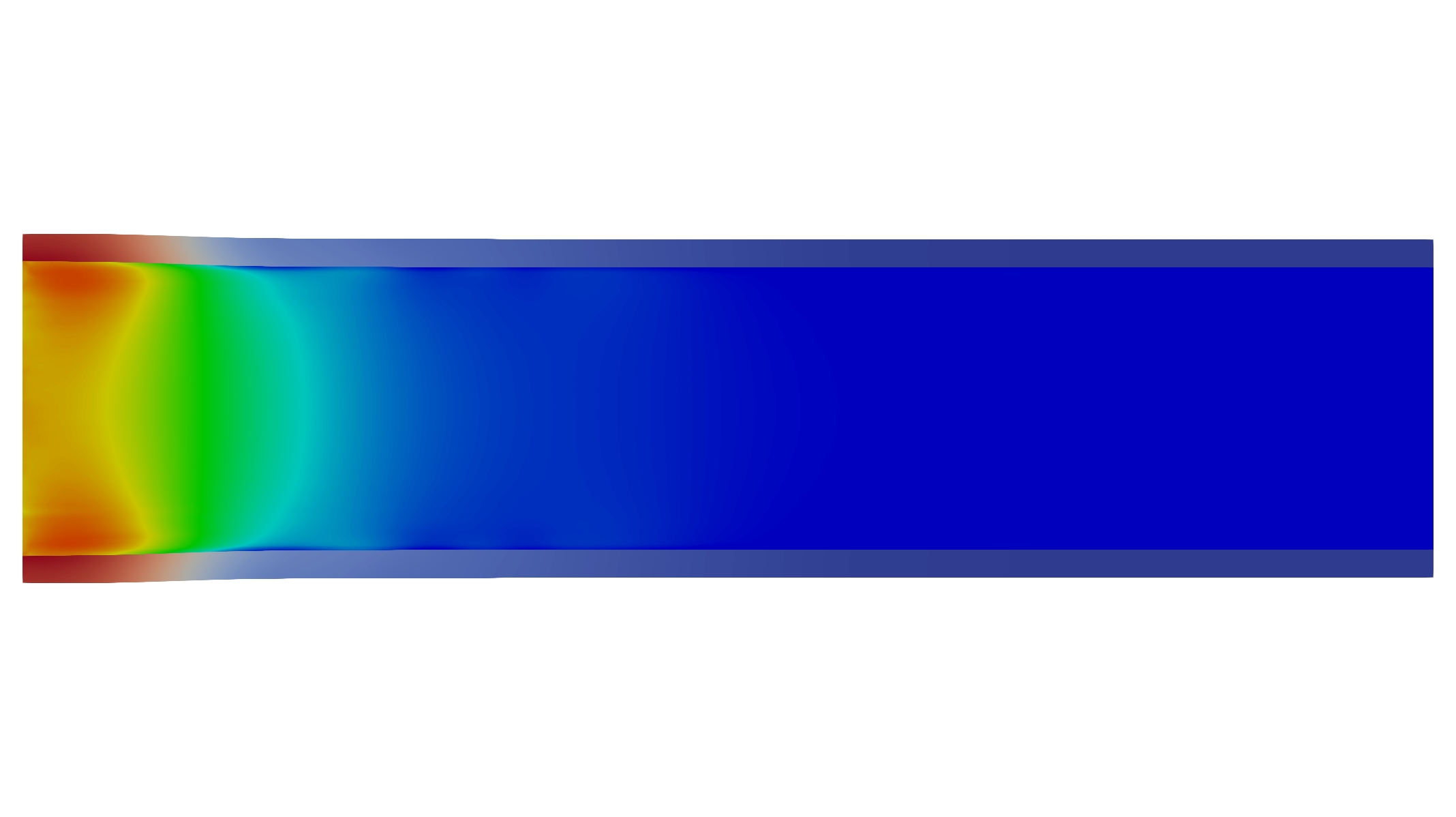} }  \\
$t=2.0$ ms \\
\scalebox{0.18}{\includegraphics[angle=0, trim=0 300 0 300,
clip=true]{./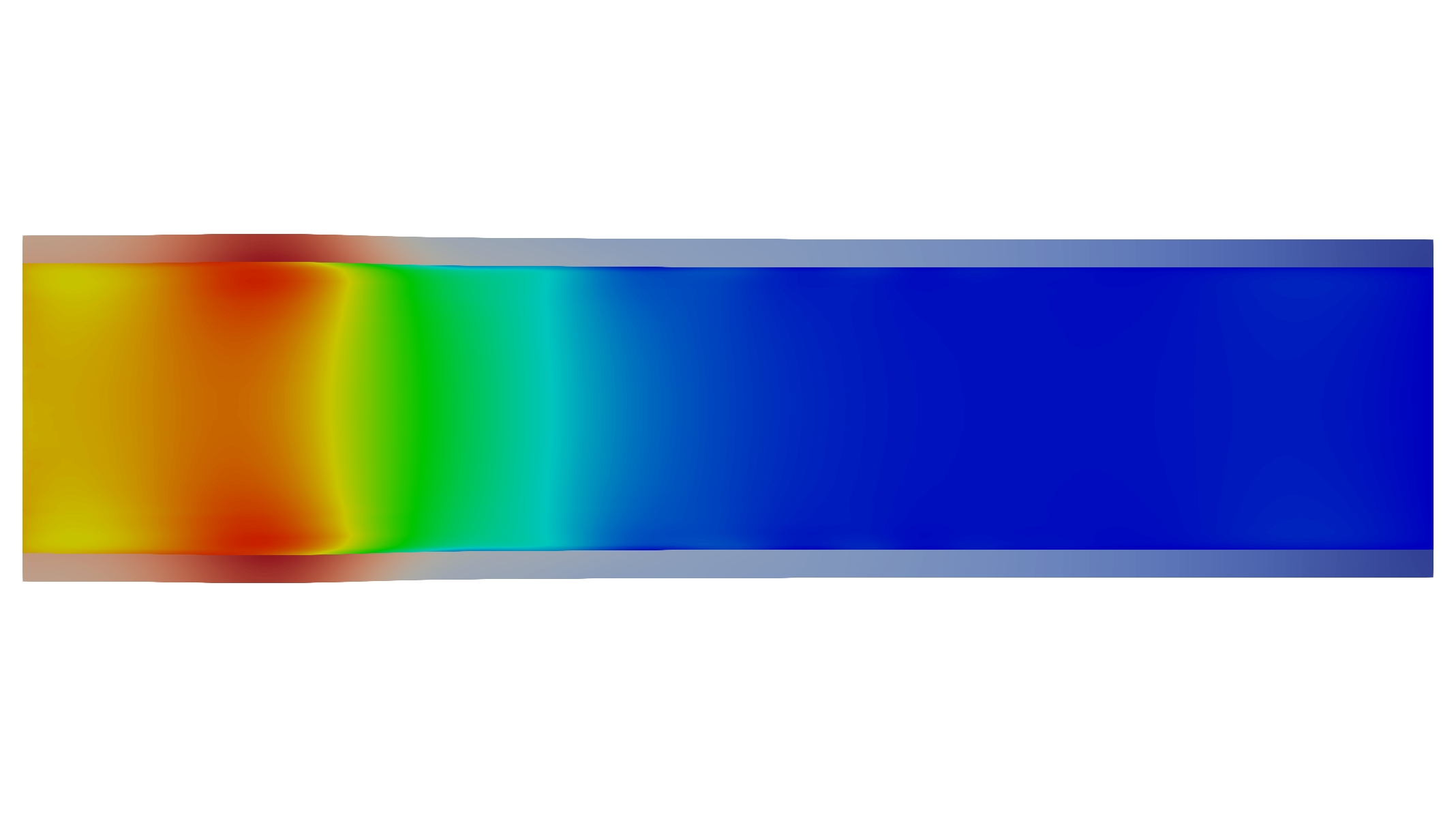} }  \\
$t=4.0$ ms\\
\scalebox{0.18}{\includegraphics[angle=0, trim=0 300 0 300,
clip=true]{./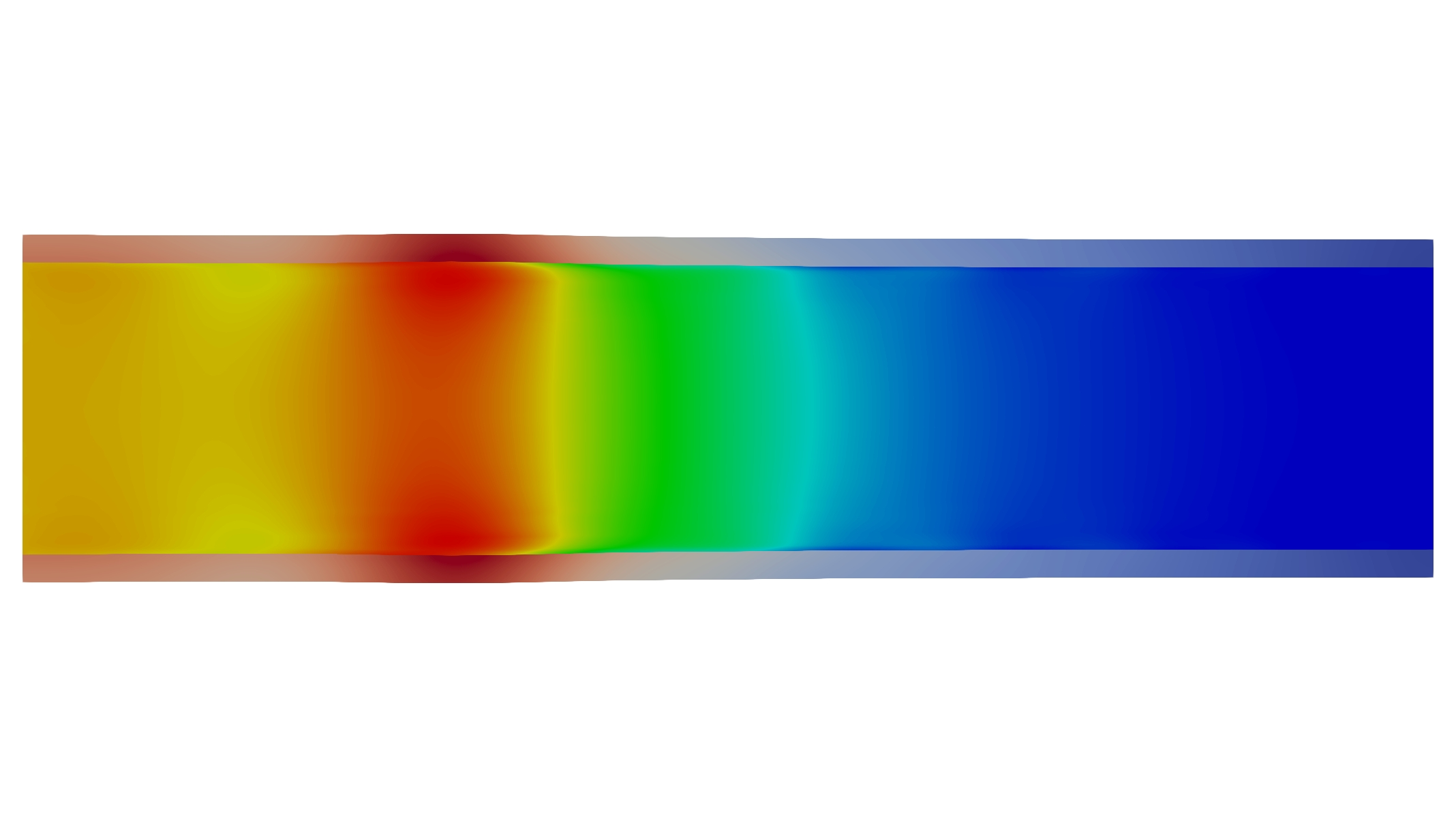} }  \\
$t=6.0$ ms \\
\scalebox{0.18}{\includegraphics[angle=0, trim=0 300 0 300,
clip=true]{./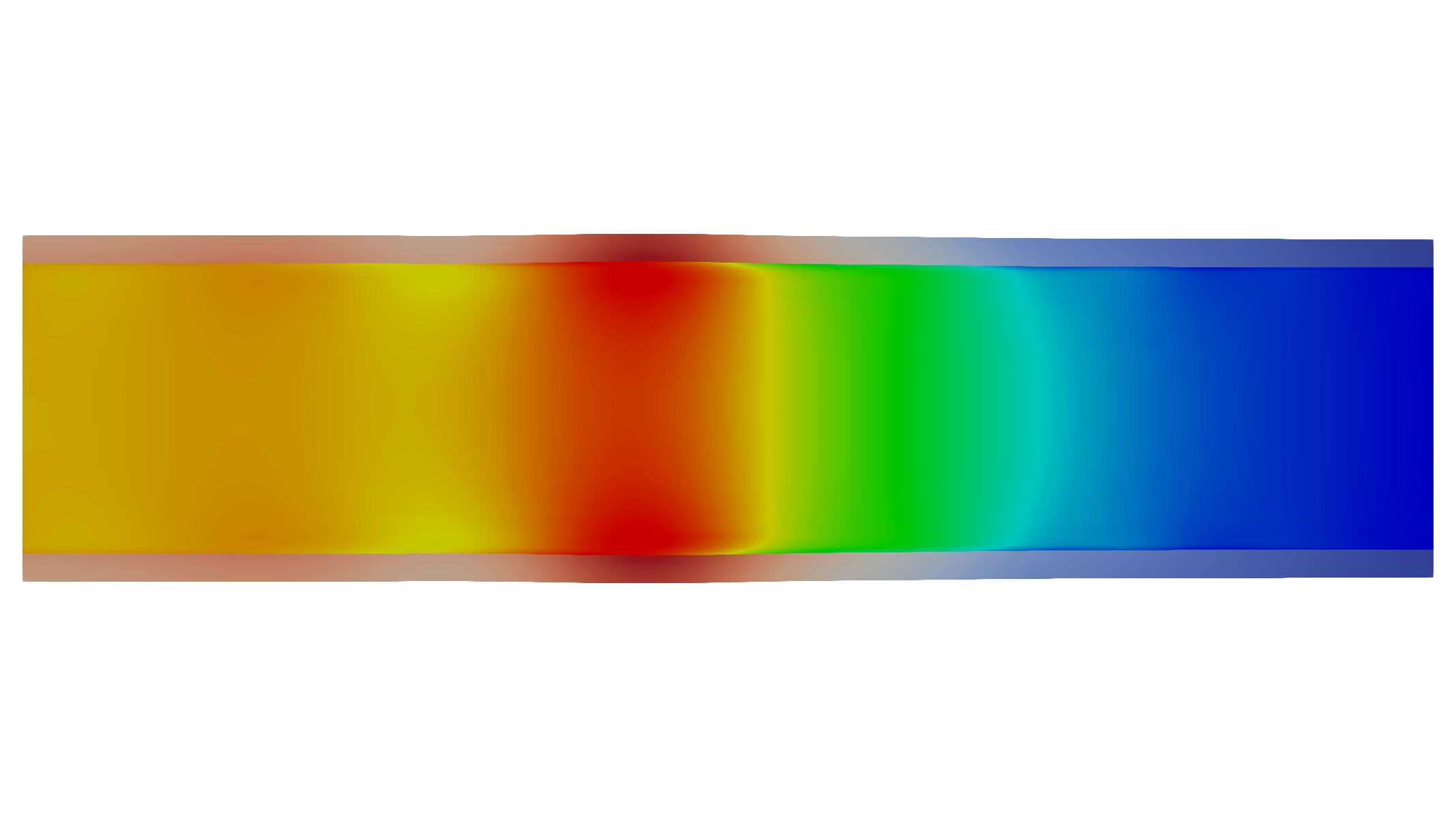} }  \\
$t=8.0$ ms
\end{tabular}
\includegraphics[angle=0, trim=0 900 0 0, clip=true, scale = 0.18]{./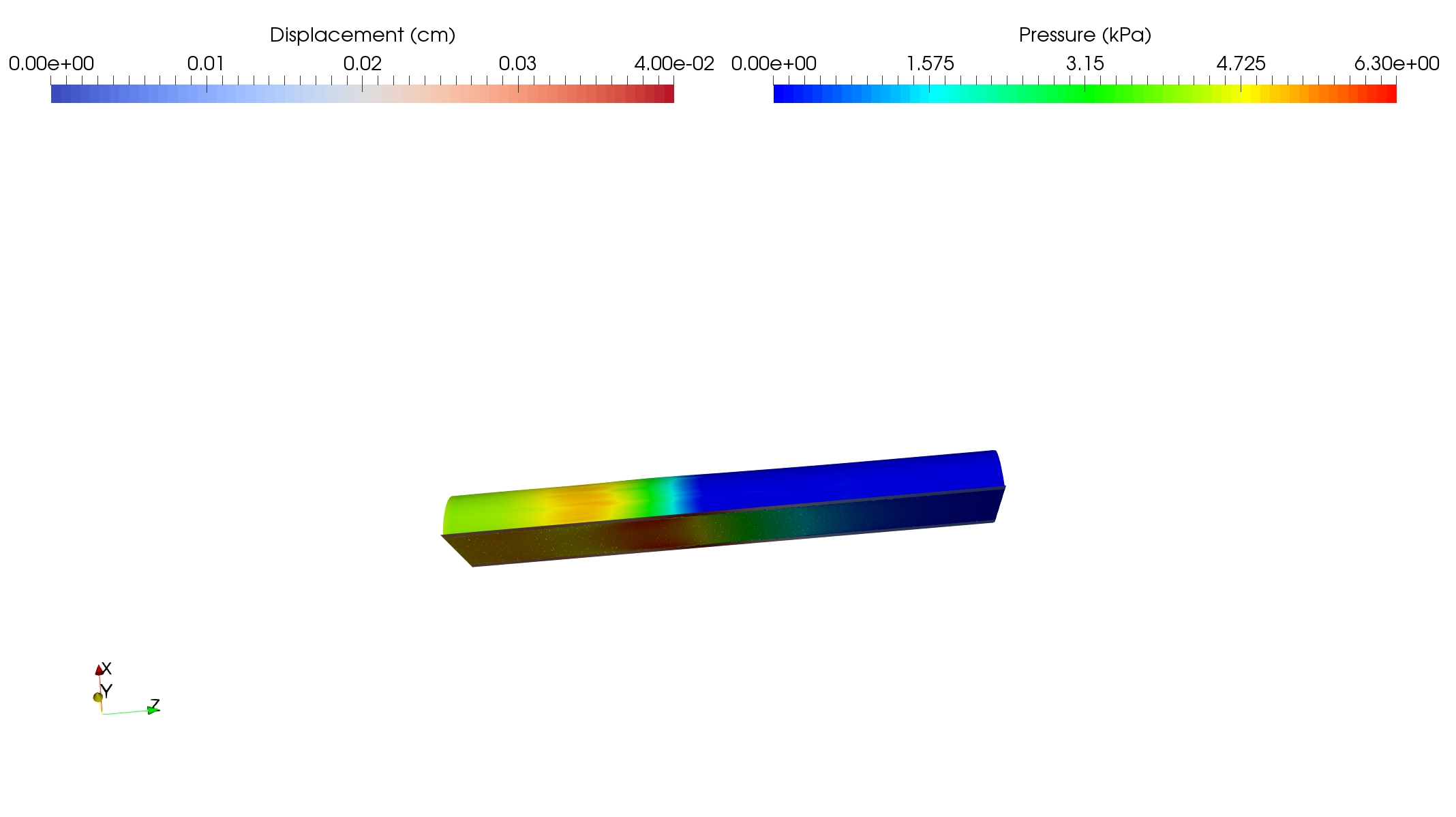}
\caption{The Greenshields-Weller benchmark problem: Contours of the fluid pressure and wall displacement magnitude on a radial slice at time $t=2.0$, $4.0$, $6.0$, and $8.0$ ms. } 
\end{center}
\end{figure}

\begin{figure}
	\begin{center}
	\begin{tabular}{c}
\includegraphics[angle=0, trim=0 80 100 0, clip=true, scale = 0.5]{./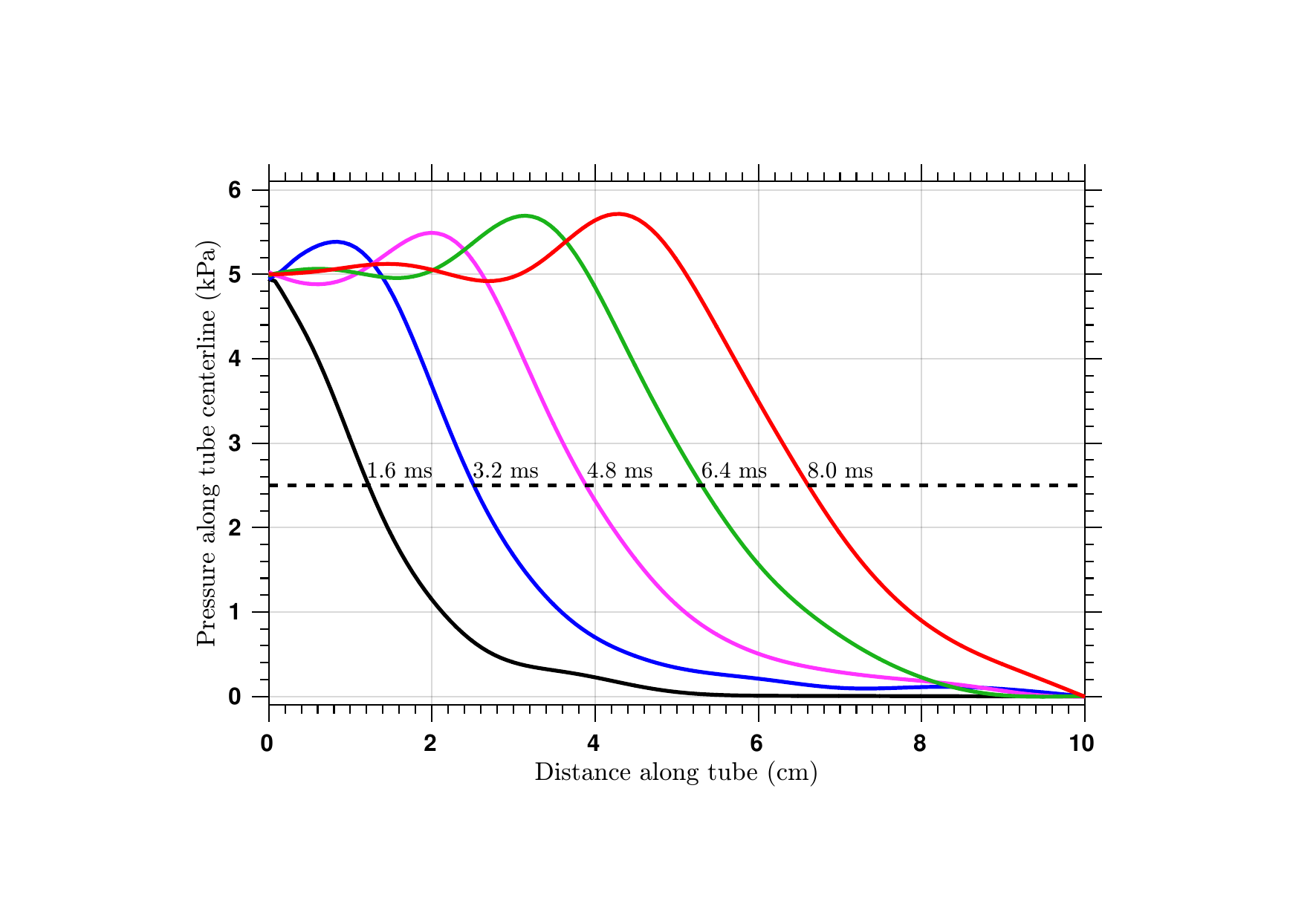} \\
(a) \\
\includegraphics[angle=0, trim=0 80 100 0, clip=true, scale = 0.5]
{./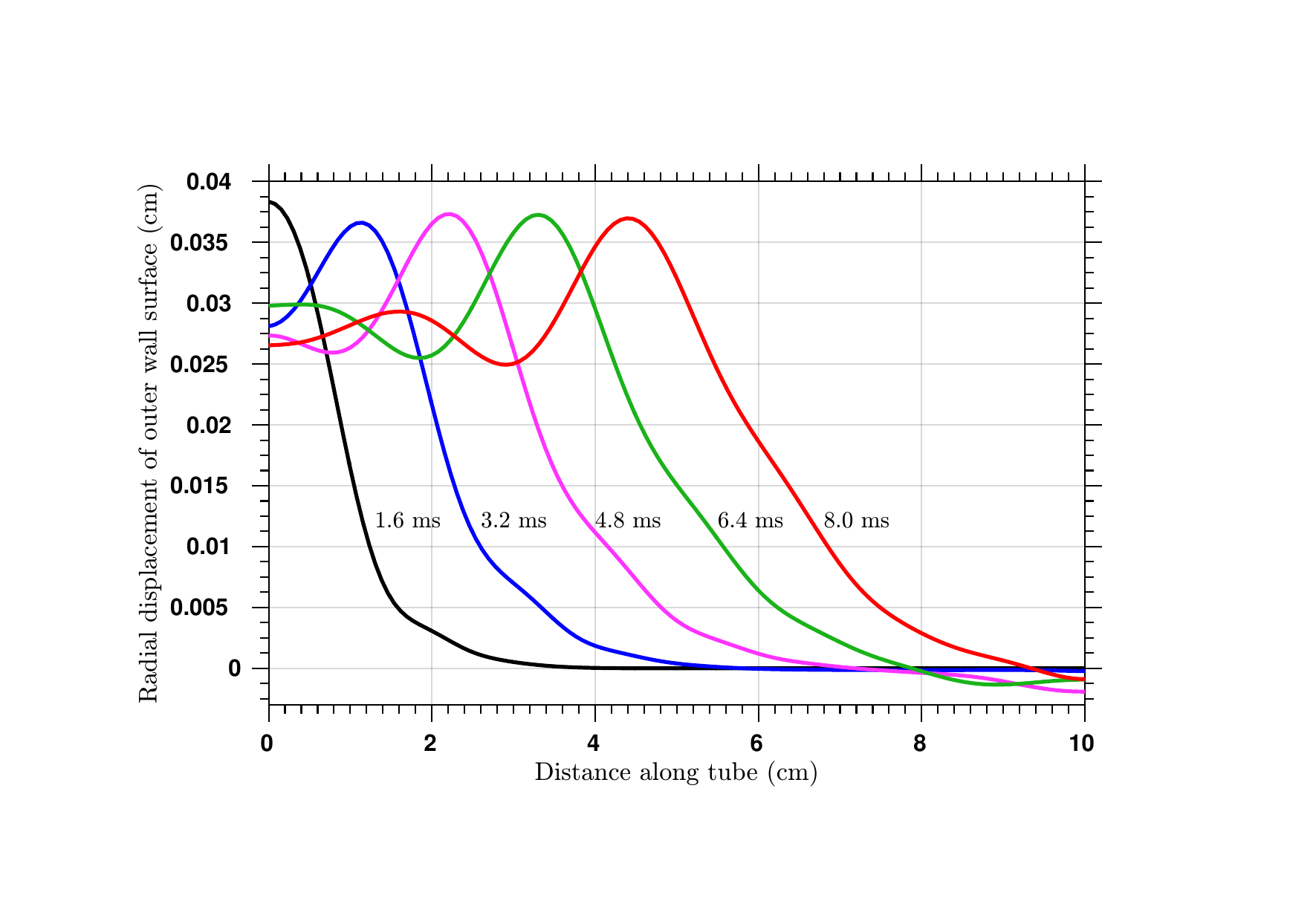} \\
(b)
\end{tabular}
\caption{The Greenshields-Weller benchmark problem: (a) Pressure along the tube centreline; (b) Radial displacement of the outer wall surface.} 
\label{fig:fsi_benchmark_pressure_tube_wall_disp}
\end{center}
\end{figure}

\section{Conclusions}
\label{sec:conclusion}
We present a comprehensive suite of theoretical and numerical methodologies for solid dynamics, fluid dynamics, and FSI problems. We summarize the contributions in detail as follows.

A novel continuum modeling framework is developed. Our derivation differs from traditional approaches in three aspects. (1) We start with a dynamic model for the continuum body. In the thermodynamic configuration space, a quasi-static process is constrained on an equilibrium manifold in a lower dimension, representing a succession of equilibrium states. This is really an idealized scenario, since it ignores several critical physical effects, and hence does not represent real physics \cite[Chapter 4]{Callen1985}. (2) We choose the thermodynamic potential as the Gibbs free energy instead of the Helmholtz free energy. The two potentials both give equivalent descriptions of material behavior for any unconstrained process. However, for constrained processes, one should judiciously choose the potential. In our discussion, we are particularly interested in the isochoric process, and the Helmholtz free energy degenerates in this case. A Legendre transformation is performed to transform the independent variable from the specific volume to the pressure. This leads to a theory based on the Gibbs free energy and consequently a pressure primitive variable formulation. This formulation is known to be well-behaved in both compressible and incompressible regimes. (3) Our pressure equation is derived from the mass balance equation instead of the equation of state. In the traditional two-field variational principle, the equation for the pressure field is often introduced as an algebraic equation of state \cite{Bogert1991,Holzapfel2000,Maniatty2001}. We feel that it is unnecessary and improper to put an algebraic equation in a weak form and discretize it by the finite element method (see Remark \ref{remark:scovazzi_2016_discussion}).

Our numerical formulation is designed based on VMS. We provide a formal derivation of the fine-scale model based on the general continuum model. The VMS framework is applied to construct a numerical formulation for finite hyper-elastodynamics; the generalized-$\alpha$ method is applied for time integration; the segregated algorithm introduced in \cite{Scovazzi2016} is utilized for the solution procedure. The properties of the new numerical formulation are examined using manufactured solutions and the block compression benchmark problem.

The new formulation for hyper-elastodynamics is naturally extended to FSI problems. Due to the general continuum framework we developed, the coupling between viscous fluids and hyper-elastodynamics can be viewed as a uniform continuum problem, where the body is subdivided into subdomains with different material behavior. Over the whole continuum domain, VMS is applied for spatial discretization, and the generalized-$\alpha$ method is applied for time integration. The uniform treatment of FSI problems enjoys several desirable attributes. (1) The formulation is well-behaved in both compressible and incompressible regimes for both fluids and solids. (2) The pressure instability arising from the equal-order interpolation is handled by the fine-scale modeling in both subdomains. (3) The generalized-$\alpha$ method is applied for the first-order FSI system and naturally achieves optimal high-frequency dissipation in both subdomains. (4) The resulting matrix problems in both subdomains take similar structure, which simplifies the data structure management in the code implementation. Two benchmark problems are simulated to examine the effectiveness of the new FSI formulation, and the results are in good agreement with published results.

There are several promising directions for future work. (1) This new computational framework will be extended to handle nonlinear anisotropic viscoelastic materials to account for more realistic tissue behavior. (2) The FSI computational framework is well-suited to patient-specific cardiovascular biomedical modeling with more physically realistic material models and complex geometries. (3) The concept of the immersed finite element method \cite{Casquero2015a,Zhang2004} can be utilized with the new FSI formulation developed in this work, which may open a door to novel approaches for complex FSI problems. In summary, a new computational methodology has been established, and it may provide an effective approach to handle problems that cannot be addressed with the previously existing methods.

\section*{Acknowledgements}
This work is supported by the National Institutes of Health under the award numbers 1R01HL121754 and 1R01HL123689, a Burroughs Welcome Fund Career award at the Scientific Interface, the National Science Foundation (NSF) CAREER award OCI-1150184, and the computational resources from the Extreme Science and Engineering Discovery Environment (XSEDE) supported by the NSF grant ACI-1053575. Additional computational resources were provided by the Stanford Research computing facility. 

We want to thank Prof. Guglielmo Scovazzi at Duke University for many helpful discussions. We thank Prof. Luca Ded\`e at Politecnico di Milano, Prof. Hector Gomez at Purdue University, and Mr. Fei Xu at the Iowa State University for discussions on FSI problems. We also want to thank members of the Cardiovascular Biomechanics Computational Lab at Stanford University for general discussions on biomedical modeling.

\appendix
\section{Fine-scale approximation}
\label{appd:VMS_derivation_perturbation_series}
Physically sound assumptions \cite{Bazilevs2007a,Scovazzi2004} suggest that one may represent $\bsfy^{\prime}$ by a perturbation series:
\begin{align}
\label{eq:appdB_infinite_series}
\bsfy^{\prime} = \sum_{m=1}^{\infty}\varepsilon^m \bsfy^{\prime}_m,
\end{align}
with $\varepsilon = \|\textbf{Res}\left(\bar{\bsfy}\right)\|_{\bsfV^{\prime *}}$. Using this expansion, we may rewrite \eqref{eq:VMS_fine_eqn_taylor} as
\begin{align}
\label{eq:appendix_b_taylor_functional_expansion}
\sum_{k=1}^{n} \frac{1}{k!} D^k_{\bsfy}\bsfB\left(\bsfw^{\prime}, \bar{\bsfy} \right)\left[\sum_{m=1}^{\infty}\varepsilon^m \bsfy^{\prime}_m , \cdots, \sum_{m=1}^{\infty}\varepsilon^m \bsfy^{\prime}_m \right] + o\left( \|\bsfy^{\prime}\|_{\bsfV}^n \right) = \varepsilon \hat{\mathsf R}(\bar{\bsfy})[\bsfw^{\prime}],
\end{align}
wherein
\begin{align*}
\hat{\mathsf R}(\bar{\bsfy}) = \frac{\textbf{Res}\left(\bar{\bsfy}\right)}{\|\textbf{Res}\left(\bar{\bsfy}\right)\|_{\bsfV^{\prime *}} }.
\end{align*}
Here we have to introduce our first assumptions to handle with the residual term $o\left( \|\bsfy^{\prime}\|_{\bsfV}^n \right)$ in the above formulation. We assume that $\|\bsfy^{\prime}\|_{\bsfV} < 1$ such that the residual term $ o\left( \|\bsfy^{\prime}\|_{\bsfV}^n \right) $ is negligible. Grouping the coefficients of the powers of $\varepsilon$ leads to a recurrence formula
\begin{align*}
& D_{\bsfy}\bsfB\left(\bsfw^{\prime} , \bar{\bsfy} \right)[\bsfy^{\prime}_1] = \hat{\mathsf R}(\bar{\bsfy})[\bsfw^{\prime}], \\
& D_{\bsfy}\bsfB\left(\bsfw^{\prime} , \bar{\bsfy} \right)[\bsfy^{\prime}_2] + \frac12 D_{\bsfy}^2\bsfB\left(\bsfw^{\prime} , \bar{\bsfy} \right)[\bsfy^{\prime}_1, \bsfy^{\prime}_1] = 0, \\
& D_{\bsfy}\bsfB\left(\bsfw^{\prime} , \bar{\bsfy} \right)[\bsfy^{\prime}_3] + \frac12 D_{\bsfy}^2\bsfB\left(\bsfw^{\prime} , \bar{\bsfy} \right)[\bsfy^{\prime}_1, \bsfy^{\prime}_2] + \frac12 D_{\bsfy}^2\bsfB\left(\bsfw^{\prime} , \bar{\bsfy} \right)[\bsfy^{\prime}_2, \bsfy^{\prime}_1] + \frac{1}{6} D_{\bsfy}^3\bsfB\left(\bsfw^{\prime} , \bar{\bsfy} \right)[\bsfy^{\prime}_1, \bsfy^{\prime}_1, \bsfy^{\prime}_1] = 0, \\
& \cdots
\end{align*}
The above recurrence formula leads to a sequence of coupled linear variational problems
\begin{align}
\label{eq:appendix_b_perturbation_series_eqn_1}
& D_{\bsfy}\bsfB\left(\bsfw^{\prime} , \bar{\bsfy} \right)[\bsfy^{\prime}_1] = \hat{\mathsf R}(\bar{\bsfy})[\bsfw^{\prime}], \displaybreak[2] \\
& D_{\bsfy}\bsfB\left(\bsfw^{\prime} , \bar{\bsfy} \right)[\bsfy^{\prime}_2] = - \frac12 D_{\bsfy}^2\bsfB\left(\bsfw^{\prime} , \bar{\bsfy} \right)[\bsfy^{\prime}_1, \bsfy^{\prime}_1], \displaybreak[2] \\
\label{eq:appendix_b_perturbation_series_eqn_k}
& D_{\bsfy}\bsfB\left(\bsfw^{\prime} , \bar{\bsfy} \right)[\bsfy^{\prime}_k] = - \sum_{l=2}^{k} \left\lbrace \sum_{\sum_{s}j_s = l}\frac{1}{l!} D_{\bsfy}^l\bsfB\left(\bsfw^{\prime},\bar{\bsfy} \right)[\underbrace{\bsfy^{\prime}_{j_{1}}, \cdots , \bsfy^{\prime}_{j_{l}}}_\text{l copies}] \right\rbrace, \quad \mbox{ for } 3 \leq k < n.
\end{align}
Solving the above system of equations gives the series expansion of $\bsfy^{\prime}$. In the above equations, the left-hand side is the same bilinear operator $D_{\bsfy}\bsfB\left(\bsfw^{\prime} , \bar{\bsfy} \right)[\cdot]$. We assume that there exists a fine-scale Green's operator $\bsfG^{\prime} : \bsfV^{\prime *} \rightarrow \bsfV^{\prime}$ such that
\begin{align*}
D_{\bsfy}\bsfB\left(\bsfw^{\prime}, \bar{\bsfy} \right)\left[ \bsfG^{\prime}\left( \mathfrak G \right) \right] = \mathfrak G [\bsfw^{\prime}] \qquad \mbox{ for }  \forall \mathfrak G \in \bsfV^{\prime *}.
\end{align*}  
With $\bsfG^{\prime}$ defined, we may formally solve the equations \eqref{eq:appendix_b_perturbation_series_eqn_1}-\eqref{eq:appendix_b_perturbation_series_eqn_k} as
\begin{align*}
\bsfy^{\prime}_m = \bsfG^{\prime}\left( \mathfrak G_m \right) \quad \mbox{ for } 1 \leq m < n,
\end{align*}
wherein
\begin{align*}
& \mathfrak G_1 := \hat{\mathsf R}(\bar{\bsfy})[\bsfw^{\prime}], \displaybreak[2] \\
& \mathfrak G_2 := - \frac12 D_{\bsfy}^2\bsfB\left(\bsfw^{\prime} , \bar{\bsfy} \right)[\bsfy^{\prime}_1, \bsfy^{\prime}_1], \displaybreak[2] \\
& \mathfrak G_k := - \sum_{l=2}^{k} \left\lbrace \sum_{\sum_{s}j_s = l}\frac{1}{l!} D_{\bsfy}^l\bsfB\left(\bsfw^{\prime},\bar{\bsfy} \right)[\underbrace{\bsfy^{\prime}_{j_{1}}, \cdots , \bsfy^{\prime}_{j_{l}}}_\text{l copies}] \right\rbrace, \quad \mbox{ for } 3 \leq k < n.
\end{align*}
Although $D_{\bsfy}\bsfB\left(\bsfw^{\prime} , \bar{\bsfy} \right)[\cdot]$ is a linear operator, it still a challenging task to derive an analytic form of the Green's operator.  Therefore, one more approximation needs to be introduced. The most simple but effective approach is to approximate $\bsfG$ by
\begin{align}
\label{eq:appendix_b_algebraic_G}
\bsfG^{\prime}(\mathfrak G) \approx \tilde{\bsfG}^{\prime}(\mathfrak G) = -\bm \tau \mathfrak G,
\end{align} 
wherein $\bm \tau \in \mathbb R^{7\times 7}$ \cite{Bazilevs2007a}. With this formula, we can solve the equations \eqref{eq:appendix_b_perturbation_series_eqn_1}-\eqref{eq:appendix_b_perturbation_series_eqn_k} and obtain the perturbation series $\bsfy^{\prime}_m = \bsfG^{\prime}\left( \mathfrak G_m \right) \approx \tilde{\bsfG}^{\prime}(\mathfrak G_m) = -\bm \tau \mathfrak G_m$. Therefore, we have $\bsfy^{\prime} = \sum_{m=1}^{\infty}\varepsilon^m \bsfy^{\prime}_m \approx -\bm\tau \sum_{m=1}^{\infty}\varepsilon^m \mathfrak G_m$. The infinite series $\bsfy^{\prime}$ needs to be truncated to numerically calculate $\bsfy^{\prime}$. It was shown that $\bsfy^{\prime}_1$ carries most of the subgrid energy \cite{Scovazzi2004}. This suggests that one may truncate the perturbation series as $\bsfy^{\prime} \approx \varepsilon \bsfy^{\prime}_1$. Then the approximation of the fine-scale component can be explicitly written as
\begin{align*}
\bsfy^{\prime} = \mathscr F^{\prime}\left(\bar{\bsfy}, \textbf{Res}\left(\bar{\bsfy} \right) \right)  \approx \tilde{\mathscr F}^{\prime}\left(\bar{\bsfy}, \textbf{Res}\left(\bar{\bsfy} \right) \right) := \varepsilon \bsfy^{\prime}_1 =  -\bm \tau \textbf{Res}\left(\bar{\bsfy} \right).
\end{align*}
In the above, the fine-scale component $\bsfy^{\prime}$ is systematically approximated. This approximation formula is utilized to construct the VMS formulations in this work.

\section{Proof of Proposition \ref{prop:Rk_2_zero}}
\label{appd:proof_prop_Rk_2_zero}
\begin{proof}
We first show that $\bar{\boldsymbol{\mathrm R}}_{k,(2)} = \bm 0$. Given a pair of predictors $\bm U_{n+1,(0)}$ and $\bm V_{n+1,(0)}$, we can write $\bar{\boldsymbol{\mathrm R}}_{k,(1)}$ explicitly as
\begin{align*}
\bar{\boldsymbol{\mathrm R}}_{k,(1)} = \frac{\alpha_m}{\gamma \Delta t_n} \left( \bm U_{n+1,(0)} - \bm U_n \right) + \left( 1 - \frac{\alpha_m}{\gamma} \right) \dot{\bm U}_n - \alpha_f \bm V_{n+1,(0)} - (1-\alpha_f) \bm V_n.
\end{align*}
With $\Delta \dot{\bm V}_{n+1,(1)}$, one can obtain $\Delta \bm V_{n+1,(1)}$ by
\begin{align*}
\Delta \bm V_{n+1,(1)} = \gamma \Delta t_n \Delta \dot{\bm V}_{n+1,(1)},
\end{align*}
due to \eqref{eq:gen_alpha_def_Y_n_plus_1}. Using \eqref{eq:newton_linear_system_smaller_eqn_2} and the above relation, one has
\begin{align}
\label{eq:appendixA_dudot_np1_1}
\Delta \dot{\bm U}_{n+1,(1)} = \frac{\alpha_f}{\alpha_m} \Delta \bm V_{n+1,(1)} - \frac{1}{\alpha_m}\bar{\boldsymbol{\mathrm R}}_{k,(1)}.
\end{align}
Then one has
\begin{align*}
\bar{\boldsymbol{\mathrm R}}_{k,(2)} &= \frac{\alpha_m}{\gamma \Delta t_n} \left( \bm U_{n+1,(1)} - \bm U_n \right) + \left( 1 - \frac{\alpha_m}{\gamma} \right) \dot{\bm U}_n - \alpha_f \bm V_{n+1,(1)} - (1-\alpha_f) \bm V_n \displaybreak[3]\\
&= \frac{\alpha_m}{\gamma \Delta t_n} \left( \bm U_{n+1,(0)} - \bm U_n \right) + \frac{\alpha_m}{\gamma \Delta t_n} \Delta \bm U_{n+1,(1)} + \left( 1 - \frac{\alpha_m}{\gamma} \right) \dot{\bm U}_n  - \alpha_f \bm V_{n+1,(1)} - (1-\alpha_f) \bm V_n \displaybreak[3] \\
&= \frac{\alpha_m}{\gamma \Delta t_n} \left( \bm U_{n+1,(0)} - \bm U_n \right) + \alpha_m \Delta \dot{\bm U}_{n+1,(1)} + \left( 1 - \frac{\alpha_m}{\gamma} \right) \dot{\bm U}_n - \alpha_f \bm V_{n+1,(1)} - (1-\alpha_f) \bm V_n \displaybreak[3] \\
&= \frac{\alpha_m}{\gamma \Delta t_n} \left( \bm U_{n+1,(0)} - \bm U_n \right) + \alpha_f \Delta \bm V_{n+1,(1)} - \bar{\boldsymbol{\mathrm R}}_{k,(1)} + \left( 1 - \frac{\alpha_m}{\gamma} \right) \dot{\bm U}_n  - \alpha_f \bm V_{n+1,(1)} - (1-\alpha_f) \bm V_n \displaybreak[3] \\
&= \alpha_f \Delta \bm V_{n+1,(1)} + \alpha_f \bm V_{n+1,(0)} - \alpha_f \bm V_{n+1,(1)} \displaybreak[3] \\
&= \bm 0.
\end{align*}
Now we just need to show that $\bar{\boldsymbol{\mathrm R}}_{k,(i+1)} = \bm 0$ if $\bar{\boldsymbol{\mathrm R}}_{k,(i)} = \bm 0$. Using $\bar{\boldsymbol{\mathrm R}}_{k,(i)} = \bm 0$, the update formula \eqref{eq:newton_linear_system_smaller_eqn_2} can be simplified as
\begin{align}
\label{eq:appendixA_simple_Ri}
\Delta \dot{\bm U}_{n+1,(i)} = \frac{\alpha_f \gamma \Delta t_n}{\alpha_m} \Delta \dot{\bm V}_{n+1,(i)} - \frac{1}{\alpha_m}\bar{\boldsymbol{\mathrm R}}_{k,(i)} = \frac{\alpha_f \gamma \Delta t_n}{\alpha_m} \Delta \dot{\bm V}_{n+1,(i)} = \frac{\alpha_f}{\alpha_m} \Delta \bm V_{n+1,(i)}.
\end{align}
Now we can expand $\bar{\boldsymbol{\mathrm R}}_{k,(i+1)}$ as
\begin{align*}
\bar{\boldsymbol{\mathrm R}}_{k,(i+1)} &= \frac{\alpha_m}{\gamma \Delta t_n} \left( \bm U_{n+1,(i)} - \bm U_n \right) + \left( 1 - \frac{\alpha_m}{\gamma} \right) \dot{\bm U}_n - \alpha_f \bm V_{n+1,(i)} - (1-\alpha_f) \bm V_n \\
&= \frac{\alpha_m}{\gamma \Delta t_n} \left( \bm U_{n+1,(i-1)} - \bm U_n \right) + \left( 1 - \frac{\alpha_m}{\gamma} \right) \dot{\bm U}_n - \alpha_f \bm V_{n+1,(i-1)} - (1-\alpha_f) \bm V_n \\
& \hspace{5mm} + \frac{\alpha_m}{\gamma \Delta t_n} \Delta \bm U_{n+1,(i)} - \alpha_f \Delta \bm V_{n+1,(i)} \\
&= \bar{\boldsymbol{\mathrm R}}_{k,(i)} + \frac{\alpha_m}{\gamma \Delta t_n} \Delta \bm U_{n+1,(i)} - \alpha_f \Delta \bm V_{n+1,(i)} \\
&= \alpha_m \Delta \dot{\bm U}_{n+1,(i)} - \alpha_f \Delta \bm V_{n+1,(i)} \\
&= \bm 0.
\end{align*}
By mathematical induction, we have $\bar{\boldsymbol{\mathrm R}}_{k,(i)} = \bm 0$ for $i \geq 2$.
\end{proof}
\begin{remark}
In the above proof, we do not require $\Delta \dot{\bm V}_{n+1,(1)}$ to be a solution of the linear system \eqref{eq:newton_linear_system_smaller_eqn_1}. Hence, in practice, one can set $\bar{\boldsymbol{\mathrm R}}_{k,(i)} = \bm 0$ in \eqref{eq:newton_linear_system_smaller_eqn_1} for all $i \geq 1$. This leads to inconsistent updates of $\Delta \dot{P}_{n+1,(1)}$ and $\Delta \dot{\bm V}_{n+1,(1)}$ only in the first Newton-Raphson iteration, and significantly simplify the implementation. In our numerical experience, this choice does not deteriorate the convergence of the predictor multi-corrector algorithm.
\end{remark}

\bibliographystyle{plain}
\bibliography{FSIbib}

\end{document}